\newcommand*{\addFileDependency}[1]{
  \typeout{(#1)}
  \@addtofilelist{#1}
  \IfFileExists{#1}{}{\typeout{No file #1.}}
}
\newcommand*{\myexternaldocument}[1]{%
    \externaldocument{#1}%
    \addFileDependency{#1.tex}%
    \addFileDependency{#1.aux}%
}
\DeclareMathOperator*{\argmax}{arg\,max}
\DeclareMathOperator*{\argmin}{arg\,min}
\DeclareMathOperator*{\logit}{logit}
\DeclareMathOperator*{\essinf}{ess\,inf}
\newtheorem{theorem}{Theorem}[section]
\newtheorem{supptheorem}{Supplemental Theorem}[section]
\newtheorem{lemma}[theorem]{Lemma}
\theoremstyle{remark}
\newtheorem{definition}[theorem]{Definition}
\begin{document}

\renewcommand{\P}{\mbox{$\mathbb{P}$}}
\newcommand{\E}{\mbox{$\mathbb{E}$}}
\newcommand{\R}{\mbox{$\mathbb{R}$}}
\newcommand{\KL}{\mbox{$\mathcal{D}$}}
\newcommand{\Cov}{\mathrm{Cov}}
\newcommand{\V}{\mathrm{Var}}
\newcommand{\vect}[1]{\boldsymbol{#1}}
\newcommand{\norm}[1]{||{#1}||}
\newcommand{\bignorm}[1]{\bigg|\bigg|{#1}\bigg|\bigg|}
\newcommand{\indep}{\perp \!\!\! \perp}

\newcommand{\al}[1]{\begin{align*}#1\end{align*}}
\newcommand{\gary}[1]{{\bf\textcolor{red}{[Gary: #1]}}}
\newcommand{\yc}[1]{{\bf\textcolor{red}{[Yen-Chi: #1]}}}

\newcommand{\optsupII}[4]{\ensuremath{
\begin{equation*}
\begin{aligned}
    & \underset{#1}{\text{sup}}
    & & #2 \\
    & \text{subject to}
    & & #3 \\
    &&& #4
\end{aligned}
\end{equation*}
}}

\usetikzlibrary{arrows, automata}
\let\hat\widehat
\let\tilde\widetilde

\begin{frontmatter}
\title{Data Harmonization Via Regularized Nonparametric Mixing Distribution Estimation}
\runtitle{Data Harmonization}

\begin{aug}
\author[A]{\fnms{Steven} \snm{Wilkins-Reeves}\ead[label=e1]{stevewr@uw.edu}},
\author[A]{\fnms{Yen-Chi} \snm{Chen}\ead[label=e2,mark]{yenchic@uw.edu}}
\and
\author[B]{\fnms{Kwun Chuen Gary} \snm{Chan}\ead[label=e3,mark]{kcgchan@uw.edu}}
\address[A]{Department of Statistics,
University of Washington,
\printead{e1,e2}}

\address[B]{Department of Biotatistics,
University of Washington,
\printead{e3}}
\end{aug}

\begin{abstract}

Data harmonization is the process by which an equivalence is developed between two variables measuring a common trait.  Our problem is motivated by dementia research in which multiple tests are used in practice to measure the same underlying cognitive ability such as language or memory.  We connect this statistical problem to mixing distribution estimation. 
We introduce and study a non-parametric latent trait model, develop a method which enforces uniqueness of the regularized maximum likelihood estimator, show how a nonparametric EM algorithm will converge weakly to its maximizer, and additionally propose a faster algorithm for learning a discretized approximation of the latent distribution.  Furthermore, we develop methods to assess goodness of fit for the mixing likelihood which is an area neglected in most mixing distribution estimation problems.  We apply our method to the National Alzheimer’s Coordination Center Uniform Data Set and show that we can use our method to convert between score measurements and account for the measurement error. We show that this method outperforms standard techniques commonly used in dementia research.

\end{abstract}

\begin{keyword}
\kwd{Alzheimer's Disease}
\kwd{Nonparametric Expectation-Maximization Algorithm}
\kwd{Latent Trait Model}
\kwd{Measurement Error Model}
\end{keyword}

\end{frontmatter}

\tableofcontents
\section{Introduction}
Data harmonization is the process by which measurements from different sources and by different methods are combined into a single dataset for further analysis.  It often involves converting scores between two different measurements of the same trait, which is the main focus of this paper. For example, in neuropsychological testing for Alzheimer's Disease, collections of tests known as batteries are used to measure a variety of cognitive traits such as attention, language, episodic memory, visual-spatial ability etc.  Different studies, or even in the same study, may use different validated instruments to measure each trait.  As an increasing number of data sets are becoming available for public use, data harmonization efforts are common in order to create harmonized variables in the combined data set.

This study is motivated by the different cognitive measurements in the Neuropsychological Test Battery of the National Alzheimer's Coordinating Center (NACC) Uniform Data Set (UDS). The NACC UDS has the largest number of participants for studying the progression to mild cognitive impairment and dementia in the United States \citep{Weintraub2009TheBattery, Besser2018VersionSet}.  The batteries used for neuropsychological assessment of an individual change over time.  In 2015, the third version of the Uniform Data Set developed a non-proprietary neuropsychological battery which may be used to standardise results across individuals going forward, while prior to that time, a different set of proprietary neuropsychological battery was given.  Additionally for comparison across tests, a small group of cognitively normal individuals received both the old score battery and new battery in a crosswalk study \citep{Monsell2016ResultsStudy} which will serve as a validation sample of various harmonization methods.

In general, data harmonization requires two main steps. The first of which is to determine whether two variables $Y \in \{0,1,2,\dots, N_Y\}$ and $Z \in \{0,1,2,\dots, N_Z\}$ measure the same trait.  That is, whether their distributions can be commonly represented by a single latent variable $\gamma$.  This may either require expert knowledge or studies of individuals completing both tests.  In our case, the change in battery and their correspondence in measuring cognitive domains were guided by the Clinical Task Force (CTF), a group formed by the National Institutes of Aging to develop standardized methods for collecting longitudinal data that would encourage and support collaboration across the Alzheimer's Disease Research Centers. 
The second step is to propose a joint model of the scores, frequently using a latent variable representation \citep{Griffith2013HarmonizationMeta-Analysis, vandenHeuvel2020LatentMemory}.

These methods, however, tend to mostly rely on parametric assumptions \citep{Griffith2013HarmonizationMeta-Analysis}.   Additionally, rather than focusing on inference associated how the difficiculty of tests vary with covariates, we are interested in predicting the outcome of a test $Z$ given an individual's score on test $Y$.  

In this paper, we consider a non-parametric approach to latent variable modelling for the purpose of data harmonization. We connect this approach to mixing distribution estimation, for which there is a rich history.   One of the earliest versions of the problem arose from educational testing as estimating a population of parameters for individuals taking a test where scores for each individual are assumed follow a binomial distribution with parameters $N, p_i$ \citep{Lord1965AApplications, Lord1969EstimatingProblem}, where $p_i$ follows a certain distribution. The binomial case is a particularly well studied application of mixing distribution estimation \citep{Lindsay1983TheFamily, Wood1999BinomialDistribution, Tian2017LearningParameters, Vinayak2019MaximumParameters}. Estimation of a mixing distribution in general has numerous other applications such as in positron emission tomography \citep{Vardi1985ATomography, Silverman1990ATomographyb}, portfolio optimization \citep{Cover1984AnReturn} and ecology \citep{Bell2000EnvironmentalSedges}. In cases where $Y$ is continuous, de-convolution kernels are a common approach \citep{Leonard1990DeconvolutingEstimators, Basulto-Elias2021BivariateData} as this becomes a more standard measurement error problem. Recently there has also been an operator theoretic study of mixing distribution estimation \citep{VanDermeulen2019AnModels}.

The nonparametric EM algorithm \citep{Laird1978NonparametricDistribution} is a classical approach to compute the nonparametric maximum likelihood estimator (NPMLE) \citep{Kiefer1956ConsistencyParameters}.  An assumption was made by Laird initially that the maximum likelihood estimator should be a discrete distribution.  This problem was formalized by \cite{Lindsay1983TheTheory} who proved that under some very mild conditions there always exists a discrete maximum likelihood estimator of the mixing distribution, even if the true mixing distribution is continuous. This phenomena is partly reconciled by the non-identifiability of the maximum likelihood estimate for the mixing distribution.  However, when the goal is data harmonization, a discrete estimated mixture distribution may be undesirable, since in data harmonization, it is often assumed that a continuous latent map exists between latent variables and a discrete latent distribution would violate this assumption \citep{vandenHeuvel2020LatentMemory}. We will address this problem by a regularization approach.

The remainder of the paper will outline our statistical approach to this data harmonization problem.  First we introduce a graphical representation for our problem. We focus on non-parametric estimation of the mixing distributions, as well as introduce a regularized estimator.  We further focus on the aspects of picking the conditional distribution (measurement error) assumption required for modelling, as well as testing the feasibility of the assumption.

\section{Methods}
Consider a set of \emph{i.i.d.} random variables $(X_i, Y_i, Z_i, A_i)$, where $X_i \in \mathbb{R}^d$ is a set of covariates and $Y_i \in \{0, \dots, N_Y\}$ is the old score and $Z_i \in \{0, \dots, N_Z\}$ is the new scores.  We drop the subscript on $N$ if it is clear from the context. We focus on discrete test scores as the neuropsychological tests considered in the application are all discrete. Lastly, $A_i \in \{0,1\}$ is an indicator determining which score is observed.  In this setting only one score is observed for each individual, denote $A_i = 0$ if score $Y_i$ is observed and  $A_i = 1$ if score $Z_i$ is observed.  
Our motivation comes from heterogeneous measurements of a common trait. In our application, we think of this as two test scores measuring the same cognitive area, such as language or memory.  We next introduce our fundamental assumption of data harmonization. 

\begin{definition}[Harmonizable Variables]
    Two discrete variables $Y$ and $Z$ are \textbf{harmonizable} if (a) they are independent of all other variables, conditional on a continuous latent variables $\gamma, \zeta \in [0,1]$ respectively, and (b) conditioned on covariates $X$, there is an invertible and increasing mapping between the two latent variables $\phi_{\gamma \to \zeta}$ such that for all $X$
    \begin{equation}
        \zeta = \phi_{\gamma \to \zeta}\left( \gamma|X\right).
    \end{equation}
\end{definition}

\begin{theorem} \label{thm:cdf}
    If $Y$ and $Z$ are harmonizable variables with latent variables $\gamma, \zeta$, 
    the mappings $\phi_{\gamma \to \zeta}(\cdot|x)$ and $\phi_{ \zeta \to \gamma}(\cdot|x)$ uniquely exist and can be defined below: 
    \begin{align}
        \phi_{\gamma \to \zeta}(q|x) &= G^{-1}\left(  F(q|x) |x\right) \\
        \phi_{\zeta \to \gamma}(q|x) &=
        F^{-1}\left(  G(q|x) |x\right) =
        \phi^{-1}_{\gamma \to \zeta}(q|x)
    \end{align}
    where $F(\cdot|x)$ and $G(\cdot|x)$ are the conditional CDF's of $\gamma$ and $\zeta$ respectively
\end{theorem}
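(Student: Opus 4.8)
The plan is to prove this via the classical quantile-matching (probability integral transform) argument, treating the covariate value $x$ as fixed throughout and working conditionally on $X = x$. The definition of harmonizability already supplies at least one increasing, invertible map $\phi_{\gamma \to \zeta}(\cdot|x)$ with $\zeta = \phi_{\gamma \to \zeta}(\gamma|x)$; the content of the theorem is therefore to (i) pin down its explicit form in terms of the conditional CDFs $F$ and $G$, and (ii) show this form is forced, yielding uniqueness. The key device is that a monotone relation between $\gamma$ and $\zeta$ transfers directly into a relation between their CDFs.

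First I would fix $x$ and exploit monotonicity to convert the distributional identity into a CDF identity. Since $\phi_{\gamma \to \zeta}(\cdot|x)$ is strictly increasing and invertible, for any threshold $t$ the event $\{\zeta \le t\}$ coincides with $\{\gamma \le \phi_{\gamma \to \zeta}^{-1}(t|x)\}$, so that
\begin{equation*}
G(t|x) = \P(\zeta \le t \mid X = x) = \P\big(\gamma \le \phi_{\gamma \to \zeta}^{-1}(t|x) \mid X = x\big) = F\big(\phi_{\gamma \to \zeta}^{-1}(t|x) \,\big|\, x\big).
\end{equation*}
Substituting $t = \phi_{\gamma \to \zeta}(q|x)$ gives $G(\phi_{\gamma \to \zeta}(q|x)|x) = F(q|x)$, and applying $G^{-1}(\cdot|x)$ to both sides yields exactly $\phi_{\gamma \to \zeta}(q|x) = G^{-1}(F(q|x)|x)$. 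The formula for $\phi_{\zeta \to \gamma}$ then follows by the symmetric argument with the roles of $\gamma$ and $\zeta$ interchanged, and solving $t = G^{-1}(F(q|x)|x)$ for $q$ confirms $F^{-1}(G(\cdot|x)|x) = \phi_{\gamma \to \zeta}^{-1}(\cdot|x)$, establishing the claimed inverse relationship.

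Uniqueness would then come essentially for free from the derivation: every increasing invertible map satisfying $\zeta = \phi(\gamma|x)$ must obey the displayed CDF identity, and since the right-hand side $G^{-1}(F(\cdot|x)|x)$ is a single, fully determined function, no two distinct such maps can exist. Equivalently, this is the statement that monotone transport between two continuous distributions is unique.

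The main obstacle I anticipate is purely a matter of regularity: the formulas require the conditional inverses $F^{-1}(\cdot|x)$ and $G^{-1}(\cdot|x)$ to be genuine two-sided inverses, which in turn needs $F(\cdot|x)$ and $G(\cdot|x)$ to be continuous and strictly increasing. The hypothesis that $\gamma$ and $\zeta$ are continuous latent variables rules out atoms, and hence jumps in the CDFs, but I would still need to dispose of flat stretches, where the plain inverse must be replaced by the generalized quantile function and uniqueness can only hold up to the support. I would handle this by restricting attention to the support of each latent variable, on which the CDFs are strictly increasing, and noting that the maps are only identifiable there in any case.
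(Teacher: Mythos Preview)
Your proposal is correct and follows essentially the same probability-integral-transform argument as the paper: fix $x$, use monotonicity of $\phi_{\gamma\to\zeta}$ to write $G(q|x)=F(\phi_{\gamma\to\zeta}^{-1}(q|x)|x)$, and invert. If anything, your treatment of uniqueness and of the regularity needed for $F^{-1},G^{-1}$ to be genuine inverses is more careful than the paper's, which simply assumes continuous conditional CDFs and reads off the formulas.
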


\begin{proof}
    We assume $\gamma$ and $\zeta$ must have continuous conditional CDF's then: 
    \al{
         G(q|x) &= P(\zeta \leq q|x) \\
         &= P(\phi_{\gamma \to \zeta}(\gamma|x) \leq q|x) \\
         &= P( \gamma \leq \phi^{-1}_{\gamma \to \zeta}(q|x)|x) \\
         &= F(\phi^{-1}_{\gamma \to \zeta}(q|x)|x) \\
         \implies \phi^{-1}_{\gamma \to \zeta}(q|x) &= F^{-1}( G(q|x)|x) 
         \text{ and } \phi_{\gamma \to \zeta}(q|x) = G^{-1}( F(q|x)|x)
    }
\end{proof}

If two random variables are harmonizable, they must be measuring the same trait.  This assumption is similar to that of rank preserving models though we do not assume an constant additive effect \citep{Robins1991CorrectingModels, White1997ImpactTrial, White1999Randomization-basedTrial}. 
A simple generative model for a pair of harmonizable variables 
is the following hierarchical model:
\begin{equation}
\begin{aligned}
    \Omega_i &\sim U(0,1),  &X_i& \sim P_X \\
    \gamma_i &= F^{-1}(\Omega_i|X_i), 
    &\zeta_i& = G^{-1}(\Omega_i|X_i) \\
    Y_i|\gamma_i &\sim P_{A_Y}(\cdot|\gamma_i), &Z_i&|\zeta_i \sim P_{A_Z}(\cdot|\zeta_i)\\
    A&\indep (\Omega, Y, Z)|X
\end{aligned}
\label{eq:generative}
\end{equation}
where $U(0,1)$ represents a uniform random variable, $P_X$ represents the distribution of the covariates, $F^{-1}(\cdot|x),G^{-1}(\cdot|x)$ represent the inverse of the conditional cdfs of $\gamma$ and $\zeta$ respectively and $P_{A_{Y}}$ (or $P_{A_{Z}}$) represents the corresponding distribution of the observed score, given the latent trait (i.e. the measurement error). 
We summarize this hierarchical model using Figure~\ref{fig:gm}.  This is not a conventional graphical model, but we use the graph to summarize the generative assumption of our model. The square variables denote that these variables are deterministic functions of their parent variables.

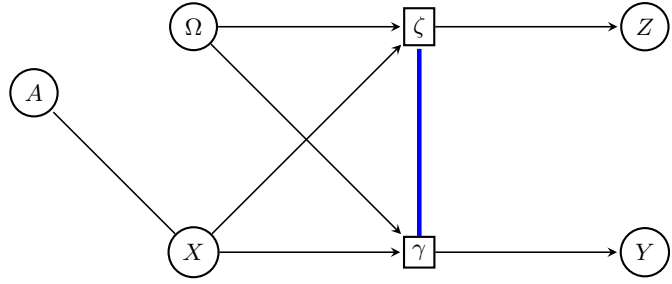
\begin{figure}[tb!]
  \begin{center}
        \begin{tikzpicture}[
                > = stealth, 
                shorten > = 1pt, 
                auto,
                node distance = 3cm, 
                scale = 1,
                transform shape,
                semithick 
            ]
        
            \tikzstyle{every state}=[
                draw = black,
                thick,
                fill = white,
                minimum size = 4mm
            ]
        
            \node[state] (x) {$X$};
            \node[state] (o) [above of= x] {$\Omega$};
            \node[state] (a) [above left of= x] {$A$};
            \node[state] (y) [rectangle,right of=x] {$\gamma$};
            \node[state] (z) [rectangle, right of=o] {$\zeta$};
            \node[state] (y2) [right of=y] {$Y$};
            \node[state] (z2) [right of=z] {$Z$}; 
            
            \path[->] (x) edge node {} (y);
            \path[->] (x) edge node {} (z);
            \path[-]  (x) edge node {} (a);
            \path[->] (o) edge node {} (y);
            \path[->] (o) edge node {} (z);
            \path[->] (y) edge node {} (y2);
            \path[->] (z) edge node {} (z2);
            \path[-, ultra thick, draw=blue] (y) edge node [midway, label=above:$ $] {} (z);
            \path[-, ultra thick, draw=blue] (y) edge node [midway, label=below:$  $] {} (z);
       \end{tikzpicture}%
    \caption{Score Generation Model. Deterministic relationships between latent variables shown in blue.  Deterministic variables dependent on upstream variables shown with squares.}
    \label{fig:gm}
  \end{center}
\end{figure}
The generative model in equation \eqref{eq:generative} can be interpreted as follows.
$\Omega$ is the relative quantile of an individual's cognitive ability in the population. 
Thus, $\Omega$ together with the covariate $X$ determines the latent traits  $\gamma, \zeta$ of the two tests.
The observed scores $Y$ and $Z$ are the measured version of the latent traits, so they are completely determined by $\gamma$ and $\zeta$, respectively. 
The conditional independence $A\indep (\Omega, Y,Z)|X$ is motivated by the fact that the score conversion in the NACC data occurs at 2015, which is not relevant to any individual's ability (but it may be relevant to the year of visit/age, which is part of the covariate $X$),


Although we only observe either $(X_i,Y_i, A_i = 0)$ or $(X_i,Z_i, A_i = 1)$, 
the generative model in equation \eqref{eq:generative} implies that 
the observed data is determined by 
$p(y|\gamma), p(z|\zeta), p(\gamma|x), p(\zeta|x)$, $p(a|x)$ and $p(x)$. 
For the purpose of data harmonization, we do not need to model $p(a|x)$ or $p(x)$,
so we will focus on modeling the first four distributions.
The first two distributions $p(y|\gamma), p(z|\zeta)$
are the measurement \textbf{A}ssumptions, so we denote them as $p_A(y|\gamma), p_A(z|\zeta)$.
The two distributions $p(\gamma|x), p(\zeta|x)$ are the latent trait \textbf{M}odels, so we denote them as $p_M(\gamma|x), p_M(\zeta|x)$.
In what follows, we describe how we model $p_A$ and $p_M$.
To simplify the notations, we focus on $p_A(y|\gamma)$ and $p_M(\gamma|x)$;
the case of $p_A(z|\zeta)$ and $p_M(\zeta|x)$ can be modeled in a similar manner. 


{\bf Remark.} We can immediately draw a comparison from this representation to the Skorohod notation of a random variable which are introduced in quantile regression models (not to be confused with the Skorohod representation theorem) \citep{Chernozhukov2006InstrumentalModels, Kato2012EstimationRegression}.


{\bf Remark.}
We contrast this with the approach of \citep{vandenHeuvel2020LatentMemory}, in which they assume that the distribution of $Y|\gamma$ may not be independent of $X$, however the mapping between latent variables $\phi_{\gamma \to \zeta}$ is independent of $X$.  Our measurement assumption is similar to \citep{Meredith1993MeasurementInvariance} in which $p(y|\gamma,x) = p(y|\gamma)$ and similarly for $Z$.


\subsection{Measurement Assumption Model}

A conventional approach for modeling $p_A$
is the binomial model \citep{Lindsay1983TheFamily, Wood1999BinomialDistribution,Tian2017LearningParameters, Vinayak2019MaximumParameters}, for measurements that are sum scores of a fixed number of binary question items.  However, this may be a restrictive assumption as it inherently depends on equal difficulties for each test question with independent responses.  If we had access to all the question items, it would be possible to learn a conditional model with the difficulties for each question using a Rasch model \citep{Rasch1966AnAnalysis, Lindsay1991SemiparametricAnalysis} or a more elaborate IRT model.  In the NACC UDS application, the individual binary question items are not available.  The measurement assumption model can be selected by a domain expert, but additionally we can also test for its feasibility as we present in Section ~\ref{sec:feasibility_test} as well as present a method for selecting this model from the data in Section \ref{sec:error_select}. 

In this paper, we propose a flexible method to construct a measurement assumption
by measurement
kernel (functions).
Specifically, we model the measurement assumption as
\begin{equation}
    p_A(y|\gamma) \propto K\left(\frac{y - N\gamma}{h}\right),
\end{equation}
where $K$ is the measurement kernel and $h>0$ is the smoothing bandwidth.
This model enables great flexibility in terms of the shape of the error distribution, controlled by $K$ and the spread of the distribution, controlled by $h$. We also denote $Y|\gamma \sim MKM(K, h, \gamma)$ if the conditional distribution is defined by the model above.  The kernel function defines the relative probability of the conditional distribution $Y|\gamma$ with assigning higher probability for $y$ near $N\gamma$. $K$ defines the decay of the relative probability as $y$ is further from $N\gamma$
\subsection{Latent Trait Models}

\subsubsection{Non-Parametric Model} \label{sec:nonpara_model}

In many applications, parametric models for a latent model may not be particularly realistic.  As a result, we would like to relax this assumption and consider a non-parametric method.  We propose a model based on the non-parametric EM algorithm \citep{Laird1978NonparametricDistribution}. 
To simplify the derivation, we omit the covariate $X$.
We illustrate how we can incorporate covariates in Section \ref{sec:include_covariates}.  Estimating the latent model $p_M(\gamma)$ now clearly becomes a mixing distribution problem with log-likelihood: 
\begin{equation} \label{eq:mixing_likelihood}
    \ell_n[P_M] = \frac{1}{n}\sum_{i = 1}^n \log\left( \int p_A(y_i|\gamma) p_M(\gamma) d \gamma \right) \quad P_M \in \mathcal{P}_{[0,1]}
\end{equation}
where we differentiate the parametric likelihood $\ell_n(\theta)$ from the functional likelihood $\ell_n[P_M]$, with $\mathcal{P}_{[0,1]}$ denotes the space of all probability measures over $[0,1]$.  We also highlight that $P_M$ refers to the measure which has a corresponding generalized density function $p_M$ for which we may allow point masses. 
This problem involves solving the NPMLE \citep{Kiefer1956ConsistencyParameters} and was the original inspiration for the nonparametric EM algorithm (NPEM, also known as the functional EM) \citep{Laird1978NonparametricDistribution}.

With a basic rearrangement, we can rewrite equation  (\ref{eq:mixing_likelihood}), the mixture likelihood, to the following:
\begin{equation}
    \ell_n[P_M] = \sum_{y = 0}^N \hat p(y)\log\left( \int p_A(y_i|\gamma) p_M(\gamma) d \gamma \right)
\end{equation}
where $\hat p(y)$ is the empirical distribution of $y$
\begin{equation}
    \hat p(y) = \frac{1}{n}\sum_{i = 1}^n I(Y_i = y).
\end{equation}

This leads us to the standard nonparametric EM algorithm \citep{Laird1978NonparametricDistribution}
\begin{equation}\label{eq:npem}
    p_M^{(j + 1)}(\gamma) = \E_{\hat p(y)}[p_M^{(j)}(\gamma|y)]= \sum_{y = 0}^N\frac{p_A(y|\gamma)\hat p(y)p^{(j)}(\gamma)}{p_{MA}^{(j)}(y)}
\end{equation}
where $p_{MA}^{(j)}(y) = \int p_A(y|\gamma)p_M^{(j)}(\gamma)d\gamma$.

{\bf Remark.}
\citep{Lindsay1983TheTheory} studied the nonparametric EM in depth and proved that if $Y$ has a support size of $N+1$, then there will always exists a discrete NPMLE with at most $N + 1$ masses, even under a true model where $p_0(\gamma)$ is continuous.  This phenomena is explained by the non-identifiability of this model in general, which we will address further in section ~\ref{sec:reg_nplt}.  Approaches to ensure smoothness have also been developed for the latent density estimation such as early stopping of an EM algorithm \citep{Chae2017FastDensity}, imposing a roughness penalty on the density likelihood \citep{Liu2009AMaximizationb} or using a kernel density smoother between iterations of an EM algorithm \citep{Silverman1990ATomography}.  The early stopping is not ideal as we are intentionally not fully maximizing the likelihood.  The kernel density method of \cite{Silverman1990ATomography} also has problems of how to choose the smoother bandwith, as well as losing the non-decreasing property of the EM algorithm. The roughness penalized method of \cite{Liu2009AMaximizationb} is challenging to compute as they introduce an EM algorithm which requires solving an ODE at each iteration.  

Under the conditions of \cite{Chung2015ConvergenceDistributions}, the update in equation \eqref{eq:npem} leads to an improved estimator in the sense that the implied marginal distribution of $Y$ is closer to the empirical distribution in relative entropy. We next introduce a theorem that under a list of sharpness conditions on $p_A$ which are listed in supplement A, we will have linear convergence.

\begin{theorem} \label{thm:linear_convergence}
    Under the assumptions listed in the supplementary material part A, the NPEM algorithm will converge linearly (in the 1-norm) with rate $\rho \in [0,1)$ 
    \begin{equation*}
        \norm{p_{MA}^{(t + 1)}(y) - \hat p(y)}_1 \leq \rho \norm{p_{MA}^{(t)}(y) - \hat p(y)}_1.
    \end{equation*}
   
\end{theorem}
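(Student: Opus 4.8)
The plan is to reduce the claim to a single-step linear contraction on the marginal residual and then to identify the contraction operator explicitly. Write $e^{(t)}(y) = p_{MA}^{(t)}(y) - \hat p(y)$, and note that since both $p_{MA}^{(t)}$ and $\hat p$ are probability vectors we always have $\sum_y e^{(t)}(y) = 0$, so $e^{(t)}$ lives in the zero-sum subspace $V_0 = \{v \in \R^{N+1} : \sum_y v_y = 0\}$. Substituting the NPEM update \eqref{eq:npem} into $p_{MA}^{(t+1)}(y) = \int p_A(y|\gamma)\,p_M^{(t+1)}(\gamma)\,d\gamma$ and interchanging sum and integral gives, after collecting terms, the exact recursion
\[
  e^{(t+1)} = (I - T^{(t)})\,e^{(t)}, \quad
  T^{(t)}_{y,y'} = \frac{M^{(t)}_{y,y'}}{p_{MA}^{(t)}(y')}, \quad
  M^{(t)}_{y,y'} = \int p_A(y|\gamma)\,p_A(y'|\gamma)\,p_M^{(t)}(\gamma)\,d\gamma .
\]
Here $M^{(t)}$ is symmetric positive semidefinite with row and column sums equal to $p_{MA}^{(t)}$ (using $\sum_y p_A(y|\gamma) = 1$), so $T^{(t)}$ is column-stochastic and $I - T^{(t)}$ maps $V_0$ into $V_0$. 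This identity is the heart of the proof, and everything reduces to bounding the operator $I - T^{(t)}$ on $V_0$.

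Next I would analyze the spectrum of this one-step operator. With $D = \mathrm{diag}(p_{MA}^{(t)})$, the matrix $T^{(t)} = M^{(t)}D^{-1}$ is similar to the symmetric positive semidefinite matrix $S^{(t)} = D^{-1/2}M^{(t)}D^{-1/2}$, hence has real eigenvalues in $[0,1]$; its Perron eigenvalue $1$ is attained at $v_1 \propto (\sqrt{p_{MA}^{(t)}(y)})_y$, which is precisely orthogonal to the transformed residual space $D^{-1/2}V_0$. In the coordinates $g^{(t)} = D^{-1/2}e^{(t)}$ the recursion becomes $g^{(t+1)} = (I - S^{(t)})g^{(t)}$ with $g^{(t)} \perp v_1$, so on $V_0$ the operator $I - T^{(t)}$ carries eigenvalues $1 - \mu_i \in [0,1)$, where the $\mu_i$ are the eigenvalues of $S^{(t)}$ on $v_1^\perp$. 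The contraction factor is therefore $1 - \mu_{\min}^{(t)}$, i.e. one minus the spectral gap of $S^{(t)}$. This is exactly where the sharpness conditions of Supplement~A enter: they should guarantee that the measurement kernel is informative enough that $S^{(t)}$ stays bounded away from rank-deficiency, yielding a uniform bound $\mu_{\min}^{(t)} \ge 1 - \rho > 0$ and hence $\rho \in [0,1)$.

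Finally I would iterate the per-step estimate. Two points require care, and the first is the main obstacle: the operator $T^{(t)}$ depends on the moving iterate $p_M^{(t)}$ through $M^{(t)}$, so the rate must be made \emph{uniform} in $t$; this is what forces the sharpness conditions to control the spectrum of $S^{(t)}$ uniformly over the mixing measures the algorithm can visit (equivalently, to keep $\min_y p_{MA}^{(t)}(y)$ bounded away from $0$). The second point is that the spectral argument is most natural in the weighted $\ell_2$ ($\chi^2$-type) norm $\norm{D^{-1/2}e^{(t)}}$, whereas the statement is in $\ell_1$. To close this gap I would either bound the induced $\ell_1 \to \ell_1$ norm of $I - T^{(t)}$ on $V_0$ directly through a Dobrushin-type ergodicity coefficient of $T^{(t)}$, or pass between $\ell_1$ and the weighted $\ell_2$ norm using the uniform two-sided bounds on $p_{MA}^{(t)}(y)$ supplied by the sharpness conditions, taking care that the norm-equivalence constants do not inflate $\rho$ past $1$. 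As an independent cross-check, the relative-entropy monotonicity of \citet{Chung2015ConvergenceDistributions} together with a Polyak--{\L}ojasiewicz-type inequality furnished by the sharpness conditions yields geometric decay of $\KL(\hat p \,\|\, p_{MA}^{(t)})$, from which Pinsker's inequality recovers the same $\ell_1$ rate up to constants.
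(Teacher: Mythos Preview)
Your exact one-step identity $e^{(t+1)} = (I - T^{(t)})e^{(t)}$ with $T^{(t)}_{y,y'} = M^{(t)}_{y,y'}/p_{MA}^{(t)}(y')$ is correct and is in fact a cleaner starting point than what the paper does. The paper never writes this identity; instead it Taylor-expands $1/p_{MA}^{(t)}(y') = 1/(\hat p(y') + g^{(t)}(y'))$ around $1/\hat p(y')$, which produces a leading linear term plus a higher-order remainder that must be controlled separately. From there the paper works entirely in $\ell_1$: the sharpness assumptions (L2)--(L3) bound the off-diagonal overlap products $p_A(y|\gamma)p_A(y'|\gamma)$ on a set of measure $1-\delta$, so the off-diagonal entries of $M^{(t)}$ are small and the diagonal entries are close to $p_{MA}^{(t)}(y)$. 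This makes $T^{(t)}$ nearly the identity and $I-T^{(t)}$ small in the induced $\ell_1\to\ell_1$ norm, directly. The paper also needs an induction in $t$ to keep $\sup_\gamma p_M^{(t)}(\gamma)$ bounded, since this quantity enters the crude bound on the exceptional set of measure $\delta$; you do not mention any analogue of this step.

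Where your proposal diverges is the spectral detour through weighted $\ell_2$. Two concrete issues. First, the recursion $g^{(t+1)} = (I-S^{(t)})g^{(t)}$ is not exact because the weighting matrix $D$ changes with $t$; what is exact is $D_t^{-1/2}e^{(t+1)} = (I-S^{(t)})D_t^{-1/2}e^{(t)}$, and relating this to $g^{(t+1)} = D_{t+1}^{-1/2}e^{(t+1)}$ again costs a factor you must control uniformly. Second, and more seriously, the quantity governing contraction is $1-\mu_{\min}^{(t)}$ with $\mu_{\min}^{(t)}$ the \emph{smallest} eigenvalue of $S^{(t)}$, not the spectral gap $1-\mu_2$; you need $M^{(t)}$ uniformly positive definite, not just a gap below the Perron eigenvalue. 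The paper's (L2)--(L3) do give this via diagonal dominance, but verifying it essentially reproduces the paper's $\ell_1$ calculation, and then you still owe the $\ell_2\to\ell_1$ transfer whose norm-equivalence constants can push the rate past $1$. Your alternative suggestion---bound the induced $\ell_1\to\ell_1$ norm of $I-T^{(t)}$ on $V_0$ directly---is the right instinct and is effectively what the paper does: since $T^{(t)}$ is column-stochastic one gets $\|(I-T^{(t)})e\|_1 \le 2\max_y(1-T^{(t)}_{yy})\|e\|_1$, and the sharpness conditions drive $\max_y(1-T^{(t)}_{yy})$ below $\rho/2$. So the Dobrushin branch of your plan converges to the paper's argument, while the spectral branch as written has a genuine gap.
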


Though the assumptions required for linear convergence are rather strict, they reflect a broader trend seen empirically where a sharper measurement assumption model $p_A$ tends to converge faster with the nonparametric EM algorithm. In general, this algorithm must be approximated since we  do not have an exact evaluation of the integral to compute  $p^{(j + 1)}_M(\gamma)$. We illustrate an approximation of this method in the supplementary material part A. 

The nonparametric likelihood in equation \eqref{eq:npem} suffers from
a non-identifiability problem as the likelihood functional $\ell_n[P_M]$ is only concave in $P_M$,  not strictly concave.
Namely, there might be multiple distributions which maximize $\ell_n[P_M]$. 
Due to the non-identifiability of a maximizer, the convergence properties of the nonparametric EM algorithm are far less studied than in the parametric case.
As far as we are aware,
\cite{Chung2015ConvergenceDistributions} is
the only work that
proved the convergence of nonparametric EM in terms of likelihood values
but it did not derive the rate as Theorem~\ref{thm:linear_convergence}.

Though the mixing distribution $P_M$ is not identifiable, we still may find that the imposed distribution on the marginal $p_{MA}$ will converge linearly to $\hat p$, the empirical distribution under the regularity conditions outlined in the supplement.

\subsubsection{Regularized Nonparametric Latent Trait} \label{sec:reg_nplt}

To deal with the non-identifiability issue, we introduce a regularized NPEM algorithm leading to a unique maximizer.  Let $P_U$ be the measure of a uniform random variable over [0,1]. We introduce a regularization term via the KL-divergence from the uniform distribution, i.e., $\mathcal{D}(P_U||P_M)$. 

We define the regularized likelihood as 
\al{
    \ell_{n,\mu}[P_M] &= \ell_{n}[P_M] - \mu \mathcal{D}(P_U||P_M). 
}

Due to the strict convexity of $\mathcal{D}(P_U||P_M)$ in each parameter, the regularized objective has a unique maximizer whenever $\mu > 0$.  One reason of regularizing to a uniform density is due to its reasonable choice as an uninformative latent distribution much like the choice of regularizing coefficients to $0$ in the LASSO \citep{Tibshirani1996RegressionLasso}. Additionally, since a discrete latent distribution $p_M$ will have infinite KL-divergence from the uniform $\mathcal{D}(P_U||P_M)$, regularizing toward a uniform distribution will smooth the estimate depending on the parameter $\mu$. Moreover, this choice of regularization penalty leads to computational convenience. We outline two methods which can be used to compute the regularized nonparametric MLE (RNPMLE). 

{\bf Optimization Algorithm 1 : Regularized NPEM algorithm.}
The NPEM algorithm only requires a minor change for the likelihood which includes a regularization term, and this NPEM algorithm amounts to estimating a mixture between the standard EM step and with a uniform distribution. 

We derive the form of the update much like the original EM algorithm:
\al{
        \ell_{n,\mu}[P_M] &= \sum_{y = 0}^N \hat p(y)\log\left( \int p_A(y|\gamma)p_M(\gamma)d\gamma\right) - \mu \mathcal{D}(P_U||P_M) \\
        &=  \sum_{y = 0}^N \hat p(y)\log\left( p_{MA}(y)\right) - \mu \mathcal{D}(P_U||P_M) \\
        &=  \sum_{y = 0}^N \hat p(y)\left( \log\left( p_{MA}(y,
        \gamma)\right)  - \log\left( p_{MA}(\gamma|y)\right)\right) - \mu \mathcal{D}(P_U||P_M) \\
        &= \sum_{y = 0}^N \hat p(y)\int(p_{MA}^{(j)}(\gamma|y)\left( \log\left( p_{MA}(y,
        \gamma)\right)  - \log\left( p_{MA}(\gamma|y)\right)\right)d\gamma - \mu \mathcal{D}(P_U||P_M) \\
        &= Q[P_M||P^{(j)}_M] + H[P_M||P^{(j)}_M] - \mu \mathcal{D}(P_U||P_M),
}
where 
\al{
    p_{MA}(y,\gamma) &= p_{A}(y|\gamma)p_M(\gamma),\quad  p_{MA}(\gamma|y) = \frac{p_{MA}(y,\gamma)}{p_{MA}(y)} \\
    Q[P_M||P^{(j)}_M] &=\sum_{y = 0}^N \hat p(y)\int p_{MA}^{(j)}(\gamma|y)\log\left( p_{MA}(y,\gamma)\right)  d\gamma,  \\
    H[P_M||P^{(j)}_M] &= -\sum_{y = 0}^N \hat p(y)\int p_{MA}^{(j)}(\gamma|y)\log\left( p_{MA}(\gamma|y)\right)  d\gamma
}
By Gibbs' Inequality $H[P_M||P^{(j)}_M] \geq H[P^{(j)}_M||P^{(j)}_M]$ with equality only if $P_M = P^{(j)}_M$. 
Thus, 
$$
 \ell_{n,\mu}[P_M] \geq Q[P_M||P^{(j)}_M] - \mu \mathcal{D}(P_U||P_M)
$$
and the regularized nonparametric EM
is 
\al{
    p^{(j+1)}_M = \argmax_{p_M \in \mathcal{P}_{[0,1]}} Q[P_M||P^{(j)}_M] - \mu \mathcal{D}(P_U||P_M),
}
which will guarantee that the likelihood increases with each iteration.  
In addition, observe that 
\al{
     Q[P_M||P^{(j)}_M] - \mu \mathcal{D}(P_U||P_M) &= \int \sum_{y = 0}^N \hat p(y)\frac{p_A(y|\gamma)p^{(j)}_M(\gamma)}{p^{(j)}_{MA}(y)} \log(p_A(y|\gamma)p_M(\gamma)) d\gamma +  \int \log(p_A(y|\gamma)) d\gamma \\
     &= (1 + \mu)\int \left( \sum_{y = 0}^N \hat p(y)\frac{p_A(y|\gamma)p^{(j)}_M(\gamma)}{(1 + \mu)p^{(j)}_{MA}(y)} + \frac{\mu}{1 + \mu}\right)\log(p_M(\gamma)) d\gamma + C
}
and once again by Gibbs' inequality, we can uniquely maximize $Q[P_M||P^{(j)}_M] - \mu \mathcal{D}(P_U||P_M)$ by letting
\begin{equation}
    p_{M}^{(j + 1)}(\gamma) = \frac{1}{1 + \mu}\E_{\hat p(y)}[p_{M,}^{(j)}(\gamma|y)] + \frac{\mu}{1 + \mu}= \sum_{y = 0}^N\frac{p_A(y|\gamma)\hat p(y)p_{M,}^{(j)}(\gamma)}{(1 + \mu)p_{MA, \mu}^{(j)}(y)} + \frac{\mu}{1 + \mu}.
    \label{eq:npem2}
\end{equation}
When $\mu = 0$ this simply reduces to the nonparametric EM algorithm in equation \eqref{eq:npem}. 

Next we present a theorem for the convergence of the EM algorithm on the regularized likelihood. 

\begin{theorem} \label{thm:regularized_EM}
    Denote the unique global solution $P^*_{M,\mu} = \argmax_{P_M \in \mathcal{P}_{[0,1]}} \ell_{n,\mu}[P_M]$.
    
    Then consider a sequence of latent trait distributions $\{P^{(t)}_{M,\mu}\}_{t = 0}^{\infty}$ generated by the EM algorithm for the regularized likelihood. If $\mu > 0$ and $p_A(y|\gamma)$ is continuous in $\gamma$ for each $y$, then 
    
    $$ \ell_{n,\mu}[P^{(t)}_{M,\mu}] \stackrel{t \to \infty}{\longrightarrow} \ell_{n,\mu}[P^{*}_{M,\mu}].$$ 
    
    and
    
    $$ P^{(t)}_{M,\mu} \stackrel{t \to \infty}{\longrightarrow_w} P^{*}_{M,\mu}$$
    where $\longrightarrow_w$ denotes weak convergence of measures. 
\end{theorem}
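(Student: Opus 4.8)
The plan is to treat the regularized NPEM as a monotone ascent scheme on the weakly compact set $\mathcal{P}_{[0,1]}$ and to combine the strict concavity of $\ell_{n,\mu}$ with a compactness argument in the topology of weak convergence. First I would record the ascent property already implicit in the derivation of \eqref{eq:npem2}: by Gibbs' inequality each iteration satisfies $\ell_{n,\mu}[P^{(t+1)}] \ge \ell_{n,\mu}[P^{(t)}]$. Since $p_A$ is bounded, $\ell_n[P_M] \le \max_y \log \sup_\gamma p_A(y|\gamma) < \infty$, and Jensen's inequality gives $-\mu\mathcal{D}(P_U\|P_M) = \mu\int_0^1 \log p_M(\gamma)\,d\gamma \le 0$; hence $\ell_{n,\mu}$ is bounded above and the monotone sequence $\ell_{n,\mu}[P^{(t)}]$ converges to some $\ell^\infty \le \ell_{n,\mu}[P^*_{M,\mu}]$.

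The decisive quantitative step is to read off an exact expression for the per-iteration gain. Carrying out the maximization in the M-step, the surrogate $Q[P_M\|P^{(t)}] - \mu\mathcal{D}(P_U\|P_M)$ equals $(1+\mu)\int_0^1 p^{(t+1)}(\gamma)\log p_M(\gamma)\,d\gamma$ up to an additive constant, so its optimality gap at $P_M = P^{(t)}$ is exactly $(1+\mu)\mathcal{D}(P^{(t+1)}\|P^{(t)})$. Therefore $\ell_{n,\mu}[P^{(t+1)}] - \ell_{n,\mu}[P^{(t)}] \ge (1+\mu)\mathcal{D}(P^{(t+1)}\|P^{(t)})$, and summability of the increments forces $\mathcal{D}(P^{(t+1)}\|P^{(t)}) \to 0$, whence $\|P^{(t+1)} - P^{(t)}\|_{TV} \to 0$ by Pinsker. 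I would also record here the uniform lower bound $p^{(t)}(\gamma) \ge \mu/(1+\mu)$ read directly from \eqref{eq:npem2}; it guarantees that every marginal $p^{(t)}_{MA}(y)$ stays bounded away from $0$ and that the iterates lie in a weakly sequentially compact family.

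Next I would identify subsequential limits. As $\mathcal{P}_{[0,1]}$ is weakly compact and metrizable, pass to a subsequence $P^{(t_k)} \to_w \bar P$. The EM operator $M$ defined by \eqref{eq:npem2} is weakly continuous: integrating a bounded continuous test function $g$ against $M(P)$ and using that $\gamma\mapsto g(\gamma)p_A(y|\gamma)$ is bounded continuous (here the continuity of $p_A(y|\cdot)$ is essential) together with $p_{MA}(y) \to \bar p_{MA}(y) > 0$, one gets $\int g\,dM(P^{(t_k)}) \to \int g\,dM(\bar P)$. Hence $P^{(t_k+1)} = M(P^{(t_k)}) \to_w M(\bar P)$, while the total-variation closeness of consecutive iterates gives $P^{(t_k+1)} \to_w \bar P$; uniqueness of weak limits yields $M(\bar P) = \bar P$. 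Since EM fixed points are exactly the stationary points of $\ell_{n,\mu}$ and $\ell_{n,\mu}$ is strictly concave (the term $-\mu\mathcal{D}(P_U\|\cdot)$ is strictly concave for $\mu>0$), the unique stationary point is the global maximizer, so $\bar P = P^*_{M,\mu}$. Every subsequential weak limit equals $P^*_{M,\mu}$, and metrizability of the weak topology then upgrades this to $P^{(t)}_{M,\mu} \to_w P^*_{M,\mu}$.

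Finally, the value convergence. Weak continuity of the likelihood part gives $\ell_n[P^{(t)}] \to \ell_n[P^*_{M,\mu}]$, so it remains to show $\mathcal{D}(P_U\|P^{(t)}) \to \mathcal{D}(P_U\|P^*_{M,\mu})$. This is the main obstacle: the relative entropy is only weakly lower semicontinuous, and an oscillating-density example shows it is genuinely not weakly continuous, so a priori $\ell^\infty$ could fall short of $\ell_{n,\mu}[P^*_{M,\mu}]$. I expect to close the gap by upgrading the weak convergence of $P^{(t)}$ to $P^*_{M,\mu}$ to total-variation convergence, for instance through the Csiszár--Tusnády alternating-minimization view of EM, in which each E/M step is an $I$-projection and $\mathcal{D}(P^{(t)}\|P^*_{M,\mu})$ is non-increasing; once $\|P^{(t)} - P^*_{M,\mu}\|_{TV} \to 0$, the uniform lower bound $p^{(t)} \ge \mu/(1+\mu)$ makes $\int \log(\cdot)$ continuous along the sequence and forces $\ell^\infty = \ell_{n,\mu}[P^*_{M,\mu}]$.
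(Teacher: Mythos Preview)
Your route to weak convergence is correct and genuinely different from the paper's. The paper first establishes a Gateaux-derivative optimality criterion, $\nabla\ell_{n,\mu}[P](\gamma)\le 1+\mu$ a.e., and then argues by contradiction: if a subsequential weak limit $P^{**}$ were not optimal, the criterion would fail on a set $G^*$ of positive measure, and the multiplicative form $p^{(t+1)}=\tfrac{\nabla\ell_{n,\mu}[P^{(t)}]}{1+\mu}\,p^{(t)}$ would force the densities to blow up on $G^*$. You instead extract the quantitative gain $\ell_{n,\mu}[P^{(t+1)}]-\ell_{n,\mu}[P^{(t)}]\ge (1+\mu)\mathcal{D}(P^{(t+1)}\|P^{(t)})$ to get $\|P^{(t+1)}-P^{(t)}\|_{TV}\to 0$, verify that the EM map is weakly continuous, and conclude that any subsequential weak limit is a fixed point, hence the unique maximizer by strict concavity. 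Your argument is arguably more robust on one point: the paper's contradiction step passes from $P^{(t_k)}\to_w P^{**}$ to a pointwise inequality for $\nabla\ell_{n,\mu}[P^{(t_k)}](\gamma)$, which contains the term $\mu/p^{(t_k)}_M(\gamma)$, and weak convergence alone does not obviously deliver that; your fixed-point route never needs pointwise control of the densities. On the other hand, the paper's optimality-criterion and absolute-continuity lemmas are of independent interest and make explicit why regularization rules out singular maximizers.

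On value convergence you correctly isolate the obstacle: $\mathcal{D}(P_U\|\cdot)$ is only weakly lower semicontinuous, so weak convergence of $P^{(t)}$ to $P^*_{M,\mu}$ does not by itself identify $\ell^\infty$ with $\ell_{n,\mu}[P^*_{M,\mu}]$. Your proposed fix via Csisz\'ar--Tusn\'ady is not clearly applicable, however. The regularized M-step maximizes $Q[P_M\|P^{(t)}]-\mu\,\mathcal{D}(P_U\|P_M)$, which is not an $I$-projection in the standard alternating-minimization framework, so the monotonicity of $\mathcal{D}(P^{(t)}\|P^*_{M,\mu})$ does not come for free and would need a separate argument. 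If you want to close this gap along the lines you already have, a more direct attempt is to upgrade the convergence of the densities: write $p^{(t+1)}=a^{(t)}p^{(t)}+\mu/(1+\mu)$ with $a^{(t)}(\gamma)=(1+\mu)^{-1}\sum_y \hat p(y)p_A(y|\gamma)/p^{(t)}_{MA}(y)$, use that $a^{(t)}\to a^*$ uniformly (since $p^{(t)}_{MA}(y)\to p^*_{MA}(y)>0$ and each $p_A(y|\cdot)$ is continuous on $[0,1]$), and try to show $p^{(t)}\to p^*$ in $L^1$; the uniform lower bound $p^{(t)}\ge\mu/(1+\mu)$ then makes $\int\log p^{(t)}\to\int\log p^*$ immediate. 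Whether one can get a uniform contraction factor out of $a^*$ is not obvious, though, so this still needs work. (The paper's proof, for what it is worth, asserts the value convergence after establishing $P^{**}=P^*$ without separately addressing the entropy term.)
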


The proofs are in the supplementary material and rely on techniques for proving the the convergence of the unregularized EM algorithm \citep{Chung2015ConvergenceDistributions}, as well as convex optimization in infinite dimensional vector spaces \citep{Kosmol2011OptimizationSpaces}.  Once again, in practice, we must use an approximation to the regularized EM algorithm.  

With the inclusion of a small perturbation in the likelihood we gain uniqueness of a maximizer, and weak convergence of an EM algorithm no matter the measurement assumption model $p_A(y|\gamma)$ whereas in Theorem \ref{thm:linear_convergence} we have shown that when $p_A(y|\gamma)$ is a sharp model, we will have fast convergence of the EM algorithm. The assumptions required in Theorem \ref{thm:linear_convergence} are far more strict, but those required for the convergence results in Theorem \ref{thm:regularized_EM} are very minimal. However, this algorithm may converge very slowly in general, and thus a fast approximation is also desirable.

{\bf Optimization Algorithm 2 : Fast computation via geometric programming with binning.}

Although the nonparametric EM algorithms in equations \eqref{eq:npem} and \eqref{eq:npem2} have nice properties, 
we are optimizing functions, which are infinite dimensional objects.
Thus, the nonparametric EM may still be very slow in practice.
To reduce the computational cost,
we consider a binning approximation of 
the latent trait distribution.  Recall that the regularized likelihood is
\begin{equation*}
    \ell_{n,\mu}[P_M] = \sum_{y = 0}^N \hat p(y)\log \left( \int p_A(y|\gamma)p_M(\gamma)d\gamma\right) + \mu \int_{0}^1 \log(p_M(\gamma))d\gamma 
\end{equation*}
We consider a binned version of the latent density:
\begin{align*}
    p_M(\gamma) &=  R\theta_{r} \quad   \gamma \in \left[\frac{r - 1}{R}, \frac{r}{R}\right) \quad r \in \{1,2,\dots, R\}.
\end{align*}
Namely, $\theta_{r}$ is the weight of the mixing density between the points $[\frac{r - 1}{R}, \frac{r}{R})$.

Let $A \in \mathbb{R}^{(N + 1) \times R}$ be a matrix where $A_{ij} = \int_{\frac{j - 1}{R}}^{\frac{j}{R}}p_A(i|\gamma)d\gamma$ and $A_{y \cdot}\in \R^{R}$ be the row vector of $A$ at position $y$.  Our discrete approximation to the maximization of $\ell_{n,\mu}[P_M]$ can now be written as 
\begin{equation*}
\begin{aligned}
    & \underset{\theta}{\text{sup}}
    & & \sum_{y = 0}^N \hat p(y)\log(A^T_{y \cdot}\theta) + \mu \frac{1}{R}\sum_{r = 1}^R\log(R\theta_r) \\
    & \text{subject to}
    & & 1^T\theta = 1 \\
    &&& \theta \succeq 0.
\end{aligned}
\end{equation*}
The maximization problem can be solved by standard convex solvers since this problem is equivalent to a geometric program.  In our numerical analysis in Sections \ref{sec:Simulations}, \ref{sec:application}, we implement this with the newly developed CVXR \citep{Fu2020CVXR:Optimization} with the MOSEK convex solver \citep{Andersen2000TheAlgorithm}. 

In the supplemental material section \ref{supp:sec:speed_test} , we illustrate that this geometric program is much faster than the Regularized NPEM algorithm. 

\subsubsection{Incorporating covariates} \label{sec:include_covariates}
To incorporate covariates, we replace the estimate of $\hat p(y)$ with a conditional estimate $\hat p(y|x)$.  
With this, we modify the log-likelihood $\ell_{n,\mu}$ to be
the conditional log-likelihood:
\[
    \ell_{n,\mu}[p_M;x] =  \sum_{y = 0}^N \hat p(y|x)\log \left( \int p_A(y|\gamma)p_M(\gamma|x)d\gamma\right) + \mu \int_{0}^1 \log(p_M(\gamma|x))d\gamma.
\]

Note that $p_M(\gamma|x)$ now depends on $x$ as well. 
For the the discretized latent approximation we replace $\theta$ with $\theta(x)$. 

\subsection{Score Conversion} 

By applying the above procedure to both $(X,Y, A=0)$ and $(X,Z, A=1)$,
we obtain estimators of $p_M(\gamma|x)$ and $p_{M}(\zeta|x)$.
By Theorem \ref{thm:cdf}, these quantities
can be used to create the conversion $\phi_{\gamma \to \zeta}(\cdot|x)$ and $\phi_{\zeta \to \gamma}(\cdot|x)$, which further leads to a conversion between $Y$ and $Z$ as described in Algorithm ~\ref{alg:Conversion_sample}.  This amounts to constructing an estimate for $p(\gamma|y,x)$ based on the estimated mixing distributions and the measurement assumptions.  Treating the estimated mixing distributions as priors, our conversion of scores is essentially an empirical Bayes method \citep{Laird1982EmpiricalPrior}.


\begin{algorithm}[H] \label{alg:Conversion_sample}
    Set a fixed sampling number $J$ \\
    \For{ $i \in \{1,\dots, n\}$ }{
        \For{ $j \in \{1,\dots, J\}$ }{
              a. Sample $\widetilde \gamma_{ij}$ from $p_{\widehat M(\mu)A_Y}(\gamma|X_{i},Y_{i})$ \\
              b. Convert $\widetilde \zeta_{ij} = \phi_{\gamma \to \zeta}(\widetilde \gamma_{ij}|x_i)$ \\
              c. Sample $\widetilde Z_{ij}$ from $p_{A_Z}(Z|\widetilde \zeta_{ij})$
        }
    
       Denote $\mathcal{Z}_i = \{\widetilde Z_{ij}\}_{j = 1}^J$ 
     }
    
 \caption{Conversion Sampling } \
\end{algorithm}

In practice, we must approximate $\phi_{\gamma \to \zeta}(\cdot|x)$ as we do not have analytic forms for $F(\cdot|x)$ and $G(\cdot|x)$.   A simple method is to let $\hat \phi^{L}_{\gamma \to \zeta}(\cdot|x)$ be a piece-wise linear function constructed by the linearized versions of $F(\cdot|x)$ and $G(\cdot|x)$.  This is a simple to compute method which can be made arbitrarily precise depending on the number of quantiles chosen.  In the case of a binned latent distribution, the estimates of $\hat F_{M(\mu)}(\cdot|x)$ and $\hat G_{M(\mu)}(\cdot|x)$ are already piece-wise linear as the corresponding densities are piece-wise constant. 

We illustrate these details further in the supplementary material section \ref{supp:sec:computational_details}.

\section{Model diagnostics and model selection with multiple observations}

When we only have one observation per person, 
it is difficult to check if our generative
model and the measurement kernel are compatible with the data.
However, the NACC UDS is longitudinal and we have multiple observations per individual. 
In this section, we will discuss how
we may use two consecutive observations of the same
individual to test the feasibility of a measurement kernel
and choose the model and tuning parameter in our analysis.
Note that we restrict ourselves to only two consecutive observations per individual because these observations are yearly information of the same individual. It is reasonable to assume that the cognitive ability of the same person is approximately constant in this short window.

\subsection{Feasibility test of the measurement assumption model} \label{sec:feasibility_test}

In this section, we propose two simple tests to examine if the measurement assumption is reasonable. We use this test as a model diagnostic procedure in practice. 
The goal is to examine if the measurement assumption $p_A(y|\gamma)$ is in agreement with the observed data or not.

{\bf First-order feasibility test.}
For a given model $p_A(y|\gamma)$, it implies a marginal distribution $p_{MA}(y) = \int p_A(y|\gamma) p_M(\gamma)d\gamma$. 
The marginal distribution $p_{MA}(y)$ can be compared to
the empirical distribution $\hat p(y)$. 
In practice, we can use the KL divergence between $\hat p$ and $\hat p_{MA}$ 
to investigate if such a measurement assumption $p_A(y|\gamma)$ is feasible or not. 

Geometrically, the first-order feasibility test
can be understood as follows. 
Let $W = (W_0,\cdots, W_N)\in\R^{N+1}$
be a probability vector, i.e., $\sum_{j=0}^N W_j = 1$ and $W_j\geq 0$. 
Clearly, $W\in\mathcal{S}_{N+1}$,
where $\mathcal{S}_{N+1}$ is the $(N+1)$-simplex.
The empirical distribution $\hat p(y)$
is a point $\hat p = (\hat p(0),\cdots, \hat p(N)) \in \mathcal{S}_{N+1}$.
At a given $\gamma$, the measurement assumption $p_A(y|\gamma)$ is also
an element $p_A(\cdot|\gamma) = (p_A(0|\gamma),\cdots, p_A(N|\gamma))\in\mathcal{S}_{N+1}$.
By the same construction,
the implied marginal distribution $p_{MA}\in\mathcal{S}_{N+1}$.
While different latent distribution $p_M(\gamma)$ leads to a different marginal $p_{MA}$, 
it  is easy to see that the collection
of all possible marginal distribution from a measurement assumption $p_A$
is ${\sf conv}(\Gamma)$, the convex combination of
$$
\Gamma = (p_A(\vect{y}|\gamma): \gamma\in[0,1])\subset \mathcal{S}_{N+1}.
$$
See Figure~\ref{fig:feasibility1} for an example.  

We test for population feasibility, i.e. whether $p_0 \in \text{conv}(\Gamma)$ by checking the closeness of $\text{conv}(\Gamma)$ to $\hat p$.  This can be done by fitting the NPMLE.  Though the estimated mixing distribution $\hat P_M$ may not be unique, the marginal implied distribution $\hat p_{MA}$ will be unique.  This is due to the fact that maximum likelihood estimation is equivalent to finding a particular mixing distribution $\hat P_M$ which minimizes the relative entropy distance between $\hat p$ and $p_{MA} \in \text{conv}(\Gamma)$ while $\mathcal{D}(\hat p||\cdot)$ is strictly convex.  See figure \ref{fig:feasibility1} for an illustration of the path $\Gamma$ and the convex hull.


{\bf Second-order feasibility test.}
When we have two observations of the same individual (which occurs 
in the NACC data since it is a longitudinal database), 
we can generalize the above procedure.  We assume that an individual has a constant trait $\gamma$ between measurements. Our model then describes a restriction on the distribution of the pairs of measurements. Similarly let $W_{(2)} = (W_{0,0}, W_{0,1}, \dots, W_{N,N})$ be a probability vector $W_{(2)} \in \mathcal{S}_{(N + 1)^2}$.  We define $\hat p_2 = (\hat p(0,0), \hat p(0,1), \dots, \hat p(N,N)) \in \mathcal{S}_{(N + 1)^2}$ as the empirical distribution of the pairs of observations. In the case where two observations are generated from an individual with a single $\gamma$, we can fit a latent distribution on two observations with the following mixture likelihood: 
\al{
    \ell_{2,n}[P_M] &= \frac{1}{n} \sum_{i = 1}^n \log\left( \int p_A(y_{i1}|\gamma)p_A(y_{i2}|\gamma) dP_M(\gamma)\right) \quad P_M \in \mathcal{P}_{[0,1]}
}
We obtain an analogous implied marginal distribution on the distribution of pairs $(Y_{i1}, Y_{i2})$, $\hat p_{2,MA}$ from a fitter model where $\hat P_{2,M} \in \argmax \ell_{2,n}[P_M]$
\al{
    \hat p_{2,MA}(y_1,y_2) &= \int p_A(y_{1}|\gamma)p_A(y_{2}|\gamma) d\hat P_{2,M}(\gamma) d\gamma
}
As before, the collection of all bivariate distributions generated by this model with a fixed $\gamma$ can be expressed as $\text{conv}(\Gamma_2)$ where

\al{
    \Gamma_2 &= (p_A(\vect{y}_{1}|\gamma) \otimes p_A(\vect{y}_{2}|\gamma): \gamma\in[0,1]) \subset \mathcal{S}_{(N+1)^2}
}
and $\otimes$ is the outer product. 


\begin{figure}[htb!]
    \centering
    \includegraphics[scale = 0.42]{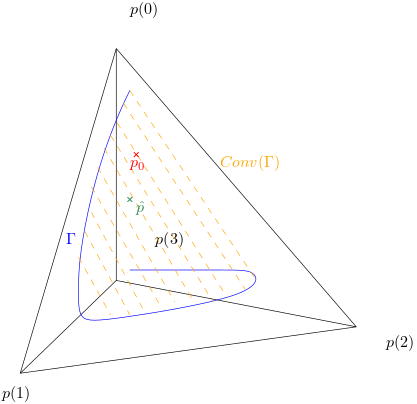}
    \caption{Example for $N = 3$ of the first order population feasibility of $p_0$ and $\hat p$. As both are in the interior of $\text{conv}(\Gamma)$}
    \label{fig:feasibility1}
\end{figure}

The reason to consider a second-order feasibility test is that the first-order test may be insensitive when the tuning parameter of the measurement model $h \to 0$.   The measurement assumption model becomes sharper and approaches a path on the boundary of $\mathcal{S}_{(N + 1)}$, meaning all population score distributions can be represented by this model $p_A(y|\gamma)$. Also as $h \to \infty$ the measurement assumption path converges to a single point $\Gamma \to p_U(\vect{y})$ where $p_U(\vect{y})$ is the uniform distribution over $N+1$ points and $\text{conv}(\Gamma)$ shrinks to that point. 



To see how the second-order feasility test can reconcile this problem, as $h \to 0$, $\Gamma_2$ approaches the edge of the simplex moving from $p(0,0) $ to $p(1,1), \dots, p(N,N)$. This is because under this model, we assume no variability given an individual's $\gamma$ score.  See figure \ref{fig:feasibility_tests} for an outline.  Once again, all the bivariate distributions expressible by a single mixing distribution can be denoted as $\text{conv}(\Gamma_2)$ and a second order population feasibility test can be interpreted as to whether $\hat p_{2,MA} (\vect{y}_1, \vect{y}_2) $ is sufficiently close to $\hat p_2(\vect{y}_1, \vect{y}_2)$. As in indicated previously, the distribution $\hat p_{2,MA}$ is the unique closest point in $\text{conv}(\Gamma_2)$ to $\hat p_2 $ in terms of the KL-divergence.   

\begin{figure}[h!]
\centering
\subfigure[First order population feasibility region]{
\includegraphics[scale = 0.42]{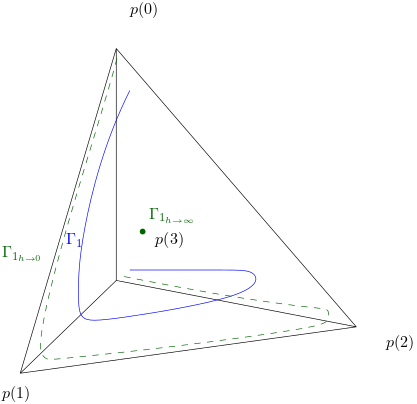}
\label{fig:subfigure1}}
\quad
\subfigure[Second order population feasibility region]{
\includegraphics[scale = 0.42]{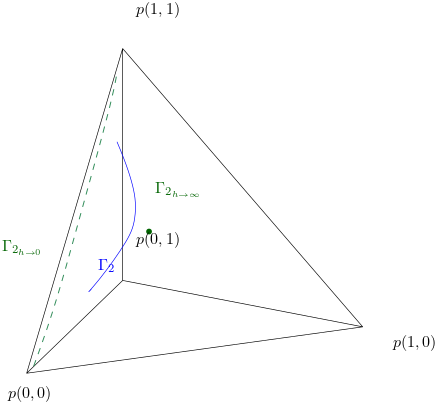}
\label{fig:subfigure2}}
\caption{Though $\text{conv}\left(\Gamma_1\right)$ tends to grow as $h$ gets large, it is clear that this places must stronger restrictions on the second order population feasibility as $\Gamma_2$ approaches a line on the boundary of $\mathcal{S}_{(N + 1)^2}$ which traces the symmetric distributions. Note that in the second order population feasibility case we can only illustrate the case of $N = 1$ and any higher order example becomes impossible to visualize.}
\label{fig:feasibility_tests}
\end{figure}

\subsubsection{Multinomial Concentration and higher-order tests}
We define the $k$-th order population feasibility tests, and denote the observed proportion vector $\hat p_k \in \mathcal{S}_{(N + 1)^k}$ generated with true probabilities $p_{0,k} \in \mathcal{S}_{(N + 1)^k}$. Denote $\mathcal{D}$ the KL-divergence or relative entropy. Under standard maximum likelihood estimation, to test whether $\hat p_k$ came from a model with parameter $p_{0,k}$ we can compute the likelihood ratio statistic
\al{
    2n\mathcal{D}(\hat p_k || p_{0,k} ) &\to_d \chi^2_{(N+1)^k - 1}
}
Since we are testing whether the set is feasible or not, instead of testing a fixed $p_{0,k}$, we rather test whether $p_{0,k} \in \text{conv}(\Gamma_k)$ (where $\Gamma_k$ is defined analogously to $\Gamma_2$). We can do this by letting replacing $p_{0,k}$ with $\hat p_{MA}$ where $ \hat P_M \in \argmax \ell_n[P_M]$.  We can use this to produce a test for the null hypothesis $H_0: p_{0,k} \in \text{Conv}(\Gamma_k)$ against the alternative $H_A: p_{0,k} \not \in \text{Conv}(\Gamma_k)$
\al{
    \mathbb{P} \left( n\mathcal{D}(\hat p_k || p_{0,k} ) > t | H_0\right) &\leq \mathbb{P} \left( n\mathcal{D}(\hat p_k || \hat p_{MA} ) > t | H_0\right)
}
An additional challenge is due to the fact the distribution of $\mathcal{D}(\hat p_k || p ) $ is asymptotic, and since the support size $\hat p$ grows exponentially in $k$ there is a particular need for a finite sample result, in particular for $k \geq 2$.  Finite sample concentration of the multinomial distribution in relative entropy is an active area of research.   We will use the recent result in \citep{Guo2021Chernoff-TypeEntropy} to compute an upper bound for $ \mathbb{P} \left( n\mathcal{D}(\hat p_k || \hat p_{MA} ) > t | H_0\right)$.  A failure of this test indicates gross misspecification of the measurement model.

We note that we use the term feasibility test rather than a hypothesis test due to the fact that for a finite $k$ we will not in general be able to discern all models $p_A(y|\gamma)$ as $n_k \to \infty$, only rule out models which do not meet the feasibility test. This test would have no power in these situations, a phenomena similar to falsification tests  \citep{Kang2013TheApproach, Wang2017OnModel, Keele2019FalsificationOperate}.

\subsection{Selection of the Measurement Assumption model using consecutive observations} \label{sec:error_select}

While the measurement  assumption model has to be assumed based on prior knowledge about the data generating process, 
when there are multiple measurements of the same individual,
we can choose it from the data. 
Here we describe a simple data-driven
procedure of choosing a measurement  assumption model
based on two consecutive observations in a longitudinal data.
Note that this idea can be generalized to multiple observations; we use
two consecutive observations because 
in the NACC data, two consecutive observations only differ by one year
and we expect individual's cognitive ability ($\gamma$) will not change much within a single year.


Let $\mathcal{A}$ be a collection of measurement assumption models. 
One can think of the element $A \in \mathcal{A}$ to be a particular choice of measurement assumption model with kernel $K$ and bandwidth $h$.  But $A$ may also include the binomial or other measurement assumption model. 
\begin{algorithm}[H]
  1. Set a fixed sampling number $J$
 \For{ $j \in \{1,\dots, J\}$ }{
    \For{ $i \in \{1,\dots, n_2\}$ }{
          a. Sample $\gamma_{ij}$ from $\hat p(\gamma|Y_{1i})$ \\
          b. Sample $\hat Y^{(A)}_{ij}$ from $p_{A}(y|\gamma_{ij})$ \\
          c. Compute $\hat E^{(A)}_{ij} = \hat Y^{(A)}_{ij} - Y_{1i}$ 
    }

 }
 2. Compute the empirical estimate of $\hat q_A (\omega) = \frac{1}{Jn_2}\sum_{ij} I(\hat E^{(A)}_{ij} = \omega)$,
 where $n_2$ is the number of individuals with second observations.\\
 \caption{Intrinsic Variability Sampling } 
 \label{alg:Intrinsic_variability_sample}
\end{algorithm}
Our model selection procedure is
is very simple. 
We use the first observation of each individual to estimate the 
latent distribution $\hat p_M(\gamma)$. 
Then for each individual, we use his/her first observation 
with the latent distribution $\hat p_M(\gamma)$
to predict value of the second observation. 
This gives us a simple prediction about the difference between
the first and the second observation, denoted as $\hat E$. 
Finally, we compare its distribution to the distribution of actual difference between the two observation
and choose the model with the best accuracy.


Specifically, 
let $Y_{1i}, Y_{2i}$ be the first and second observations of i-th individual.
Let 
\al{
    E_{i} &= Y_{1i} - Y_{2i} \in \{-N, -(N - 1), \dots, N\} 
}
be the difference between the first and the second observation. 
One can think of $E_i$ as IID from an unknown distribution $q_0$. 
Let $\hat q$ be the EDF of $E_i$. 
To examine how a measurement assumption model $A$
fits to the data,
we use Algorithm ~\ref{alg:Intrinsic_variability_sample}
that generates an estimate of the distribution $\hat q_A$
based on the model $p_A(y|\gamma)$. 
We include details on sampling from all relevant distributions in the supplementary material.
Finally, 
we choose the optimal $A$ such that
\begin{equation*}
    A^* = \argmin_{A \in \mathcal{A}} \mathcal{\tilde D}(\hat q, \hat q_A) \ ,
\end{equation*}
where $\mathcal{\tilde D}(\cdot, \cdot)$ is some arbitrary probability divergence and $\hat q (\omega) = \frac{1}{n_2}\sum_{i} I(\hat e_{i} = \omega)$.  In our experiments, we pick $\mathcal{\tilde D}$ to be the total variation norm $\norm{\cdot}_{TV}$.

A justification for this method is as follows. The distribution of $E_i$ can be described as a functional of the joint 
distribution: 
\al{
    q_0(E_i = e_i) &= \sum_{y_1}\sum_{y_2}p_0(y_1,y_2)I(e_i = y_2 - y_1)
}
While the model implied version is: 
\al{
    q_A(E^{(A)}_i = e_i) &= \sum_{y_1}\sum_{y_2}p_{MA}(y_2|y_1)I(e_i = y_2 - y_1)p_0(y_1).
}
These distributions $q_0$ and $q_A$ will be equal when the model is correct.  We instead assess the quality of the measurement assumption model by the closeness of $q_0$ to $q_A$. 

Note that  we use this conditional distribution in case the first score is not independent from whether a subsequent score is observed.  If a true generating model exists, then $ \mathcal{\tilde D}( q_0, q_A) = 0$, however, it is not necessarily the case that the true model would be the only minimizer. Moreover $q_A(E^{(A)}_i = e_i)$ is not a convex functional of $q_A$ in general, so we simply use this method to pick out the best choice of $A \in \mathcal{A}$. It is possible to get a precise estimate of $q_0$ as there are only $2N + 1$ categories, unlike if we tried to use the joint distribution directly as the size of the domain can be very large $(N+1)^2$, where $N \geq 30$ in many cognitive tests.  

As we will illustrate in Section \ref{sec:Simulations} in the simulations $\norm{\hat q_0 - \hat q_A}_{TV}$ will in general have very little dependence on the regularization parameter $\mu$. 

\subsection{Selection of $\mu$} \label{sec:reg_selection}

 In practice, we find that the procedure in Section \ref{sec:error_select} when using the total variation norm is relatively independent from the choice of $\mu$ (see Figure \ref{fig:intrinic_sim} as well as the supplementary material).  Therefore selection of $\mu$ requires a slightly different procedure. The main challenge is that for many samples, an estimate $\hat p(y|x)$ may end up outside of $\text{conv}(\Gamma)$, and thus the maximum likelihood estimate for the latent density is discrete \citep{Lindsay1983TheTheory}. Instead, we fit the latent model $p_M(\gamma|x)$ on the regular smoothed estimator, and select $\mu$ based on maximizing the two observation likelihood below
\begin{align}
    p_{\hat M(\mu)}(\cdot|x) &= \argmax_{p_M \in \mathcal{P}_{[0,1]}} \sum \hat p(y|x) \log\left(\int p_A(y|\gamma)p_M(\gamma|x)d\gamma\right)  + \mu \int \log\left(p_M(\gamma|x)\right) d\gamma \nonumber \\
    p_{\hat M(\mu)A}(Y_{1}, Y_2|X) &= \int p_A(Y_{1}|\gamma)p_A(Y_2|\gamma)p_{\hat M(\mu)}(\gamma|X)d\gamma \nonumber \\
    \ell_2(p_{\hat M(\mu)}) &= \frac{1}{n_2} \sum_{i = 1}^{n_2}\left[ \log(p_{\hat M(\mu)A}(Y_{i1}, Y_{i2}|X_{i}))\right] \label{eq:two_obs_likelihood}
\end{align}

In practice, $\hat p(y|x)$ is only based on the first observations.  As a means of smoothing the latent densities, we pick $\mu$ such that
\begin{align}
    \tilde \mu &= \argmin_{\mu} - \ell_2(p_{\hat M(\mu)}) .
\end{align}
If we only considered a training and test set, this would be a problem.  If $\hat p_1$ and $\hat p_2$ are the marginal conditional estimates of the distributions of $Y_{i1}$ and $Y_{i2}$ then these should converge to the same distribution.  Due to the non-identifiability of the problem, smoothing will only lower the value on the unseen likelihood as on a second sample when $\hat p_1 \approx \hat p_2$.  Instead we use smoothing and verify how well that the latent trait model fits the data where observations are generated as pairs from a single latent $\gamma$. This helps verify that the smoothed version will perform on unseen data, in a context of new samples tests from a cognitively stable individual.  In principal, we could fit the latent model using the pairs of observations, however, in our framework, this would require using a  high dimensional output classification algorithm with $(N+1)^2$ classes $\hat p(y_1,y_2|x)$.

\section{Simulations} \label{sec:Simulations}

\subsection{Intrinsic Variability Matching}

We illustrate the example of intrinsic variability matching as a method of selecting the correct model.  In this setting we must sample twice from the model observed distribution for a single latent variable sample $\gamma_i$. We repeat the following process to sample $100$ pairs of observed $Y$'s. 
\al{
    \gamma_i &\sim \text{Beta}(12,5)  &Y_{i1},Y_{i2}|\gamma_i \stackrel{iid}{\sim} MKM(K = \text{Gaussian}, h = 2, \gamma = \gamma_i) 
}

We repeat the sampling procedure $30$ times fitting the latent model on the first sample, and computing the total variation distance between the model implied intrinsic variability and the sample intrinsic variability.

\begin{figure}[htp!]
    \centering
    \includegraphics[width = 0.9\textwidth]{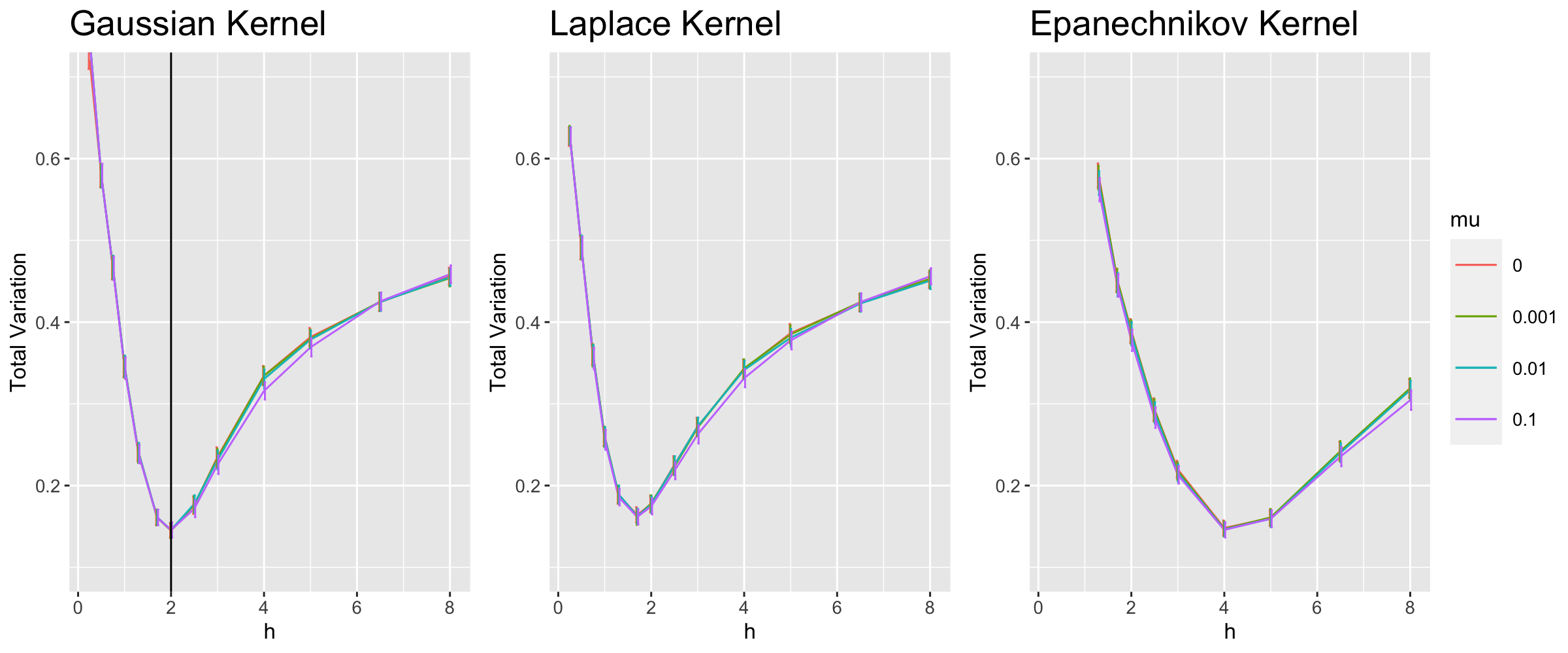}
    \caption{Intrinsic Variability Simulations.  Vertical black line denotes true data generating model.}
    \label{fig:intrinic_sim}
\end{figure}

We find in figure \ref{fig:intrinic_sim} this method tends to pick out the measurement assumption model quite effectively and the average total variation distance was smallest under the correct model. An interesting phenomena to observe is the fact that the regularization makes a relatively small impact on the intrinsic variability. This proves to be valuable in decoupling the selection of $\mu$ from $p_A$.

\subsection{Selecting the regularization tuning parameter}

After selecting a measurement assumption model we will select the regularization parameter. As illustrated in section \ref{sec:reg_selection}.  For a given $\mu$ we fit the latent model on $\hat p_1$, the empirical distribution of first scores. We then pick $\mu$ such that the two observation likelihood is maximized. We plot this procedure for 100 simulated datasets in Figure \ref{fig:reg_selection_sim}.

\begin{figure}[htp!]
    \centering
    \includegraphics[height = 0.25\textheight]{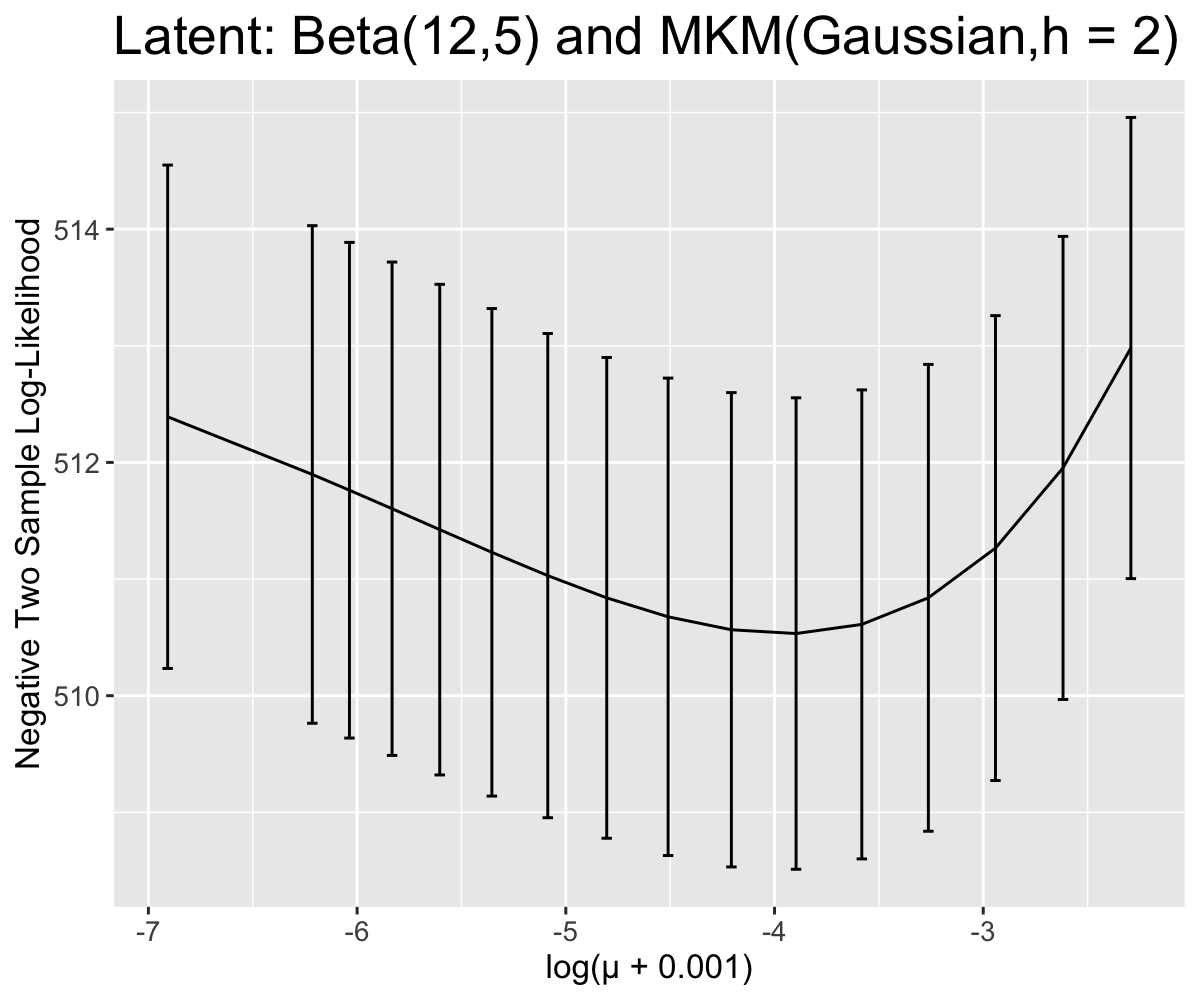}
    \caption{Regularization parameter $\mu$ selection. Including some regularization indeed seems to improve the performance on the second observation, with an optimal $\mu = 0.0193$ when results are averaged over all simulations.  The logarithm in the X-axis is natural logarithm.}
    \label{fig:reg_selection_sim}
\end{figure} 

We next turn to the main goal which is establishing an estimate for $\hat p(z|y)$ in the setting when $y,z$ are never observed together. 

\subsection{Performance of score conversion}

We now return to the main problem, estimating $p_0(z|y)$ in the case of a common latent quantile. We illustrate the performance on the selection procedure as a way to construct a good estimate of $p_0(y|x)$.  

We simulate a set of harmonizable variables under the following model. We suppose $F$ is the CDF of a $\text{Beta}(12,5)$ random variable and $G$ is the CDF of a $\text{Beta}(6,6)$.  Further we let $P_{A_y}(y|\gamma)$ be a $MKM(K = \text{Gaussian}, h = 2)$ model and $P_{A_z}(z|\zeta)$ be a $MKM(K = \text{Laplace}, h = 1)$.  
\al{
    \omega_i \sim U(0,1),& \quad \gamma_i = F^{-1}(\omega_i), \quad  \zeta_i = G^{-1}(\omega_i) \\
    Y_i \sim MKM(K = \text{Gaussian},& h = 2, \gamma = \gamma_i), \quad Z_i \sim MKM(K = \text{Laplace}, h = 1, \zeta = \zeta_i) 
}
Since the true model is known, we compute the population cross entropy as defined below and assess performance of an estimated $\hat p(z|y)$. 
\begin{align*}
    CE_{pop}(\hat p) &= \sum_{y = 0}^{N_y} p_0(y)\sum_{z = 0}^{N_z} -p_0(z|y)\log(\hat p(z|y)) = \sum_{y = 0}^{N_y} \sum_{z = 0}^{N_z} -p_0(y,z)\log(\hat p(z|y))
\end{align*}

We then pick a set of bandwidths and kernels and plot the population cross entropy. Since the true model is known, we do not include compact support kernels since except for extremely large bandwidths, we know that the population cross entropy will be infinite.  We first fit the latent model on $Y:$ $p_{M_Y}$ with the value of $\mu$ selected by our procedure.  We plot the cross entropy as a function of the bandwidth ($h_z$), kernel ($K_z$) and regularization parameter ($\mu_z$).  We also compare this conversion to the completely unregularized case ($\mu_Y = \mu_Z = 0$).  We plot these results in Figure \ref{fig:conversion_ce_sim} averaged over 10 monte-carlo simulations. 

\begin{figure}[htp!]
    \centering
    \includegraphics[height = 0.25\textheight, width = 0.8\textwidth]{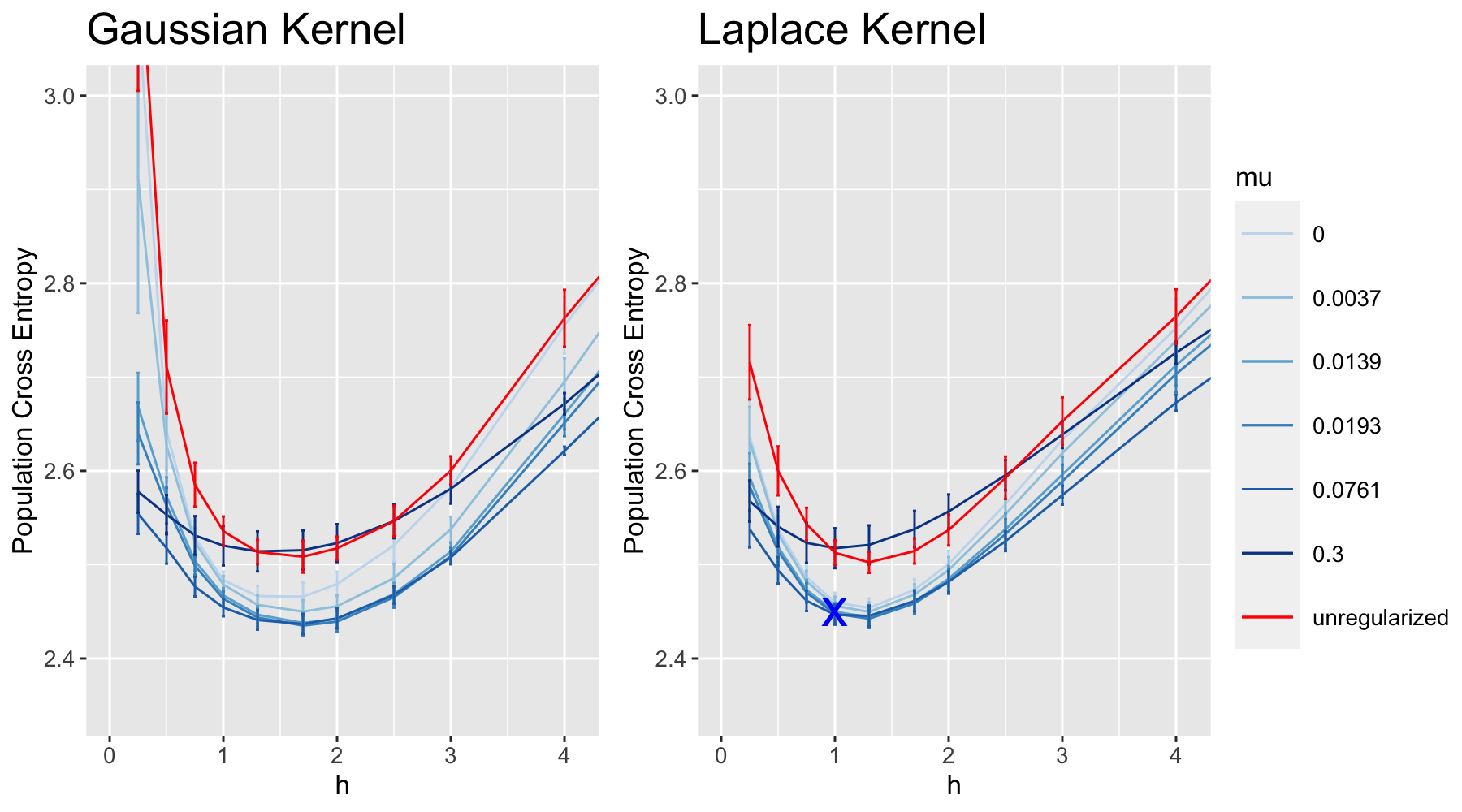}
    \caption{Conversion cross entropy as a function of model parameters and hyper-parameters for $Z$. Blue X indicates model selection based on intrinsic variability matching and tuning parameter selection of $\mu_z$}
    \label{fig:conversion_ce_sim}
\end{figure}

We firstly observe that the completely unregularized case results in a poor performance in the conversion. In this case $\hat \phi $ is a step function (Since must use the generalized inverse function for $\hat G^{-1}$), which may be associated with a poorer estimate of $p_0(z|y)$.  It is further apparent that if regularization is included only for the $Y$ branch, this partially alleviates the problem. We see however, in Figure \ref{fig:conversion_ce_sim} that a small amount of regularization on $Z$ improves the estimate of $\hat p(z|y)$, though the results are relatively robust to the choice of $\mu_z$ until $\mu_z$ is very large, after which the conversion performance suffers.

\subsection{Feasibility Tests}

We next simulated from the same model 300 times and plot the associated p-values from the first and second order feasibility tests of a set of measurement models with the associated kernels and bandwidths. We plot the first order feasibility test simulations in Figure \ref{fig:feasibility_sim1} and second order in Figure \ref{fig:feasibility_sim2}. 
\al{
    \gamma_i &\sim \text{Beta}(12,5) &Y_{i1},Y_{i2}|\gamma_i \stackrel{iid}{\sim} MKM(K = \text{Gaussian}, h = 2, \gamma = \gamma_i) 
}

\begin{figure}[h!]
\centering
\subfigure[First order feasibility tests]{
\includegraphics[width = 0.9\textwidth]{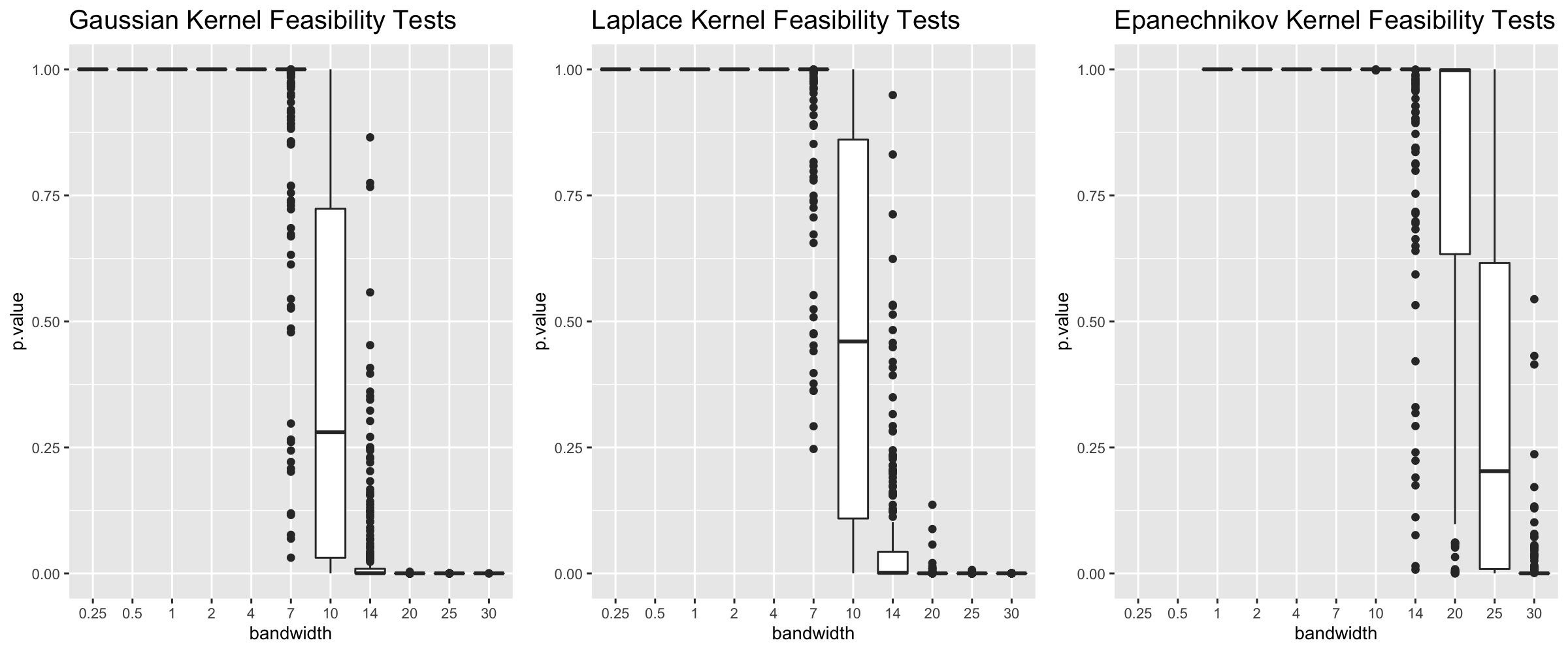}
\label{fig:feasibility_sim1}}
\\
\subfigure[Second order feasibility tests]{
\includegraphics[width = 0.9\textwidth]{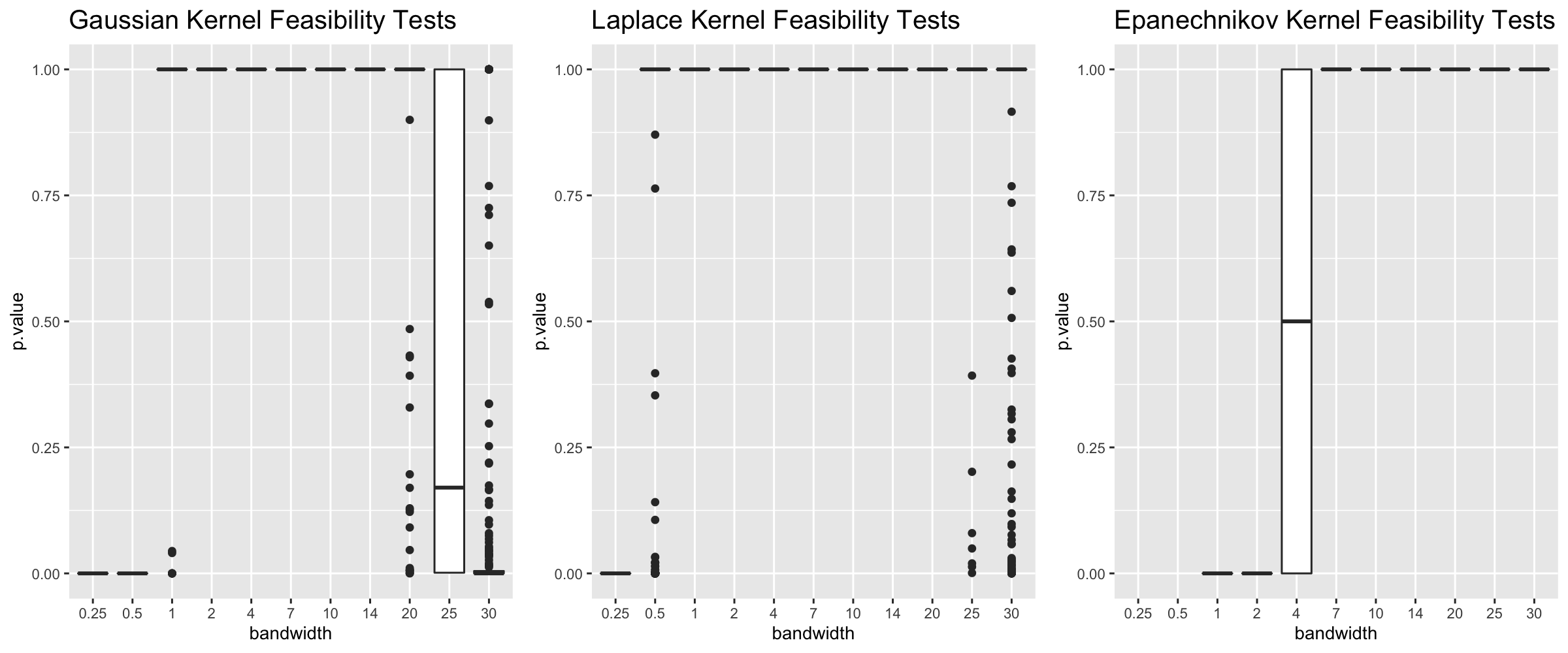}
\label{fig:feasibility_sim2}}
\caption{Feasibility test in simulations.  We find the first order test can discriminate when $h$ is too large, while the second order test can reject a model when $h$ is too small. }
\label{fig:feasibility_simulations}
\end{figure}

We note that there is a steep drop-off in the p-values as the bandwidth decreases.  We find that the first order feasibility test is able to pick out whether the bandwidth $h$ is too large, while the second order feasibility test detects when the bandwidth is too small.  This can be used in practice to select a narrow range for possible bandwidths in practice.  Measurement assumptions which are nearly correct however are impossible to distinguish with this test and all appear feasible for a moderate range of $h$. This method can therefore be used to detect violations of our assumption, however, we do note that many of these models may be similarly feasible.

\section{Application to the NACC data} \label{sec:application}
Our primary motivation has been to use this method for converting scores in the NACC Uniform dataset (UDS data freeze No. 47, Obtained July 2020). We consider the conversion between the proprietary C1 battery score, the Mini-Mental State Examination (MMSE) and the non-proprietary C2 battery score the Montreal Cognitive Assessment (MOCA). Both scores have a range of $\{0,\dots, 30\}$. For our study, we only include individuals who are cognitively normal as indicated by their global CDR (Clinical Dementia Rating) score of 0 and ages between 60 and 85 at the first visit.  We have a sample of $11194$ individuals having a recorded MMSE score and $6898$ with a MOCA. We consider a training set of first visits, as well as a validation set of second visits within 500 days of the first as a validation, intrinsic variability matching set.  We have $7614$ and $4051$ follow-up visits within each of these tests respectively.  Lastly we have $420$ individuals as part of the crosswalk dataset, a group of individuals with both scores measured \citep{Monsell2016ResultsStudy}.  Since the crosswalk dataset is comparably very small, learning the joint distribution is infeasible.  Instead we utilize the harmonizability assumptions which allow us to use the whole training dataset to estimate the latent distribution. We reserve the crosswalk dataset to verify the performance of converting scores. 

A simple conditional regression model is used to estimate the conditional density as a function of age, sex and education level. We discretize education level to $<16$ and $\geq 16$ years which represents approximately whether or not the individual has received a college degree, leading to 4 categories within the population. We then use a simple conditional distribution estimator for an age range $w$:  

\al{
    \hat p(y|x) &= \frac{\sum_{i = 1}^n I(y = Y_i)I(|x_{age} - X_{age, i}| \leq w)I(x_{group} = X_{group, i})}{\sum_{i = 1}^n I(|x_{age} - X_{age, i}| \leq w)I(x_{group} = X_{group, i})}
}

Using our model selection procedure outlined in section \ref{sec:error_select} we select the conditional models $P_A$ for each score. By our method we select the binomial model for the MMSE and the Laplace $h = 1.34$ model for the MOCA model, though the Gaussian and binomial models produced achieved very similar distance to the intrinsic variability distributions for the MOCA.  We find all 4 of the above models obtain a p-value of $1$ for the feasibility tests. Using the procedure outlined in section \ref{sec:reg_selection} we select $\mu_Y = 0.0055, \mu_Z = 0.0113$ for our model.  Further details are included in the supplementary material section \ref{supp:sec:nacc_analysis}.

{\bf Validation based on the crosswalk study.}
To validate our method against alternatives, we consider a prediction problem on the crosswalk study, a small group of 420 cognitively normal individuals who completed both the MOCA and MMSE cognitive tests \citep{Monsell2016ResultsStudy}. We label these observations $i \in \{1, \dots, n_{cw}\}$ with observations $(X_{i}^{cw}, Y_i^{cw}, Z_i^{cw})$

We use this dataset as a test set our methods against the existing procedure.  The current standard method for data harmonization is a simple normal $Z$-score matching procedure
\al{
    \hat Z &= \frac{\hat \sigma_Z}{\hat \sigma_Y}(Y - \hat \mu_Y) + \hat \mu_Z 
}
where $\hat \mu$ and $\hat \sigma$ are the sample mean and standard deviation estimates from the corresponding $Y$ and $Z$ scores.  This is also rounded to the nearest score in practice. Though this conversion is deterministic, we can introduce a minor modification to compare it to our estimates of $p_0(z|x,y)$.  Since this method assumes normality of the score distributions, we add normal random noise to $\hat Z$, $N(0,\hat \sigma^2_Z)$ and round to the nearest score. 

Although our goal is to produce an estimate of $p_0(z|x,y)$, the small crosswalk dataset for which we have complete observations from $(X_i, Y_i, Z_i)$ is insufficient to construct any precise estimate of $p_0(z|x,y)$. This is what necessitated our original assumptions of harmonizable observations $Y,Z$ allowing us to utilize the larger data sets that only include a single score $Y$ or $Z$.   We used the learned $p_{\vect{M}(\vect{\mu}), \vect{A}}(z|y,x)$  based on learning the latent marginals and converting via our mapping $\phi_{\gamma \to \zeta}$.  We  use this crosswalk data as a validation set to assess the quality of the score conversion
using the sample cross entropy ($CE$)
\begin{equation} \label{eq:ce}
    CE(p_{\vect{M}(\vect{\mu}), \vect{A}}) = -\sum_{i = 1}^{n_{cw}} \log\left( \int p_{A_2}(Z_i^{(cw)}|\phi_{\gamma \to \zeta, (\vect{\mu})}(\gamma|X_i^{(cw)}))\frac{p_{A_1}(Y_i^{(cw)}|\gamma) p_{M_1(\mu_1)}(\gamma|X_i^{(cw)})}{p_{M_1(\mu_1) A_1}(Y_i^{(cw)}|X_i^{(cw)})} d\gamma\right) 
\end{equation}
where $\vect{M}(\vect{\mu}), \vect{A}$ indicate that this conversion depends both on both branches of the models. In practice, we can compute the integral in equation \eqref{eq:ce} using a binned approximation of the integral.

{\bf Performance on the NACC data.}
We next outline the performance of the conversion estimator.  We select the model on $Y$ using the previous techniques and find binomial measurement assumption model with $\mu_Y = 0.0055$ is optimal.  For the $Z$ model we select a Laplace kernel with a bandwidth of $h = 1.34$ and $\mu_Z = 0.0113$ to be optimal. Once again to display the effect of the measurement model on conversion, we fix the measurement model and smoothing parameter on $Y$, then pick a variety of models on $Z$ with different regularization parameters $\mu_Z$ and measurement assumption models $p_{A_Z}(z|\zeta)$.  In the supplementary material, we also introduce a parametric model to compare along with the modified Z-score matching method, and the completely unregularized models where $\mu_Y = 0, \mu_Z = 0$.  

\begin{figure}[htb!]
    \centering
    \includegraphics[height = 0.25\textheight]{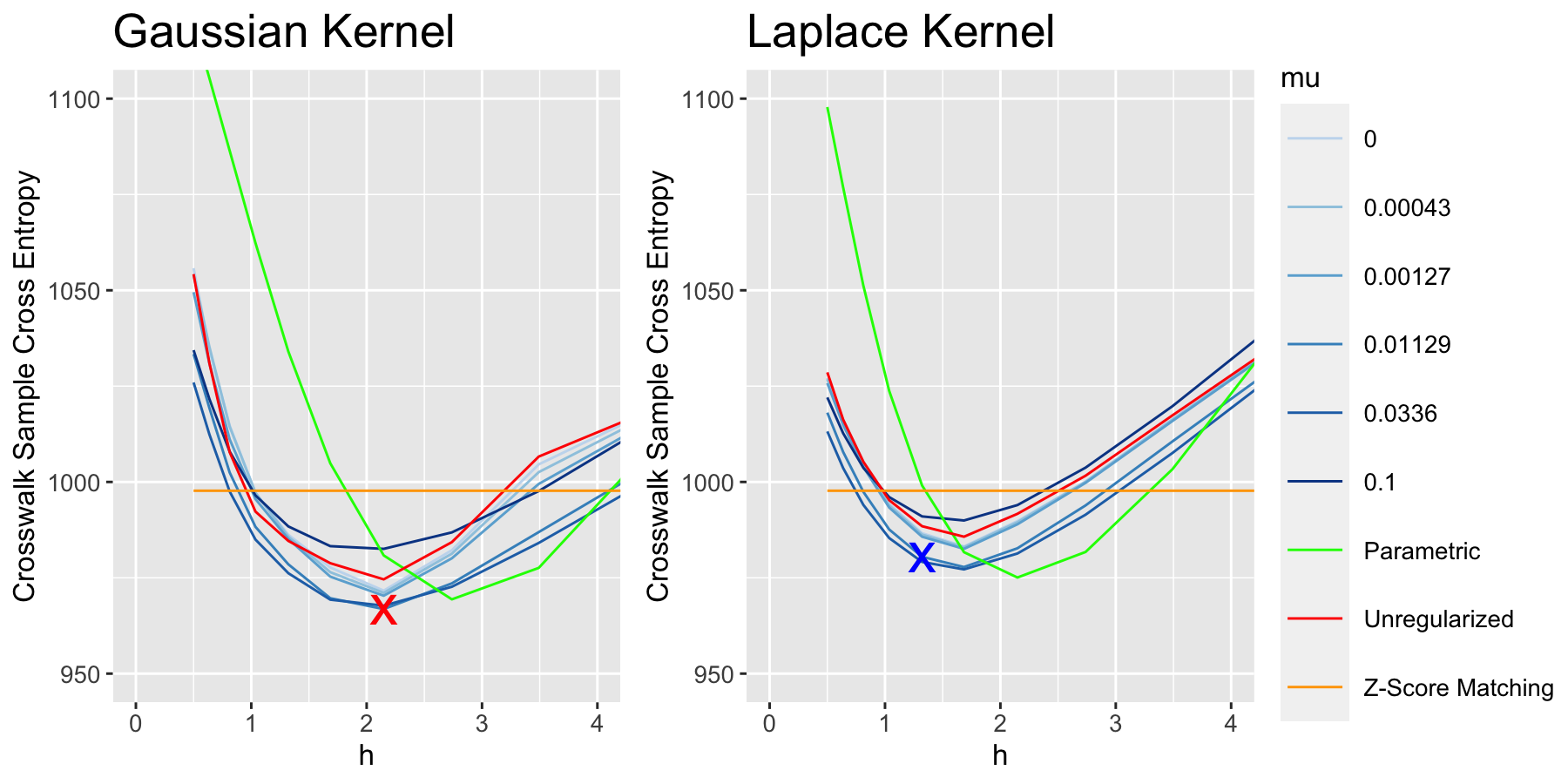}
    \caption{Conversion from MMSE to MOCA as a function of the parameters $\mu_Z$ as well as the measurement assumption model $p_{A_Z}$.  The blue cross indicates the model chosen by our procedure and the red cross indicated the empirical optimal model for conversion. }
    \label{fig:MMSE_MOCA_conversion}
\end{figure}

From figure \ref{fig:MMSE_MOCA_conversion}, we find that our method of selecting a model (blue cross, cross entropy: $980.4$) is performs quite well, and the model with the lowest cross entropy of all was the Gaussian $h = 2.14$, $\mu = 0.0113$ (red cross, cross entropy: $966.8$).  

We notice that the non-parametric approach outperforms the parametric and naive Z-score methods for the particular conditional model selected, as well as for the overall best model for converting scores. The parametric model does not perform optimally under the selected measurement model, but best under more dispersed models (large $h$). Additionally, we find that with a small amount of regularization, our method outperforms the un-regularized model and the conversion is relatively robust to the choice of regularization parameter, until a very large value is chosen ($\mu_Z \geq 0.1$). 

In applications where one can observe both scores, selection of $\vect{\mu}$ and $p_{\vect{A}}$ can instead be done through performance on the conversion task.  This analysis provides a justification for our method of selecting these models without these complete cases, so long as the harmonizbility assumption is reasonable.

\section{Discussion}

In this paper we introduced a framework for the problem of data harmonization, inspired by a problem in Alzheimer's and dementia research.  We connect the problem to the existing research in non-parametric mixing density estimation.   

It may be true independence assumption $A\indep (\Omega, Y,Z)|X$ is not satisfied in practice, so the latent distribution would be of $p_{M_Y}(\gamma|x, a = 0)$ and  $p_{M_Z}(\zeta|x, a = 1)$ that depends in $a$.  In principal, which test is observed may depend on the relative quantile $\Omega$ of an individual. At the NACC data, the change in tests were due to switching from a proprietary to non-proprietary methods, so we do not believe this to be an issue, as this is independent from the cognitive ability of the participants. Though in principal one could develop a simple sensitivity analysis procedure.  If $a$ is some offset value one could use the conversion function: $\phi_{sens. \gamma \to \zeta}(\gamma|x) = G^{-1}(F(\gamma|x) + a|x)$ though one would have to be mindful of the boundary conditions for the selected values of $a$. 

By theorem \ref{thm:cdf} the our harmonizability assumption implies an exact form of the conversion between latent variables.  This assumption is strong in that the two latent variables are actually measuring the exact same trait.  In reality, these traits may not be the deterministically related, but rather highly correlated.   In this case, the mapping we are learning is really the optimal transport map (under the $c(x,y) = |x - y|$ cost function) between the continuous latent trait models $p_{M_Y}$ and $p_{M_Z}$ \citep{Villani2003TopicsTransportation}. 
Even under misspecification of this assumption, this may help explain good performance of converting scores on real world data. 

Our methods draw similarities to semi-parametric factor analysis \citep{Joreskog2001FactorApproaches, Gruhl2013AVolumes}, and item response theory \citep{Johnson2007ModelingSplines,Woods2006ItemDensities,Paganin2021ComputationalModels}. These approaches often use parametric models for a latent trait distribution, and a non-parametric link function, relating the observed data to the latent data (or vice versa). Our approach is more flexible that we allows a non-parametric model of the latent trait distribution $p_M$, however this requires specification of how a conditional distribution of the observed variables given the latent trait $p_A(y|\gamma)$.  As is always the case in nonparametric mixture models we have a non-identifiability problem when the support of the observed distribution is finite \citep{Lindsay1995MixtureApplications}. In a more theoretical study of a similar problem,  \cite{Tian2017LearningParameters} and \cite{Vinayak2019MaximumParameters} proved that the NPMLE and a moment matching method recovers the true latent distribution within $\mathcal{O}_P(\frac{1}{N})$ in terms of the Wasserstein 1 distance \citep{Tian2017LearningParameters, Vinayak2019MaximumParameters} in the case where the conditional distribution $p_A$ is a binomial distribution.

We present a preliminary approach to deal with this particular structure of missing data problem. However there are many extensions of interest.  In particular to longitudinal and multivariate data harmonization, as well as theoretical questions involving recovering a latent distribution $P_M$ under other conditional models $p_A$.  Such a model will be extremely important for further statistical applications such as imputation of missing scores, change point detection of mild cognitive impairment and classification of neuro-degenerative disease via mixture models on the test scores.



\bibliography{references}{}

\begin{thebibliography}{55}

\bibitem[\protect\citeauthoryear{Andersen and
  Andersen}{2000}]{Andersen2000TheAlgorithm}
\begin{bincollection}[author]
\bauthor{\bsnm{Andersen},~\bfnm{Erling~D.}\binits{E.~D.}} \AND
  \bauthor{\bsnm{Andersen},~\bfnm{Knud~D.}\binits{K.~D.}}
(\byear{2000}).
\btitle{{The Mosek Interior Point Optimizer for Linear Programming: An
  Implementation of the Homogeneous Algorithm}}.
\bpages{197--232}.
\bpublisher{Springer, Boston, MA}.
\bdoi{10.1007/978-1-4757-3216-0{\_}8}
\end{bincollection}
\endbibitem

\bibitem[\protect\citeauthoryear{Basulto-Elias
  et~al.}{2021}]{Basulto-Elias2021BivariateData}
\begin{barticle}[author]
\bauthor{\bsnm{Basulto-Elias},~\bfnm{Guillermo}\binits{G.}},
  \bauthor{\bsnm{Carriquiry},~\bfnm{Alicia~L.}\binits{A.~L.}},
  \bauthor{\bsnm{De~Brabanter},~\bfnm{Kris}\binits{K.}} \AND
  \bauthor{\bsnm{Nordman},~\bfnm{Daniel~J.}\binits{D.~J.}}
(\byear{2021}).
\btitle{{Bivariate Kernel Deconvolution with Panel Data}}.
\bjournal{Sankhya B}
\bvolume{83}
\bpages{122--151}.
\bdoi{10.1007/s13571-020-00226-x}
\end{barticle}
\endbibitem

\bibitem[\protect\citeauthoryear{Bell, Lechowicz and
  Waterway}{2000}]{Bell2000EnvironmentalSedges}
\begin{barticle}[author]
\bauthor{\bsnm{Bell},~\bfnm{Graham}\binits{G.}},
  \bauthor{\bsnm{Lechowicz},~\bfnm{Martin~J.}\binits{M.~J.}} \AND
  \bauthor{\bsnm{Waterway},~\bfnm{Marcia~J.}\binits{M.~J.}}
(\byear{2000}).
\btitle{{Environmental heterogeneity and species diversity of forest sedges}}.
\bjournal{Journal of Ecology}
\bvolume{88}
\bpages{67--87}.
\bdoi{10.1046/j.1365-2745.2000.00427.x}
\end{barticle}
\endbibitem

\bibitem[\protect\citeauthoryear{Besser et~al.}{2018}]{Besser2018VersionSet}
\begin{bmisc}[author]
\bauthor{\bsnm{Besser},~\bfnm{Lilah}\binits{L.}},
  \bauthor{\bsnm{Kukull},~\bfnm{Walter}\binits{W.}},
  \bauthor{\bsnm{Knopman},~\bfnm{David~S.}\binits{D.~S.}},
  \bauthor{\bsnm{Chui},~\bfnm{Helena}\binits{H.}},
  \bauthor{\bsnm{Galasko},~\bfnm{Douglas}\binits{D.}},
  \bauthor{\bsnm{Weintraub},~\bfnm{Sandra}\binits{S.}},
  \bauthor{\bsnm{Jicha},~\bfnm{Gregory}\binits{G.}},
  \bauthor{\bsnm{Carlsson},~\bfnm{Cynthia}\binits{C.}},
  \bauthor{\bsnm{Burns},~\bfnm{Jeffrey}\binits{J.}},
  \bauthor{\bsnm{Quinn},~\bfnm{Joseph}\binits{J.}},
  \bauthor{\bsnm{Sweet},~\bfnm{Robert~A.}\binits{R.~A.}},
  \bauthor{\bsnm{Rascovsky},~\bfnm{Katya}\binits{K.}},
  \bauthor{\bsnm{Teylan},~\bfnm{Merilee}\binits{M.}},
  \bauthor{\bsnm{Beekly},~\bfnm{Duane}\binits{D.}},
  \bauthor{\bsnm{Thomas},~\bfnm{George}\binits{G.}},
  \bauthor{\bsnm{Bollenbeck},~\bfnm{Mark}\binits{M.}},
  \bauthor{\bsnm{Monsell},~\bfnm{Sarah}\binits{S.}},
  \bauthor{\bsnm{Mock},~\bfnm{Charles}\binits{C.}},
  \bauthor{\bsnm{Zhou},~\bfnm{Xiao~Hua}\binits{X.~H.}},
  \bauthor{\bsnm{Thomas},~\bfnm{Nicole}\binits{N.}},
  \bauthor{\bsnm{Robichaud},~\bfnm{Elizabeth}\binits{E.}},
  \bauthor{\bsnm{Dean},~\bfnm{Margaret}\binits{M.}},
  \bauthor{\bsnm{Hubbard},~\bfnm{Janene}\binits{J.}},
  \bauthor{\bsnm{Jacka},~\bfnm{Mary}\binits{M.}},
  \bauthor{\bsnm{Schwabe-Fry},~\bfnm{Kristen}\binits{K.}},
  \bauthor{\bsnm{Wu},~\bfnm{Joylee}\binits{J.}},
  \bauthor{\bsnm{Phelps},~\bfnm{Creighton}\binits{C.}} \AND
  \bauthor{\bsnm{Morris},~\bfnm{John~C.}\binits{J.~C.}}
(\byear{2018}).
\btitle{{Version 3 of the national Alzheimer's coordinating center's uniform
  data set}}.
\bdoi{10.1097/WAD.0000000000000279}
\end{bmisc}
\endbibitem

\bibitem[\protect\citeauthoryear{Billingsley}{2013}]{Billingsley2013ConvergenceMeasures}
\begin{bbook}[author]
\bauthor{\bsnm{Billingsley},~\bfnm{Patrick}\binits{P.}}
(\byear{2013}).
\btitle{{Convergence of probability measures }}.
\bpublisher{John Wiley {\textbackslash}{\&} Sons}.
\end{bbook}
\endbibitem

\bibitem[\protect\citeauthoryear{Bishop}{2004}]{Bishop2004PatternBishop}
\begin{bbook}[author]
\bauthor{\bsnm{Bishop},~\bfnm{Christopher~M}\binits{C.~M.}}
(\byear{2004}).
\btitle{{Pattern Recognition and Machine Learning Chris Bishop}}
\bvolume{27}.
\end{bbook}
\endbibitem

\bibitem[\protect\citeauthoryear{Chae, Martin and
  Walker}{2017}]{Chae2017FastDensity}
\begin{barticle}[author]
\bauthor{\bsnm{Chae},~\bfnm{Minwoo}\binits{M.}},
  \bauthor{\bsnm{Martin},~\bfnm{Ryan}\binits{R.}} \AND
  \bauthor{\bsnm{Walker},~\bfnm{Stephen~G.}\binits{S.~G.}}
(\byear{2017}).
\btitle{{Fast nonparametric near-maximum likelihood estimation of a mixing
  density}}.
\bjournal{Statistics and Probability Letters}
\bvolume{140}
\bpages{142--146}.
\bdoi{10.1016/j.spl.2018.05.012}
\end{barticle}
\endbibitem

\bibitem[\protect\citeauthoryear{Chernozhukov and
  Hansen}{2006}]{Chernozhukov2006InstrumentalModels}
\begin{barticle}[author]
\bauthor{\bsnm{Chernozhukov},~\bfnm{Victor}\binits{V.}} \AND
  \bauthor{\bsnm{Hansen},~\bfnm{Christian}\binits{C.}}
(\byear{2006}).
\btitle{{Instrumental quantile regression inference for structural and
  treatment effect models}}.
\bjournal{Journal of Econometrics}
\bvolume{132}
\bpages{491--525}.
\bdoi{10.1016/j.jeconom.2005.02.009}
\end{barticle}
\endbibitem

\bibitem[\protect\citeauthoryear{Chung and
  Lindsay}{2015}]{Chung2015ConvergenceDistributions}
\begin{barticle}[author]
\bauthor{\bsnm{Chung},~\bfnm{Yeojin}\binits{Y.}} \AND
  \bauthor{\bsnm{Lindsay},~\bfnm{Bruce~G.}\binits{B.~G.}}
(\byear{2015}).
\btitle{{Convergence of the EM algorithm for continuous mixing distributions}}.
\bjournal{Statistics and Probability Letters}
\bvolume{96}
\bpages{190--195}.
\bdoi{10.1016/j.spl.2014.09.021}
\end{barticle}
\endbibitem

\bibitem[\protect\citeauthoryear{Cover}{1984}]{Cover1984AnReturn}
\begin{barticle}[author]
\bauthor{\bsnm{Cover},~\bfnm{Thomas~M.}\binits{T.~M.}}
(\byear{1984}).
\btitle{{An Algorithm for Maximizing Expected Log Investment Return}}.
\bjournal{IEEE Transactions on Information Theory}
\bvolume{30}
\bpages{369--373}.
\bdoi{10.1109/TIT.1984.1056869}
\end{barticle}
\endbibitem

\bibitem[\protect\citeauthoryear{Dempster, Laird and
  Rubin}{1977}]{Dempster1977MaximumAlgorithm}
\begin{barticle}[author]
\bauthor{\bsnm{Dempster},~\bfnm{A.~P.}\binits{A.~P.}},
  \bauthor{\bsnm{Laird},~\bfnm{N.~M.}\binits{N.~M.}} \AND
  \bauthor{\bsnm{Rubin},~\bfnm{D.~B.}\binits{D.~B.}}
(\byear{1977}).
\btitle{{Maximum Likelihood from Incomplete Data Via the <i>EM</i> Algorithm}}.
\bjournal{Journal of the Royal Statistical Society: Series B (Methodological)}
\bvolume{39}
\bpages{1--22}.
\bdoi{10.1111/j.2517-6161.1977.tb01600.x}
\end{barticle}
\endbibitem

\bibitem[\protect\citeauthoryear{Fu, Narasimhan and
  Boyd}{2020}]{Fu2020CVXR:Optimization}
\begin{barticle}[author]
\bauthor{\bsnm{Fu},~\bfnm{Anqi}\binits{A.}},
  \bauthor{\bsnm{Narasimhan},~\bfnm{Balasubramanian}\binits{B.}} \AND
  \bauthor{\bsnm{Boyd},~\bfnm{Stephen}\binits{S.}}
(\byear{2020}).
\btitle{{CVXR: An R package for disciplined convex optimization}}.
\bjournal{Journal of Statistical Software}
\bvolume{94}
\bpages{1--34}.
\bdoi{10.18637/jss.v094.i14}
\end{barticle}
\endbibitem

\bibitem[\protect\citeauthoryear{Griffith
  et~al.}{2013}]{Griffith2013HarmonizationMeta-Analysis}
\begin{bincollection}[author]
\bauthor{\bsnm{Griffith},~\bfnm{L}\binits{L.}}, \bauthor{\bparticle{van~den}
  \bsnm{Heuvel},~\bfnm{E}\binits{E.}},
  \bauthor{\bsnm{Fortier},~\bfnm{I}\binits{I.}},
  \bauthor{\bsnm{Hofer},~\bfnm{S}\binits{S.}},
  \bauthor{\bsnm{Raina},~\bfnm{P}\binits{P.}},
  \bauthor{\bsnm{Sohel},~\bfnm{N}\binits{N.}},
  \bauthor{\bsnm{Payette},~\bfnm{H}\binits{H.}},
  \bauthor{\bsnm{Wolfson},~\bfnm{C}\binits{C.}} \AND
  \bauthor{\bsnm{Belleville},~\bfnm{S}\binits{S.}}
(\byear{2013}).
\btitle{{Harmonization of Cognitive Measures in Individual Participant Data and
  Aggregate Data Meta-Analysis}}.
In \bbooktitle{Harmonization of Cognitive Measures in Individual Participant
  Data and Aggregate Data Meta-Analysis}
\bpublisher{Agency for Healthcare Research and Quality (US)},
  \baddress{Rockville (MD)}.
\end{bincollection}
\endbibitem

\bibitem[\protect\citeauthoryear{Gruhl, Erosheva and
  Crane}{2013}]{Gruhl2013AVolumes}
\begin{barticle}[author]
\bauthor{\bsnm{Gruhl},~\bfnm{Jonathan}\binits{J.}},
  \bauthor{\bsnm{Erosheva},~\bfnm{Elena~A.}\binits{E.~A.}} \AND
  \bauthor{\bsnm{Crane},~\bfnm{Paul~K.}\binits{P.~K.}}
(\byear{2013}).
\btitle{{A semiparametric approach to mixed outcome latent variable models:
  Estimating the association between cognition and regional brain volumes}}.
\bjournal{Annals of Applied Statistics}
\bvolume{7}
\bpages{2361--2383}.
\bdoi{10.1214/13-AOAS675}
\end{barticle}
\endbibitem

\bibitem[\protect\citeauthoryear{Guo and
  Richardson}{2021}]{Guo2021Chernoff-TypeEntropy}
\begin{barticle}[author]
\bauthor{\bsnm{Guo},~\bfnm{F.~Richard}\binits{F.~R.}} \AND
  \bauthor{\bsnm{Richardson},~\bfnm{Thomas~S.}\binits{T.~S.}}
(\byear{2021}).
\btitle{{Chernoff-Type Concentration of Empirical Probabilities in Relative
  Entropy}}.
\bjournal{IEEE Transactions on Information Theory}
\bvolume{67}
\bpages{549--558}.
\bdoi{10.1109/TIT.2020.3034539}
\end{barticle}
\endbibitem

\bibitem[\protect\citeauthoryear{Halmos}{2013}]{Halmos2013MeasureTheory}
\begin{bbook}[author]
\bauthor{\bsnm{Halmos},~\bfnm{Paul~R.}\binits{P.~R.}}
(\byear{2013}).
\btitle{{Measure Theory}}
\bvolume{18}.
\bpublisher{Springer}.
\bdoi{10.1007/978-1-4684-9440-2{\_}1}
\end{bbook}
\endbibitem

\bibitem[\protect\citeauthoryear{Johnson}{2007}]{Johnson2007ModelingSplines}
\begin{barticle}[author]
\bauthor{\bsnm{Johnson},~\bfnm{Matthew~S.}\binits{M.~S.}}
(\byear{2007}).
\btitle{{Modeling dichotomous item responses with free-knot splines}}.
\bjournal{Computational Statistics and Data Analysis}
\bvolume{51}
\bpages{4178--4192}.
\bdoi{10.1016/j.csda.2006.04.021}
\end{barticle}
\endbibitem

\bibitem[\protect\citeauthoryear{J{\"{o}}reskog and
  Moustaki}{2001}]{Joreskog2001FactorApproaches}
\begin{barticle}[author]
\bauthor{\bsnm{J{\"{o}}reskog},~\bfnm{Karl~G.}\binits{K.~G.}} \AND
  \bauthor{\bsnm{Moustaki},~\bfnm{Irini}\binits{I.}}
(\byear{2001}).
\btitle{{Factor analysis of ordinal variables: A comparison of three
  approaches}}.
\bjournal{Multivariate Behavioral Research}
\bvolume{36}
\bpages{347--387}.
\bdoi{10.1207/S15327906347-387}
\end{barticle}
\endbibitem

\bibitem[\protect\citeauthoryear{Kang et~al.}{2013}]{Kang2013TheApproach}
\begin{barticle}[author]
\bauthor{\bsnm{Kang},~\bfnm{Hyunseung}\binits{H.}},
  \bauthor{\bsnm{Kreuels},~\bfnm{Benno}\binits{B.}},
  \bauthor{\bsnm{Adjei},~\bfnm{Ohene}\binits{O.}},
  \bauthor{\bsnm{Krumkamp},~\bfnm{Ralf}\binits{R.}},
  \bauthor{\bsnm{May},~\bfnm{Jürgen}\binits{J.}} \AND
  \bauthor{\bsnm{Small},~\bfnm{Dylan~S.}\binits{D.~S.}}
(\byear{2013}).
\btitle{{The causal effect of malaria on stunting: A Mendelian randomization
  and matching approach}}.
\bjournal{International Journal of Epidemiology}
\bvolume{42}
\bpages{1390--1398}.
\bdoi{10.1093/ije/dyt116}
\end{barticle}
\endbibitem

\bibitem[\protect\citeauthoryear{Kato}{2012}]{Kato2012EstimationRegression}
\begin{barticle}[author]
\bauthor{\bsnm{Kato},~\bfnm{Kengo}\binits{K.}}
(\byear{2012}).
\btitle{{Estimation in functional linear quantile regression}}.
\bjournal{Annals of Statistics}
\bvolume{40}
\bpages{3108--3136}.
\bdoi{10.1214/12-AOS1066}
\end{barticle}
\endbibitem

\bibitem[\protect\citeauthoryear{Keele
  et~al.}{2019}]{Keele2019FalsificationOperate}
\begin{barticle}[author]
\bauthor{\bsnm{Keele},~\bfnm{Luke}\binits{L.}},
  \bauthor{\bsnm{Zhao},~\bfnm{Qingyuan}\binits{Q.}},
  \bauthor{\bsnm{Kelz},~\bfnm{Rachel~R.}\binits{R.~R.}} \AND
  \bauthor{\bsnm{Small},~\bfnm{Dylan}\binits{D.}}
(\byear{2019}).
\btitle{{Falsification Tests for Instrumental Variable Designs with an
  Application to Tendency to Operate}}.
\bjournal{Medical Care}
\bvolume{57}
\bpages{167--171}.
\bdoi{10.1097/MLR.0000000000001040}
\end{barticle}
\endbibitem

\bibitem[\protect\citeauthoryear{Kiefer and
  Wolfowitz}{1956}]{Kiefer1956ConsistencyParameters}
\begin{barticle}[author]
\bauthor{\bsnm{Kiefer},~\bfnm{J.}\binits{J.}} \AND
  \bauthor{\bsnm{Wolfowitz},~\bfnm{J.}\binits{J.}}
(\byear{1956}).
\btitle{{Consistency of the Maximum Likelihood Estimator in the Presence of
  Infinitely Many Incidental Parameters}}.
\bjournal{The Annals of Mathematical Statistics}
\bvolume{27}
\bpages{887--906}.
\bdoi{10.1214/aoms/1177728066}
\end{barticle}
\endbibitem

\bibitem[\protect\citeauthoryear{Kosmol and
  M{\"{u}}ller-Wichards}{2011}]{Kosmol2011OptimizationSpaces}
\begin{bbook}[author]
\bauthor{\bsnm{Kosmol},~\bfnm{Peter}\binits{P.}} \AND
  \bauthor{\bsnm{M{\"{u}}ller-Wichards},~\bfnm{Dieter}\binits{D.}}
(\byear{2011}).
\btitle{{Optimization in Function Spaces : With Stability Considerations in
  Orlicz Spaces}}
\bvolume{13}.
\bpublisher{De Gruyter}, \baddress{W{\"{u}}rzburg, Germany}.
\end{bbook}
\endbibitem

\bibitem[\protect\citeauthoryear{Laird}{1978}]{Laird1978NonparametricDistribution}
\begin{barticle}[author]
\bauthor{\bsnm{Laird},~\bfnm{Nan}\binits{N.}}
(\byear{1978}).
\btitle{{Nonparametric Maximum Likelihood Estimation of a Mixing
  Distribution}}.
\bjournal{Journal of the American Statistical Association}
\bvolume{73}
\bpages{805}.
\bdoi{10.2307/2286284}
\end{barticle}
\endbibitem

\bibitem[\protect\citeauthoryear{Laird}{1982}]{Laird1982EmpiricalPrior}
\begin{barticle}[author]
\bauthor{\bsnm{Laird},~\bfnm{Nan~M.}\binits{N.~M.}}
(\byear{1982}).
\btitle{{Empirical Bayes estimates using the nonparametric maximum likelihood
  estimate for the prior}}.
\bjournal{Journal of Statistical Computation and Simulation}
\bvolume{15}
\bpages{211--220}.
\bdoi{10.1080/00949658208810584}
\end{barticle}
\endbibitem

\bibitem[\protect\citeauthoryear{Leonard and
  Carroll}{1990}]{Leonard1990DeconvolutingEstimators}
\begin{barticle}[author]
\bauthor{\bsnm{Leonard},~\bfnm{Stefanski}\binits{S.}} \AND
  \bauthor{\bsnm{Carroll},~\bfnm{Raymond~J.}\binits{R.~J.}}
(\byear{1990}).
\btitle{{Deconvoluting Kernel Density Estimators}}.
\bjournal{Statistics}
\bvolume{21}
\bpages{169--184}.
\bdoi{10.1080/02331889008802238}
\end{barticle}
\endbibitem

\bibitem[\protect\citeauthoryear{Lindsay}{1983a}]{Lindsay1983TheFamily}
\begin{barticle}[author]
\bauthor{\bsnm{Lindsay},~\bfnm{Bruce~G.}\binits{B.~G.}}
(\byear{1983}a).
\btitle{{The Geometry of Mixture Likelihoods, Part II: The Exponential
  Family}}.
\bjournal{The Annals of Statistics}
\bvolume{11}
\bpages{783--792}.
\bdoi{10.1214/aos/1176346245}
\end{barticle}
\endbibitem

\bibitem[\protect\citeauthoryear{Lindsay}{1983b}]{Lindsay1983TheTheory}
\begin{barticle}[author]
\bauthor{\bsnm{Lindsay},~\bfnm{Bruce~G.}\binits{B.~G.}}
(\byear{1983}b).
\btitle{{The Geometry of Mixture Likelihoods: A General Theory}}.
\bjournal{The Annals of Statistics}
\bvolume{11}
\bpages{86--94}.
\bdoi{10.1214/aos/1176346059}
\end{barticle}
\endbibitem

\bibitem[\protect\citeauthoryear{Lindsay}{1995}]{Lindsay1995MixtureApplications}
\begin{binproceedings}[author]
\bauthor{\bsnm{Lindsay},~\bfnm{Bruce~G}\binits{B.~G.}}
(\byear{1995}).
\btitle{{Mixture models: theory, geometry and applications}}.
In \bbooktitle{NSF-CBMS regional conference series in probability and
  statistics}
\bpages{i-163}.
\bpublisher{JSTOR}.
\end{binproceedings}
\endbibitem

\bibitem[\protect\citeauthoryear{Lindsay, Clogg and
  Grego}{1991}]{Lindsay1991SemiparametricAnalysis}
\begin{barticle}[author]
\bauthor{\bsnm{Lindsay},~\bfnm{Bruce}\binits{B.}},
  \bauthor{\bsnm{Clogg},~\bfnm{Clifford~C.}\binits{C.~C.}} \AND
  \bauthor{\bsnm{Grego},~\bfnm{John}\binits{J.}}
(\byear{1991}).
\btitle{{Semiparametric Estimation in the Rasch Model and Related Exponential
  Response Models, Including a Simple Latent Class Model for Item Analysis}}.
\bjournal{Journal of the American Statistical Association}
\bvolume{86}
\bpages{96}.
\bdoi{10.2307/2289719}
\end{barticle}
\endbibitem

\bibitem[\protect\citeauthoryear{Liu, Levine and
  Zhu}{2009}]{Liu2009AMaximizationb}
\begin{barticle}[author]
\bauthor{\bsnm{Liu},~\bfnm{Lei}\binits{L.}},
  \bauthor{\bsnm{Levine},~\bfnm{Michael}\binits{M.}} \AND
  \bauthor{\bsnm{Zhu},~\bfnm{Yu}\binits{Y.}}
(\byear{2009}).
\btitle{{A functional EM algorithm for mixing density estimation via
  nonparametric penalized likelihood maximization}}.
\bjournal{Journal of Computational and Graphical Statistics}
\bvolume{18}
\bpages{481--504}.
\bdoi{10.1198/jcgs.2009.07111}
\end{barticle}
\endbibitem

\bibitem[\protect\citeauthoryear{Lord}{1965}]{Lord1965AApplications}
\begin{barticle}[author]
\bauthor{\bsnm{Lord},~\bfnm{Frederic~M.}\binits{F.~M.}}
(\byear{1965}).
\btitle{{A strong true-score theory, with applications}}.
\bjournal{Psychometrika}
\bvolume{30}
\bpages{239--270}.
\bdoi{10.1007/BF02289490}
\end{barticle}
\endbibitem

\bibitem[\protect\citeauthoryear{Lord}{1969}]{Lord1969EstimatingProblem}
\begin{barticle}[author]
\bauthor{\bsnm{Lord},~\bfnm{Frederic~M.}\binits{F.~M.}}
(\byear{1969}).
\btitle{{Estimating true-score distributions in psychological testing (an
  empirical bayes estimation problem)}}.
\bjournal{Psychometrika}
\bvolume{34}
\bpages{259--299}.
\bdoi{10.1007/BF02289358}
\end{barticle}
\endbibitem

\bibitem[\protect\citeauthoryear{Meredith}{1993}]{Meredith1993MeasurementInvariance}
\begin{barticle}[author]
\bauthor{\bsnm{Meredith},~\bfnm{William}\binits{W.}}
(\byear{1993}).
\btitle{{Measurement invariance, factor analysis and factorial invariance}}.
\bjournal{Psychometrika}
\bvolume{58}
\bpages{525--543}.
\bdoi{10.1007/BF02294825}
\end{barticle}
\endbibitem

\bibitem[\protect\citeauthoryear{Monsell
  et~al.}{2016}]{Monsell2016ResultsStudy}
\begin{barticle}[author]
\bauthor{\bsnm{Monsell},~\bfnm{Sarah~E.}\binits{S.~E.}},
  \bauthor{\bsnm{Dodge},~\bfnm{Hiroko~H.}\binits{H.~H.}},
  \bauthor{\bsnm{Zhou},~\bfnm{Xiao~Hua}\binits{X.~H.}},
  \bauthor{\bsnm{Bu},~\bfnm{Yunqi}\binits{Y.}},
  \bauthor{\bsnm{Besser},~\bfnm{Lilah~M.}\binits{L.~M.}},
  \bauthor{\bsnm{Mock},~\bfnm{Charles}\binits{C.}},
  \bauthor{\bsnm{Hawes},~\bfnm{Stephen~E.}\binits{S.~E.}},
  \bauthor{\bsnm{Kukull},~\bfnm{Walter~A.}\binits{W.~A.}},
  \bauthor{\bsnm{Weintraub},~\bfnm{Sandra}\binits{S.}},
  \bauthor{\bsnm{Ferris},~\bfnm{Steven}\binits{S.}},
  \bauthor{\bsnm{Kramer},~\bfnm{Joel}\binits{J.}},
  \bauthor{\bsnm{Loewenstein},~\bfnm{David}\binits{D.}},
  \bauthor{\bsnm{Lu},~\bfnm{Po}\binits{P.}},
  \bauthor{\bsnm{Giordani},~\bfnm{Bruno}\binits{B.}},
  \bauthor{\bsnm{Goldstein},~\bfnm{Felicia}\binits{F.}},
  \bauthor{\bsnm{Marson},~\bfnm{Dan}\binits{D.}},
  \bauthor{\bsnm{Morris},~\bfnm{John}\binits{J.}},
  \bauthor{\bsnm{Mungas},~\bfnm{Dan}\binits{D.}},
  \bauthor{\bsnm{Salmon},~\bfnm{David}\binits{D.}} \AND
  \bauthor{\bsnm{Welsh-Bohmer},~\bfnm{Kathleen}\binits{K.}}
(\byear{2016}).
\btitle{{Results from the NACC Uniform Data Set Neuropsychological Battery
  Crosswalk Study}}.
\bjournal{Alzheimer Disease and Associated Disorders}
\bvolume{30}
\bpages{134--139}.
\bdoi{10.1097/WAD.0000000000000111}
\end{barticle}
\endbibitem

\bibitem[\protect\citeauthoryear{Paganin
  et~al.}{2021}]{Paganin2021ComputationalModels}
\begin{barticle}[author]
\bauthor{\bsnm{Paganin},~\bfnm{Sally}\binits{S.}},
  \bauthor{\bsnm{Paciorek},~\bfnm{Christopher~J.}\binits{C.~J.}},
  \bauthor{\bsnm{Wehrhahn},~\bfnm{Claudia}\binits{C.}},
  \bauthor{\bsnm{Rodriguez},~\bfnm{Abel}\binits{A.}},
  \bauthor{\bsnm{Rabe-Hesketh},~\bfnm{Sophia}\binits{S.}} \AND
  \bauthor{\bparticle{de} \bsnm{Valpine},~\bfnm{Perry}\binits{P.}}
(\byear{2021}).
\btitle{{Computational methods for Bayesian semiparametric Item Response Theory
  models}}.
\end{barticle}
\endbibitem

\bibitem[\protect\citeauthoryear{Petersen and
  Hansen}{2020}]{Petersen2020TestingCopulas}
\begin{barticle}[author]
\bauthor{\bsnm{Petersen},~\bfnm{Lasse}\binits{L.}} \AND
  \bauthor{\bsnm{Hansen},~\bfnm{Niels~Richard}\binits{N.~R.}}
(\byear{2020}).
\btitle{{Testing Conditional Independence via Quantile Regression Based Partial
  Copulas}}.
\end{barticle}
\endbibitem

\bibitem[\protect\citeauthoryear{Rasch}{1966}]{Rasch1966AnAnalysis}
\begin{btechreport}[author]
\bauthor{\bsnm{Rasch},~\bfnm{Georg}\binits{G.}}
(\byear{1966}).
\btitle{{An Individualistic Approach to Item Analysis}}
\btype{Technical Report},
\bpublisher{Readings in Mathematical Social Science}.
\end{btechreport}
\endbibitem

\bibitem[\protect\citeauthoryear{Robins and
  Tsiatis}{1991}]{Robins1991CorrectingModels}
\begin{barticle}[author]
\bauthor{\bsnm{Robins},~\bfnm{James~M.}\binits{J.~M.}} \AND
  \bauthor{\bsnm{Tsiatis},~\bfnm{Anastasios~A.}\binits{A.~A.}}
(\byear{1991}).
\btitle{{Correcting for non-compliance in randomized trials using rank
  preserving structural failure time models}}.
\bjournal{Communications in Statistics - Theory and Methods}
\bvolume{20}
\bpages{2609--2631}.
\bdoi{10.1080/03610929108830654}
\end{barticle}
\endbibitem

\bibitem[\protect\citeauthoryear{Silverman
  et~al.}{1990a}]{Silverman1990ATomographyb}
\begin{barticle}[author]
\bauthor{\bsnm{Silverman},~\bfnm{B.~W.}\binits{B.~W.}},
  \bauthor{\bsnm{Jones},~\bfnm{M.~C.}\binits{M.~C.}},
  \bauthor{\bsnm{Wilson},~\bfnm{J.~D.}\binits{J.~D.}} \AND
  \bauthor{\bsnm{Nychka},~\bfnm{D.~W.}\binits{D.~W.}}
(\byear{1990}a).
\btitle{{A Smoothed Em Approach to Indirect Estimation Problems, with
  Particular Reference to Stereology and Emission Tomography}}.
\bjournal{Journal of the Royal Statistical Society: Series B (Methodological)}
\bvolume{52}
\bpages{271--303}.
\bdoi{10.1111/j.2517-6161.1990.tb01788.x}
\end{barticle}
\endbibitem

\bibitem[\protect\citeauthoryear{Silverman
  et~al.}{1990b}]{Silverman1990ATomography}
\begin{barticle}[author]
\bauthor{\bsnm{Silverman},~\bfnm{B.~W.}\binits{B.~W.}},
  \bauthor{\bsnm{Jones},~\bfnm{M.~C.}\binits{M.~C.}},
  \bauthor{\bsnm{Wilson},~\bfnm{J.~D.}\binits{J.~D.}} \AND
  \bauthor{\bsnm{Nychka},~\bfnm{D.~W.}\binits{D.~W.}}
(\byear{1990}b).
\btitle{{A Smoothed Em Approach to Indirect Estimation Problems, with
  Particular Reference to Stereology and Emission Tomography}}.
\bjournal{Journal of the Royal Statistical Society: Series B (Methodological)}
\bvolume{52}
\bpages{271--303}.
\bdoi{10.1111/j.2517-6161.1990.tb01788.x}
\end{barticle}
\endbibitem

\bibitem[\protect\citeauthoryear{Tian, Kong and
  Valiant}{2017}]{Tian2017LearningParameters}
\begin{btechreport}[author]
\bauthor{\bsnm{Tian},~\bfnm{Kevin}\binits{K.}},
  \bauthor{\bsnm{Kong},~\bfnm{Weihao}\binits{W.}} \AND
  \bauthor{\bsnm{Valiant},~\bfnm{Gregory}\binits{G.}}
(\byear{2017}).
\btitle{{Learning Populations of Parameters}}
\btype{Technical Report}.
\end{btechreport}
\endbibitem

\bibitem[\protect\citeauthoryear{Tibshirani}{1996}]{Tibshirani1996RegressionLasso}
\begin{barticle}[author]
\bauthor{\bsnm{Tibshirani},~\bfnm{Robert}\binits{R.}}
(\byear{1996}).
\btitle{{Regression Shrinkage and Selection Via the Lasso}}.
\bjournal{Journal of the Royal Statistical Society: Series B (Methodological)}
\bvolume{58}
\bpages{267--288}.
\bdoi{10.1111/j.2517-6161.1996.tb02080.x}
\end{barticle}
\endbibitem

\bibitem[\protect\citeauthoryear{Vaart}{1998}]{Vaart1998AsymptoticStatistics}
\begin{bbook}[author]
\bauthor{\bsnm{Vaart},~\bfnm{A.~W. van~der}\binits{A.~W. v.~d.}}
(\byear{1998}).
\btitle{{Asymptotic Statistics}}.
\bpublisher{Cambridge University Press}.
\bdoi{10.1017/cbo9780511802256}
\end{bbook}
\endbibitem

\bibitem[\protect\citeauthoryear{van~den Heuvel
  et~al.}{2020}]{vandenHeuvel2020LatentMemory}
\begin{barticle}[author]
\bauthor{\bparticle{van~den} \bsnm{Heuvel},~\bfnm{Edwin~R.}\binits{E.~R.}},
  \bauthor{\bsnm{Griffith},~\bfnm{Lauren~E.}\binits{L.~E.}},
  \bauthor{\bsnm{Sohel},~\bfnm{Nazmul}\binits{N.}},
  \bauthor{\bsnm{Fortier},~\bfnm{Isabel}\binits{I.}},
  \bauthor{\bsnm{Muniz-Terrera},~\bfnm{Graciela}\binits{G.}} \AND
  \bauthor{\bsnm{Raina},~\bfnm{Parminder}\binits{P.}}
(\byear{2020}).
\btitle{{Latent variable models for harmonization of test scores: A case study
  on memory}}.
\bjournal{Biometrical Journal}
\bvolume{62}
\bpages{34--52}.
\bdoi{10.1002/bimj.201800146}
\end{barticle}
\endbibitem

\bibitem[\protect\citeauthoryear{Van~Dermeulen and
  Scott}{2019}]{VanDermeulen2019AnModels}
\begin{barticle}[author]
\bauthor{\bsnm{Van~Dermeulen},~\bfnm{Robert~A.}\binits{R.~A.}} \AND
  \bauthor{\bsnm{Scott},~\bfnm{Clayton~D.}\binits{C.~D.}}
(\byear{2019}).
\btitle{{An operator theoretic approach to nonparametric mixture models}}.
\bjournal{Annals of Statistics}
\bvolume{47}
\bpages{2704--2733}.
\bdoi{10.1214/18-AOS1762}
\end{barticle}
\endbibitem

\bibitem[\protect\citeauthoryear{Vardi, Shepp and
  Kaufman}{1985}]{Vardi1985ATomography}
\begin{barticle}[author]
\bauthor{\bsnm{Vardi},~\bfnm{Y.}\binits{Y.}},
  \bauthor{\bsnm{Shepp},~\bfnm{L.~A.}\binits{L.~A.}} \AND
  \bauthor{\bsnm{Kaufman},~\bfnm{L.}\binits{L.}}
(\byear{1985}).
\btitle{{A statistical model for positron emission tomography}}.
\bjournal{Journal of the American Statistical Association}
\bvolume{80}
\bpages{8--20}.
\bdoi{10.1080/01621459.1985.10477119}
\end{barticle}
\endbibitem

\bibitem[\protect\citeauthoryear{Villani}{2003}]{Villani2003TopicsTransportation}
\begin{bbook}[author]
\bauthor{\bsnm{Villani},~\bfnm{Cédric}\binits{C.}}
(\byear{2003}).
\btitle{{Topics in Optimal Transportation}}.
\bseries{Graduate Studies in Mathematics}
\bvolume{58}.
\bpublisher{American Mathematical Society}, \baddress{Providence, Rhode
  Island}.
\bdoi{10.1090/gsm/058}
\end{bbook}
\endbibitem

\bibitem[\protect\citeauthoryear{Vinayak
  et~al.}{2019}]{Vinayak2019MaximumParameters}
\begin{barticle}[author]
\bauthor{\bsnm{Vinayak},~\bfnm{Ramya~Korlakai}\binits{R.~K.}},
  \bauthor{\bsnm{Kong},~\bfnm{Weihao}\binits{W.}},
  \bauthor{\bsnm{Valiant},~\bfnm{Gregory}\binits{G.}} \AND
  \bauthor{\bsnm{Kakade},~\bfnm{Sham~M.}\binits{S.~M.}}
(\byear{2019}).
\btitle{{Maximum Likelihood Estimation for Learning Populations of
  Parameters}}.
\bjournal{36th International Conference on Machine Learning, ICML 2019}
\bvolume{2019-June}
\bpages{11217--11226}.
\end{barticle}
\endbibitem

\bibitem[\protect\citeauthoryear{Wang, Robins and
  Richardson}{2017}]{Wang2017OnModel}
\begin{bmisc}[author]
\bauthor{\bsnm{Wang},~\bfnm{Linbo}\binits{L.}},
  \bauthor{\bsnm{Robins},~\bfnm{James~M.}\binits{J.~M.}} \AND
  \bauthor{\bsnm{Richardson},~\bfnm{Thomas~S.}\binits{T.~S.}}
(\byear{2017}).
\btitle{{On falsification of the binary instrumental variable model}}.
\bdoi{10.1093/biomet/asw064}
\end{bmisc}
\endbibitem

\bibitem[\protect\citeauthoryear{Weintraub
  et~al.}{2009}]{Weintraub2009TheBattery}
\begin{barticle}[author]
\bauthor{\bsnm{Weintraub},~\bfnm{Sandra}\binits{S.}},
  \bauthor{\bsnm{Salmon},~\bfnm{David}\binits{D.}},
  \bauthor{\bsnm{Mercaldo},~\bfnm{Nathaniel}\binits{N.}},
  \bauthor{\bsnm{Ferris},~\bfnm{Steven}\binits{S.}},
  \bauthor{\bsnm{Graff-Radford},~\bfnm{Neill~R.}\binits{N.~R.}},
  \bauthor{\bsnm{Chui},~\bfnm{Helena}\binits{H.}},
  \bauthor{\bsnm{Cummings},~\bfnm{Jeffrey}\binits{J.}},
  \bauthor{\bsnm{DeCarli},~\bfnm{Charles}\binits{C.}},
  \bauthor{\bsnm{Foster},~\bfnm{Norman~L.}\binits{N.~L.}},
  \bauthor{\bsnm{Galasko},~\bfnm{Douglas}\binits{D.}},
  \bauthor{\bsnm{Peskind},~\bfnm{Elaine}\binits{E.}},
  \bauthor{\bsnm{Dietrich},~\bfnm{Woodrow}\binits{W.}},
  \bauthor{\bsnm{Beekly},~\bfnm{Duane~L.}\binits{D.~L.}},
  \bauthor{\bsnm{Kukull},~\bfnm{Walter~A.}\binits{W.~A.}} \AND
  \bauthor{\bsnm{Morris},~\bfnm{John~C.}\binits{J.~C.}}
(\byear{2009}).
\btitle{{The Alzheimer's Disease Centers' Uniform Data Set (UDS): The
  neuropsychologic test battery}}.
\bjournal{Alzheimer Disease and Associated Disorders}
\bvolume{23}
\bpages{91--101}.
\bdoi{10.1097/WAD.0b013e318191c7dd}
\end{barticle}
\endbibitem

\bibitem[\protect\citeauthoryear{White et~al.}{1997}]{White1997ImpactTrial}
\begin{barticle}[author]
\bauthor{\bsnm{White},~\bfnm{Ian~R.}\binits{I.~R.}},
  \bauthor{\bsnm{Walker},~\bfnm{Sarah}\binits{S.}},
  \bauthor{\bsnm{Babiker},~\bfnm{Abdel~G.}\binits{A.~G.}} \AND
  \bauthor{\bsnm{Darbyshire},~\bfnm{Janet~H.}\binits{J.~H.}}
(\byear{1997}).
\btitle{{Impact of treatment changes on the interpretation of the Concorde
  trial}}.
\bjournal{AIDS}
\bvolume{11}
\bpages{999--1006}.
\bdoi{10.1097/00002030-199708000-00008}
\end{barticle}
\endbibitem

\bibitem[\protect\citeauthoryear{White
  et~al.}{1999}]{White1999Randomization-basedTrial}
\begin{barticle}[author]
\bauthor{\bsnm{White},~\bfnm{Ian~R.}\binits{I.~R.}},
  \bauthor{\bsnm{Babiker},~\bfnm{Abdel~G.}\binits{A.~G.}},
  \bauthor{\bsnm{Walker},~\bfnm{Sarah}\binits{S.}} \AND
  \bauthor{\bsnm{Darbyshire},~\bfnm{Janet~H.}\binits{J.~H.}}
(\byear{1999}).
\btitle{{Randomization-based methods for correcting for treatment changes:
  Examples from the Concorde trial}}.
\bjournal{Statistics in Medicine}
\bvolume{18}
\bpages{2617--2634}.
\bdoi{10.1002/(SICI)1097-0258(19991015)18:19<2617::AID-SIM187>3.0.CO;2-E}
\end{barticle}
\endbibitem

\bibitem[\protect\citeauthoryear{Wood}{1999}]{Wood1999BinomialDistribution}
\begin{barticle}[author]
\bauthor{\bsnm{Wood},~\bfnm{G.~R.}\binits{G.~R.}}
(\byear{1999}).
\btitle{{Binomial mixtures: geometric estimation of the mixing distribution}}.
\bjournal{The Annals of Statistics}
\bvolume{27}
\bpages{1706--1721}.
\bdoi{10.1214/aos/1017939148}
\end{barticle}
\endbibitem

\bibitem[\protect\citeauthoryear{Woods and
  Thissen}{2006}]{Woods2006ItemDensities}
\begin{barticle}[author]
\bauthor{\bsnm{Woods},~\bfnm{Carol~M.}\binits{C.~M.}} \AND
  \bauthor{\bsnm{Thissen},~\bfnm{David}\binits{D.}}
(\byear{2006}).
\btitle{{Item response theory with estimation of the latent population
  distribution using spline-based densities}}.
\bjournal{Psychometrika}
\bvolume{71}
\bpages{281--301}.
\bdoi{10.1007/s11336-004-1175-8}
\end{barticle}
\endbibitem

\end{thebibliography}
\bibliographystyle{imsart-nameyear}

\begin{appendix}

\end{appendix}

\section*{Acknowledgements}
The authors would like to thank the anonymous referees, an Associate
Editor and the Editor for their constructive comments that improved the
quality of this paper.

The authors were partially supported by US NIH Grants U01 AG016976 and U24 AG072122).
The second author was partially supported by US NSF Grants  DMS-195278 and DMS-2112907.
The third author was partially supported by US NSF Grant DMS-1711952.


\begin{supplement} \label{supp}
\stitle{Supplementary Material}
\sdescription{Proofs of all theorems,additional computation details, additional simulations and further real data analysis.} 

\end{supplement}

\newpage

\begin{frontmatter}
\title{Supplement to: Data Harmonization Via Regularized Nonparametric Mixing Distribution Estimation}
\runtitle{Data Harmonization}

\begin{aug}
\author[A]{\fnms{Steven} \snm{Wilkins-Reeves}\ead[label=e1]{stevewr@uw.edu}},
\author[A]{\fnms{Yen-Chi} \snm{Chen}\ead[label=e2,mark]{yenchic@uw.edu}}
\and
\author[B]{\fnms{Kwun Chuen Gary} \snm{Chan}\ead[label=e3,mark]{kcgchan@uw.edu}}
\address[A]{Department of Statistics,
University of Washington,
\printead{e1,e2}}

\address[B]{Department of Biotatistics,
University of Washington,
\printead{e3}}
\end{aug}

\end{frontmatter}

Code for all the figures and simulations is available at \url{https://github.com/SteveJWR/Data-Harmonization-Nonparametric}. \\

\section{A Parametric Model for the Latent Mixture} \label{sec:par_model}

As an easy to compute alternative to the non-parametric mixture model, we also introduce a parametric model. \\

The parametric model assumes $p_M(\gamma|x) = p_M(\gamma|x, \theta)$, where $\theta \in \Theta$ is the underlying parameter.
A straightforward approach to estimate $\theta$ is via the maximal likelihood estimator (MLE). 
The observed data log-likelihood and its expectation are  
\al{
    \ell_n(\theta) &= \frac{1}{n}\sum_{i = 1}^n\Big[\log\Big(\int p_A(Y_i|\gamma)p_M(\gamma|X_i, \theta) d\gamma\Big)\Big] \ ,\\
        \ell(\theta) &= \E_{Y,X}\Big[\log\Big(\int p_A(Y|\gamma)p_M(\gamma|X, \theta) d\gamma\Big)\Big]
}
respectively.  


The MLE finds an estimator of $\theta$ by maximizing $\ell_n(\theta)$.
However, $\ell_n(\theta)$ may be difficult to maximize
for many parametric models $p_M(\gamma|x, \theta)$
In general, one could estimate this using the EM algorithm \citep{Dempster1977MaximumAlgorithm} or numerical approximations,
but this may be time-consuming and may not find the global maximum.


To deal with this problem,
we propose a variational approach
to the above likelihood function.
Let $r(\gamma| Y) = \frac{p_A(y|\gamma)}{\int p_A(y|\gamma) d\gamma}$.
We can rewrite the likelihood as
\al{
    \ell(\theta) &=  \E_{Y,X}\Big[\log\Big(\int p_A(Y|\gamma)p_M(\gamma|X, \theta) d\gamma\Big)\Big] \\
    &=  \E_{Y,X}\Big[\log\Big(\int r(\gamma| Y) p_M(\gamma|X, \theta) d\gamma\Big)\Big] -  \E_{Y,X}\Big[\log\Big(\int p_A(Y|\gamma) d\gamma\Big)\Big] \\
    \text{By Jensen's Inequality } &\geq \E_{Y,X}\Big[\int r(\gamma| Y)\log\Big( p_M(\gamma|X, \theta)\Big) d\gamma\Big] -  \E_{Y,X}\Big[\log\Big(\int p_A(Y|\gamma) d\gamma\Big)\Big] := \widetilde \ell(\theta).
}
Similarly, we define $\widetilde \ell_n(\theta)$ as a lower bound of $\ell_n(\theta)$. This lower bound function $\tilde \ell(\theta)$ is easy to optimize if $p_M(\gamma|x, \theta)$ is log concave in $\theta$ for each $x, \gamma$.

In this paper, we consider the logit-normal model:
\begin{align*}
    p(\gamma|x; \theta) &= \frac{1}{\gamma(1-\gamma)}\frac{\lambda}{\sqrt{2\pi}}\exp\left(-\frac{\lambda^2}{2}(\logit(\gamma) - \beta^T x)^2\right),
\end{align*}
where $\theta = (\beta, \lambda) \in \mathbb{R}^{d} \times (0,\infty)$.
This logit-normal model has closed form solutions for $\widetilde \theta = \argmax_{\theta} \widetilde{\ell}(\theta)$ and $  \widetilde \theta_n = \argmax_{\theta} \widetilde{\ell}_n (\theta) $ as given by the following theorem:

\begin{supptheorem}\label{supp_thm:par_closed_form}
    The approximated latent model with sample likelihood $\tilde \ell_n(\theta)$ and population likelihood $\tilde \ell(\theta)$ have a unique closed form solution. 
    \begin{align}
        \tilde \beta_n &= \sum_{i = 1}^n[X_i X_i^T]^{-1}  \sum_{i = 1}^n\bigg[ X_i \Big[\int r(\gamma|Y_i)\logit(\gamma)d\gamma \Big] \bigg] \\
        \tilde \lambda_n &= \Big\{\frac{1}{n}\sum_{i = 1}^n\Big[\int r(\gamma|Y_i) (\logit(\gamma) - \tilde \beta_n^T X_i)^2 d\gamma \Big] \Big\}^{1/2}
    \end{align}
    With corresponding population versions of $\tilde \beta, \tilde \lambda$
        \begin{align}
        \tilde \beta &= [\E[X X^T]]^{-1} \E\left[X\Big(\int r(\gamma| Y)\logit(\gamma)d\gamma \Big) \right] \\
        \tilde \lambda &= \Big\{\E\Big[\int r(\gamma| Y) (\logit(\gamma) - \tilde \beta^T X)^2 d\gamma \Big] \Big\}^{1/2}
    \end{align} if the following hold: 
    \begin{enumerate}
        \item $P[\norm{X}_2] < \infty$ for each $j$\\
        \item $P[\norm{X X^T}_2] < \infty$ where $\norm{\cdot}_2$ applied to a matrix denotes the spectral norm \\
        \item For some $b > 0$,  $b \leq \inf_{y \in \{0,1,\dots, N\}}\int p_A(y|\gamma) d\gamma$ 
        \item For some $y^*$ $r(\gamma|y^*)  >  0  $ for $\gamma \in \Lambda  \subset [0,1]$ where $\Lambda$ is some subset of Lebesgue measure greater than $0$ and $P(Y = y^*) > 0$ 
    \end{enumerate}
\end{supptheorem}

See the proof in the section \ref{sec:closed_form_proof}.

\begin{supptheorem} \label{supp_thm:ral}
    Under the same mild conditions outlined in Section \ref{sec:ral_proof}, 
    the estimates $\tilde \beta_n, \tilde \lambda_n$ are consistent, regular asymptotically linear estimators of $\tilde \beta, \tilde \lambda$.
\end{supptheorem}

See the proof in the section \ref{sec:ral_proof} and the assumptions. 

While the above variational estimator is not the actual MLE,
it has a closed-form. 
Moreover, it is similar to the least squares solution, with replacing $Y$ with $\int r(\gamma| Y)\logit(\gamma)d\gamma$.  This method can either be used on extremely large datasets where computational cost is an issue, or as an initial estimate for the following non-parametric estimator.

\section{Proofs of Supplemental Theorems}

\subsection*{Proof of Supplemental Theorem \ref{supp_thm:par_closed_form}} \label{sec:closed_form_proof}

\begin{proof}

Let $ \gamma^\dagger|Y$ be a random variable with a conditional PDF $r(\gamma| Y)$ that is not necessarily to be  the actual PDF.  We use the dagger, as this may not be the true distribution of $\gamma$, but reflects this conditional, under the case where we assume a latent uniform distribution on $\gamma$.

We first note that as $\beta_j \to \pm \infty$ then $p(\gamma|x; \theta) \to 0$ for all $\gamma \not \in \{0,1\}$ and hence the lower bound likelihood $\tilde \ell(\theta) \to -\infty$.  Furthermore, if $\lambda \to 0$ then $p(\gamma|x; \theta) \to 0$ for all $\gamma \not \in \{0,1\}$ and if $\lambda \to \infty$ then $p(\gamma|x; \theta) \to 0$ for all $\gamma \not = \beta^T x$.  Therefore, $\tilde \ell(\theta)$ is coercive in $\theta$ and any maximizer must on a compact subset $\Theta_c \in \R^d \times (0,\infty)$. \\
We then assume the following conditions on $p_A$ and $P_{X,Y}$
\begin{enumerate}
    \item $P[\norm{X}_2] < \infty$ for each $j$\\
    \item $P[\norm{X X^T}_2] < \infty$ where $\norm{\cdot}_2$ applied to a matrix denotes the spectral norm \\
    \item For some $b > 0$,  $b \leq \inf_{y \in \{0,1,\dots, N\}}\int p_A(y|\gamma) d\gamma$ 
    \item For some $y^*$ $r(\gamma|y^*)  >  0  $ for $\gamma \in \Lambda  \subset [0,1]$ where $\Lambda$ is some subset of Lebesgue measure greater than $0$ and $P(Y = y^*) > 0$ 
\end{enumerate}
The fourth condition is a very mild one which is required for the coersiveness of $\ell(\theta)$.  I.e. as long as there is some $y$ such that $r(\gamma|y^*)$ is positive in a region of positive Lebesgue measure. The first three conditions will be utilized for a dominated convergence theorem argument. Note that condition 5 implies $r(\gamma|y^*) \leq \frac{1}{b}$.  Let $W = (X,Y)$ and let $\ell(\theta) = P(f(W; \theta))$. 

\al{
    \text{since }  \frac{d f(W; \theta)}{d\beta} &= \int r(\gamma| Y) \frac{d (- \log(p(\gamma|X, \theta)))}{d\beta} d\gamma  \\
    &= \int r(\gamma| Y) 2\lambda^2\{X X^T \beta  -  \logit(\gamma)X\} d\gamma  \\
    P\left[  \norm{\frac{d f(W; \theta)}{d\beta}}_2\right] &\leq \frac{1}{b}  \left(\sup_{\lambda \in \Theta_c} 2 \lambda^2 \sup_{\beta \in \Theta_c} \norm{\beta}_{2} P\left[ \norm{X X^T}_2\right]   + \int \logit(\gamma)d\gamma  P\left[ \norm{X }_2\right]\right) < \infty \\
    \text{and }  \frac{d f(W; \theta)}{d\lambda} &= \int r(\gamma| Y) \frac{d (- \log(p(\gamma|X, \theta)))}{d\lambda} d\gamma  \\
    &= \int r(\gamma| Y) \{ - \frac{1}{\lambda}  + \lambda(\logit(\gamma) - \beta^T X)^2\} d\gamma   \\
    P\left[ \left|\frac{d f(W; \theta)}{d\lambda}\right|\right]&\leq  \frac{1}{b}\Bigg\{ \sup_{\lambda \in \Theta_c}\frac{1}{\lambda}  + \sup_{\lambda \in \Theta_c} \lambda \bigg(\int|\logit(\gamma)|^2d\gamma \\
    &+ 2\int|\logit(\gamma)|d\gamma \sup_{\beta \in \Theta_c} \norm{\beta}_2 P\left[ \norm{X }_2\right] +  \sup_{\beta \in \Theta_c} \norm{\beta}^2_2 P\left[ \norm{XX^T }_2\right]\bigg)^2\Bigg\} d\gamma  < \infty\\
}
Hence by the dominated convergence theorem, we can swap the derivative and expectation.

The rest follows by setting the derivatives to 0:
\al{
    \frac{d \widetilde \ell(\theta)}{d\theta} &= \E_{X, Y}\Big[\int r(\gamma| Y) \frac{d (- \log(p(\gamma|X, \theta)))}{d\theta} d\gamma \Big] \\
    \frac{d \widetilde \ell(\theta)}{d\beta} &= \E_{X, Y}\Big[\int r(\gamma| Y) 2\lambda^2\{X X^T \beta  -  \logit(\gamma)X\} d\gamma \Big] = 0  \\
    \implies
    \E_{X, Y} \Big[\int r(\gamma| Y)dy  X X^T \Big]  \tilde \beta
    &= \E_{X, Y}\bigg[ X \Big[\int r(\gamma| Y)\logit(\gamma)d\gamma \Big] \bigg] \\
    \implies \tilde \beta &= \E_{X}[X X^T]^{-1}  \E_{X, Y}\bigg[ X \Big[\int r(\gamma| Y)\logit(\gamma)d\gamma \Big] \bigg] \\
    \frac{d \widetilde \ell(\theta)}{d\lambda} &= \E_{X, Y}\Big[\int r(\gamma| Y) \{ - \frac{1}{\lambda}  + \lambda(\logit(\gamma) - \beta^T X)^2\} d\gamma \Big]  = 0 \\
    \implies \frac{1}{\tilde \lambda^2} &= \E_{X, Y}\Big[\int r(\gamma| Y) (\logit(\gamma) - \tilde \beta^T X)^2 d\gamma \Big] 
}
And similarly we have sample optima 
\al{
    \widehat \beta &= [\frac{1}{n}\sum_{i = 1}^n X_i X_i^T]^{-1}  \frac{1}{n}\sum_{i = 1}^n\bigg[ X_i \Big[\int r(\gamma|Y_i)\logit(\gamma)d\gamma \Big] \bigg] \\
    \frac{1}{\widehat \lambda^2} &= \frac{1}{n}\sum_{i = 1}^n\Big[\int r(\gamma|Y_i) (\logit(\gamma) - \widehat \beta^T X)^2 d\gamma \Big] 
}

\end{proof}

\subsection*{Proof of Supplemental Theorem \ref{supp_thm:ral}} \label{sec:ral_proof}

\begin{proof}
First Assume the moment conditions in Theorem \ref{supp_thm:par_closed_form}.  We will study this as a standard $M$-estimation problem and use
Theorems 5.14 and 5.41 of \citep{Vaart1998AsymptoticStatistics} to prove the asymptotic linearity of the estimator. Let $M(\theta) = -\widetilde \ell (\theta), \thickspace M_n(\theta) = -\widetilde \ell_n (\theta)$

Let $m_\theta = -\ell_1(\theta)$ where
\al{
    -\ell_1(\theta)(x,y) &= \int r(\gamma| y) \{-\log(\lambda) + \frac{\lambda^2}{2}(\logit(\gamma) - \beta^T x)^2\} d\gamma \\
    &\geq -\log(\lambda)  + \frac{\lambda^2}{2} \int r(\gamma| Y) \{\logit(\gamma) - \int r(\gamma| Y)\logit(\gamma)d\gamma \}^2 dy,
}
where $r(\gamma| Y) = \frac{p_A(y,\gamma)}{\int(p_A(y,\gamma) d\gamma)}$. Noting that $\int|\logit(\gamma)|d\gamma, \int(\logit(\gamma))^2d\gamma < \infty$.  Then for all $y \in \{0, \dots, N \}$
\al{
    0 &< \int r(\gamma| Y) \{(\logit(\gamma) - \int r(\gamma| Y)\logit(\gamma)dy )^2\} d\gamma < \infty \\
    \text{Let } B &= \inf_{y \in \{0, \dots, N\}} \int r(\gamma| Y) \{(\logit(\gamma) - \int r(\gamma| Y)\logit(\gamma)d\gamma )^2\} d\gamma .
}

Hence 
\al{
    -\ell_1(\theta)(x,y) &\geq -\log(\lambda) + \frac{\lambda^2}{2}B \\
    &\geq - \frac{1}{2}\log(\frac{1}{B}) + \frac{1}{2} \\
    &> -\infty
}
and
$m_\theta$ is continuous and uniformly bounded. Additionally, since  $\lim_{\lambda \to 0,\infty} -\log(\lambda) + \frac{\lambda^2}{2}B = \infty$ and $\lim_{\beta_k \to \pm \infty} -\ell_{1}(\theta) = \infty$, we can consider the maxima over the compact extension of our space $\overline \Theta = \{ \mathbb{\overline R}^d \times [0,\infty]\}$ where $ \mathbb{\overline R} = [-\infty, \infty]$. 

By theorem 5.14 \citep{Vaart1998AsymptoticStatistics}, for all $\epsilon > 0$
\al{
    P(\norm{\widehat \theta - \theta_0}_2 \geq \epsilon) \to 0
}
and $\widehat \theta$ is consistent for $\theta_0$.

Next we show asymptotic linearity for a given consistent estimator by theorem 5.41 in \citep{Vaart1998AsymptoticStatistics}. Let 

\al{
    \psi(x,y;\theta) = -\int r(\gamma| Y) \frac{d \log(p(\gamma|x,\theta))}{d\theta} d\gamma.
}

We adopt the notation from \citep{Vaart1998AsymptoticStatistics} that 
for a function $f(z; \theta)$ and a random vector $Z\sim P$, we write $Pf(\theta) = \E(f(Z;\theta))$ with $Z\sim P$.
Note that if $P_n$ is the empirical distribution of $Z_1,\cdots, Z_n, $
then $P_n f(\theta) = \frac{1}{n}\sum_{i=1}^n f(Z_i;\theta)$.

With the above notations, $\widehat \theta_n$ and $\theta_0$ are defined by the roots of $\Psi_n(\theta) = P_n\psi(\theta) =0$ and $\Psi(\theta) = P\psi(\theta) =0$ respectively.  In the logit-normal model, these roots are unique. 

Next denote
\al{
    \dot \Psi(\theta) = P \frac{d \psi(\theta)}{d\theta} = \int r(\gamma| Y)\frac{-d^2 \log(p(\gamma|x,\theta))}{d\theta^2} d\gamma
}
which is positive definite for all $\theta$, therefore $V = \dot \Psi(\theta_0)$ is non-singular. 

Since the vector of matrices $\ddot \psi(\theta) $ is continuous as a function of $\theta$, then in any open neighbourhood $\Theta_0$ of $\theta_0$ we can bound the magnitude all entries of $\ddot \psi(\theta) $ by a function $g$. 
Hence by the Taylor series expansion. 

\al{
    0 &= \Psi_n(\widehat \theta) = \Psi_n(\theta_0) + V (\widehat \theta - \theta_0) + \frac{1}{2}(\widehat \theta - \theta_0)^T\ddot \Psi_n(\widetilde \theta_n)(\widehat \theta - \theta_0)
}
where $\tilde \theta_{n,i} \in [\widehat \theta_i, \theta_{0,i}]$ or $[\theta_{0,i},\widehat \theta_i]$ by the mean value theorem.  Furthermore: 
\al{
    \norm{\ddot \Psi_n(\widetilde \theta_n)}_2 &= \norm{\frac{1}{n}\sum_{i = 1}^n \ddot \psi(\widetilde \theta_n) (X_i, Y_i)}_2 \\
    &\leq \frac{1}{n}\sum_{i = 1}^n \norm{\ddot \psi(\widetilde \theta_n) (X_i, Y_i)}_2 
}
Since $\ddot \psi(\widetilde \theta_n) (X_i, Y_i)$ is a 3D tensor constructed from taking three  derivatives of $\ell_1(\theta)(x,y)$ which will only depend on a sum terms consisting of a continuous function of a parameter, $\int r(\gamma|Y)d\gamma$, and one of $X$ or $XX^T$. Hence by the consistency of $\tilde \theta_n$ we can upper bound the functions of the parameters, and by the assumptions of Theorem \ref{supp_thm:par_closed_form}, then the remaining terms can be bounded by a random variable with finite mean $g(X,Y)$.  Then by the weak LLN $ \norm{\ddot \Psi_n(\widetilde \theta_n)} = O_p(n^{-1/2})$. 

Finally we have the result 
\begin{align*}
    \widehat \theta - \theta_0 &= - (V^{-1})\frac{1}{n}\sum_{i = 1}^n \psi(\theta_0)(X_i,Y_i) + o_p(n^{-1/2})
\end{align*}

and therefore 
\al{
    \sqrt{n}(\widehat \theta - \theta_0) &\stackrel{d}{\to} \mathcal{N}(0, V^{-1} P(\psi(\theta_0)^2)V^{-1})
}
\end{proof}

\section{Proofs of Main Paper Theorems}

\begin{theorem} \label{thm:linear_convergence} (Listed as Theorem 2.4 in the main paper.)
    Under the following assumptions, the NPEM algorithm will converge linearly (in the 1-norm) with rate $\rho \in [0,1)$ 
    
    Let $\mathcal{B}_\delta \subset [0,1]$ be a region of Lebesgue measure $\delta$.  Let $g^{(0)}(y) = p^{(0)}_{MA}(y) - \hat p(y)$ represent the difference of an initial estimate $p_{MA}^{(0)}$ and  $\hat p$. Suppose the following assumptions hold for some $\delta > 0$
    \begin{enumerate}
        \item[(L1)] $\norm{g^{(0)}}_1 \leq \epsilon c_0$ \\
        \item[(L2)] $\eta_1 \equiv \max_{y \not = y'}\sup_{\gamma \in \mathcal{B}_\delta} p_A(y|\gamma)p_A(y'|\gamma) \leq \frac{\rho c_0}{4 (N + 1)}$ \\
        \item[(L3)] $\eta_2 \equiv \max_{y} \sup_{\gamma \in \mathcal{B}_\delta} \sum_{y' \not = y} p_A(y|\gamma)p_A(y'|\gamma)\leq \frac{\rho c_0}{4 }$ \\
        \item[(L4)] $\inf_{y} \hat p(y)  = c_0 > 0 $ \\
        \item[(L5)] $\epsilon \leq \frac{\rho }{16}$
        \item[(L6)] $0 < \delta \leq \frac{\rho c_0 }{4 \overline p_M^{(0)}\exp(\frac{\rho}{1 - \rho}\frac{1}{c_0}\norm{ p_{MA}^{(0)}(\cdot) - \hat p(\cdot)}_1) (N + 2)}$,
        where $\overline p_M^{(0)} = \sup_{\gamma \in [0,1]} p_M^{(0)}(\gamma) $. 
    \end{enumerate}


Then the NPEM algorithm converges linearly at a rate of $\rho$ 
    \begin{equation*}
        \norm{p_{MA}^{(t + 1)}(y) - \hat p(y)}_1 \leq \rho \norm{p_{MA}^{(t)}(y) - \hat p(y)}_1.
    \end{equation*}
   
\end{theorem}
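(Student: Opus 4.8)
The plan is to track the discrepancy $g^{(t)}(y) := p_{MA}^{(t)}(y) - \hat p(y)$ and derive an exact one-step recursion for it, then show that the recursion is a contraction of modulus $\rho$ in the $1$-norm. Write the marginal induced by one NPEM step as $p_{MA}^{(t+1)}(y) = \int p_A(y|\gamma) p_M^{(t)}(\gamma)\sum_{y'} \frac{p_A(y'|\gamma)\hat p(y')}{p_{MA}^{(t)}(y')}\,d\gamma$. Using $\hat p(y')/p_{MA}^{(t)}(y') = 1 - g^{(t)}(y')/p_{MA}^{(t)}(y')$ together with $\sum_{y'} p_A(y'|\gamma) = 1$, the ``$1$'' contributes exactly $p_{MA}^{(t)}(y)$ and cancels $\hat p(y)$ only up to $g^{(t)}(y)$, leaving the clean linear recursion
\[
  g^{(t+1)}(y) = g^{(t)}(y) - \sum_{y'} \frac{g^{(t)}(y')}{p_{MA}^{(t)}(y')}\, S^{(t)}_{y,y'},
  \qquad S^{(t)}_{y,y'} := \int p_A(y|\gamma)p_A(y'|\gamma)\,p_M^{(t)}(\gamma)\,d\gamma .
\]
The integrals $S^{(t)}_{y,y'}$ are symmetric, nonnegative, and satisfy $\sum_{y'} S^{(t)}_{y,y'} = p_{MA}^{(t)}(y)$.

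The next step is to exploit this structure to reduce the theorem to a single scalar bound. Isolating the $y'=y$ term and using $\sum_{y'} S^{(t)}_{y,y'} = p_{MA}^{(t)}(y)$ shows the diagonal coefficient collapses to an off-diagonal sum: $1 - S^{(t)}_{y,y}/p_{MA}^{(t)}(y) = \big(\sum_{y'\neq y} S^{(t)}_{y,y'}\big)/p_{MA}^{(t)}(y)$. Hence the entire recursion is driven only by the off-diagonal ``interaction'' mass, which is precisely where sharpness of $p_A$ will help. Taking absolute values, summing over $y$, swapping the order of summation in the off-diagonal term, and invoking the symmetry $S^{(t)}_{y,y'}=S^{(t)}_{y',y}$ makes the two resulting sums coincide, giving
\[
  \norm{g^{(t+1)}}_1 \;\leq\; 2\sum_{y} |g^{(t)}(y)|\,\frac{\sum_{y'\neq y} S^{(t)}_{y,y'}}{p_{MA}^{(t)}(y)} .
\]
Thus it suffices to show the per-$y$ factor $\tfrac{2}{p_{MA}^{(t)}(y)}\sum_{y'\neq y} S^{(t)}_{y,y'}\leq \rho$ at every iteration, i.e. $\sum_{y'\neq y}S^{(t)}_{y,y'} = \int p_A(y|\gamma)\big(1-p_A(y|\gamma)\big)p_M^{(t)}(\gamma)\,d\gamma \leq \tfrac{\rho}{2}\,p_{MA}^{(t)}(y)$.

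To bound the off-diagonal mass I would split $[0,1]=\mathcal{B}_\delta\cup\mathcal{B}_\delta^{c}$. On the region where the pairwise products are uniformly small, the pointwise sharpness bounds (L2)--(L3) control the contribution by $\eta_2\leq \tfrac{\rho c_0}{4}$ (the factor $N+1$ in (L2) accounting for the sum over the $y'\neq y$); on the complementary set of Lebesgue measure $\delta$ I would use $p_A\leq 1$ and the crude estimate $\int_{\mathcal{B}_\delta}(\cdots)p_M^{(t)}\leq \delta\,\sup_\gamma p_M^{(t)}(\gamma)$, which (L6) is exactly calibrated to keep below $\tfrac{\rho c_0}{4(N+2)}$. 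Together these give $\sum_{y'\neq y}S^{(t)}_{y,y'}\leq \tfrac{\rho c_0}{2}$. For the denominator, the lower bound $p_{MA}^{(t)}(y)\geq \hat p(y) - \norm{g^{(t)}}_1 \geq (1-\epsilon)c_0$ follows from (L4) together with the inductive smallness $\norm{g^{(t)}}_1\leq \epsilon c_0$ seeded by (L1); the slack $\epsilon\leq \rho/16$ in (L5) is what absorbs the discrepancy between $1/p_{MA}^{(t)}(y)$ and $1/c_0$ so that the product of the factor $2$, the numerator bound, and $1/p_{MA}^{(t)}(y)$ lands at or below $\rho$.

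The whole argument is an induction in $t$ that must simultaneously maintain three invariants: geometric decay $\norm{g^{(t)}}_1\leq \rho^{t}\norm{g^{(0)}}_1$, the lower bound on $p_{MA}^{(t)}$, and a uniform upper bound on $\sup_\gamma p_M^{(t)}(\gamma)$. I expect the last of these to be the main obstacle, since the EM multiplier $\sum_y p_A(y|\gamma)\hat p(y)/p_{MA}^{(t)}(y)$ can inflate the density and there is no a priori reason $\sup_\gamma p_M^{(t)}$ stays bounded. The key observation is that this multiplier equals $1 - \sum_y p_A(y|\gamma)\,g^{(t)}(y)/p_{MA}^{(t)}(y)$, hence is at most $1 + \norm{g^{(t)}}_1/\big((1-\epsilon)c_0\big)$; telescoping across iterations and summing the geometric series $\sum_t \rho^{t}\norm{g^{(0)}}_1/c_0 = \tfrac{1}{1-\rho}\cdot\tfrac{1}{c_0}\norm{g^{(0)}}_1$ yields $\sup_\gamma p_M^{(t)}(\gamma)\leq \overline p_M^{(0)}\exp\!\big(\tfrac{\rho}{1-\rho}\tfrac{1}{c_0}\norm{p_{MA}^{(0)}-\hat p}_1\big)$, which is exactly the factor appearing in (L6). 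Verifying that (L6) then forces the $\mathcal{B}_\delta$-contribution below its target for \emph{all} $t$ closes the induction and completes the proof.
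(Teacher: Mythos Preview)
Your approach is correct and is a genuinely cleaner route than the paper's.  The paper Taylor-expands $1/p_{MA}^{(t)}(y')$ around $1/\hat p(y')$, which produces a second-order remainder (their term $(B)$) and forces them to carry quadratic and cubic terms $\|g^{(t)}\|_1^2/c_0$, $\|g^{(t)}\|_1^3/c_0^2$ through the induction; they then split the first-order piece into a diagonal part $T_1$ (further broken into $T_3$) and an off-diagonal part $T_2$, bounding each separately.  Your exact identity $\hat p/p_{MA}^{(t)} = 1 - g^{(t)}/p_{MA}^{(t)}$ eliminates the remainder entirely, and your use of the column-sum relation $\sum_{y'} S_{y,y'}^{(t)} = p_{MA}^{(t)}(y)$ together with the symmetry $S_{y,y'}^{(t)}=S_{y',y}^{(t)}$ collapses diagonal and off-diagonal into a single off-diagonal quantity in one line.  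Both proofs then split $[0,1]$ into a ``sharp'' region and a small region of measure $\delta$, and both close the loop with the same telescoping bound on $\sup_\gamma p_M^{(t)}$ via the EM multiplier $1-\sum_y p_A(y|\gamma)g^{(t)}(y)/p_{MA}^{(t)}(y)$.  A pleasant by-product of your route is that you only ever need (L3); the paper uses (L2) to control its $T_2$ term, but your symmetrized bound never sees an individual cross-product, only the row-sum $\sum_{y'\neq y}p_A(y|\gamma)p_A(y'|\gamma)$.

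One arithmetic point to tighten: after you obtain the sharp-region bound $\eta_2\leq \rho c_0/4$ and the small-region bound $\overline p_M^{(t)}\delta\leq \rho c_0/(4(N+2))$, do \emph{not} round their sum up to $\rho c_0/2$.  Combined with the denominator estimate $p_{MA}^{(t)}(y)\geq (1-\epsilon)c_0$, the loose numerator $\rho c_0/2$ gives a per-$y$ factor $\rho/(1-\epsilon)>\rho$, which does not close.  Keeping the sharp value $\tfrac{\rho c_0}{4}\cdot\tfrac{N+3}{N+2}$ yields a per-$y$ factor $\tfrac{\rho(N+3)}{2(1-\epsilon)(N+2)}\leq \tfrac{3\rho}{4(1-\epsilon)}$, and then (L5) (which forces $1-\epsilon\geq 15/16$) makes this $\leq 4\rho/5<\rho$.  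So your remark that ``(L5) absorbs the discrepancy'' is right, but only once you carry the unrounded numerator through.
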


Before we prove Theorem \ref{thm:linear_convergence}, we 
first introduce a useful lemma.

\begin{lemma} \label{lem:oracle_inequality}
    By the assumptions (L1-6) of theorem \ref{thm:linear_convergence} the following inequality holds
    \al{
        \norm{ p^{(t + 1)}_{MA} - \hat p}_1 &\leq \Bigg( \frac{(N + 1) (\eta_1 + \overline p_M^{(t)} \delta)}{c_0} + \frac{\eta_2}{c_0} +\frac{\overline p_M^{(t)} \delta}{c_0} \\
        &+ \left(1 + \frac{1}{1 - \epsilon} \right)\frac{ \norm{ p^{(t)}_{MA} - \hat p}_1}{c_0} +  \frac{1}{1 - \epsilon}\frac{\norm{ p^{(t)}_{MA} - \hat p}_1^2}{c_0^2} \Bigg)\norm{ p^{(t)}_{MA} - \hat p}_1.
    }
   
\end{lemma}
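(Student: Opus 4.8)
The plan is to track the marginal error $g^{(t)}(y) := p^{(t)}_{MA}(y) - \hat p(y)$ directly, derive an \emph{exact} one-step recursion for it, and then bound that recursion term by term using the sharpness assumptions (L2)--(L3), the lower bound (L4), and the smallness of $\overline p_M^{(t)}\delta$ from (L6). First I would substitute the NPEM update \eqref{eq:npem} into $p^{(t+1)}_{MA}(y') = \int p_A(y'|\gamma)p^{(t+1)}_M(\gamma)\,d\gamma$. Writing $M^{(t)}(y,y') = \int p_A(y|\gamma)p_A(y'|\gamma)p^{(t)}_M(\gamma)\,d\gamma$ for the symmetric second-moment kernel, this gives $p^{(t+1)}_{MA}(y') = \sum_y \frac{\hat p(y)}{p^{(t)}_{MA}(y)}M^{(t)}(y,y')$. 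The key algebraic observation is the row-sum identity $\sum_y M^{(t)}(y,y') = p^{(t)}_{MA}(y')$, which follows from $\sum_y p_A(y|\gamma)=1$. Setting $u(y) := g^{(t)}(y)/p^{(t)}_{MA}(y)$ so that $\hat p(y)/p^{(t)}_{MA}(y) = 1-u(y)$, the diagonal contribution cancels against $g^{(t)}(y')$ and I obtain the exact recursion
\[
 g^{(t+1)}(y') = \sum_{y \neq y'} M^{(t)}(y,y')\bigl(u(y') - u(y)\bigr).
\]
This is the heart of the argument: the one-step error is governed entirely by the off-diagonal couplings $M^{(t)}(y,y')$, which are small precisely when $p_A$ is sharp, so the contraction emerges without ever having to estimate the non-identifiable $p_M^{(t)}$ itself.

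Second, I would bound the couplings by splitting each integral over the exceptional set $\mathcal{B}_\delta$ of measure $\delta$ and its complement. On the complement the products $p_A(y|\gamma)p_A(y'|\gamma)$ are controlled by the sharpness constants, giving the single-entry bound $M^{(t)}(y,y') \le \eta_1 + \overline p_M^{(t)}\delta$ and the row bound $\sum_{y'\neq y} M^{(t)}(y,y') \le \eta_2 + \overline p_M^{(t)}\delta$, where the $\overline p_M^{(t)}\delta$ pieces absorb the contribution of $\mathcal{B}_\delta$ through $\int_{\mathcal{B}_\delta} p^{(t)}_M \le \overline p_M^{(t)}\delta$. Splitting the recursion as $g^{(t+1)}(y') = u(y')\sum_{y\neq y'}M^{(t)}(y,y') - \sum_{y\neq y'}u(y)M^{(t)}(y,y')$, taking $\norm{\cdot}_1$, and applying the triangle inequality produces the factor $\eta_2 + \overline p_M^{(t)}\delta$ acting on $\norm{u}_1$ from the first piece, and $(N+1)(\eta_1 + \overline p_M^{(t)}\delta)$ from the second (the factor $N+1$ arising by bounding $\sum_{y\neq y'}|u(y)|\le\norm{u}_1$ and summing over the $N+1$ values of $y'$).

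Third, I would convert $\norm{u}_1$ back into $\norm{g^{(t)}}_1$ by expanding the denominators. Under the standing smallness $\norm{g^{(t)}}_1 \le \epsilon c_0$ (available at $t=0$ by (L1), with (L5) ensuring it propagates) together with (L4), one has $\hat p(y)\ge c_0$ and $|g^{(t)}(y)/\hat p(y)|\le \epsilon$, so expanding $1/(\hat p(y)+g^{(t)}(y))$ to first order about $1/\hat p(y)$ gives $|u(y)| \le \frac{|g^{(t)}(y)|}{c_0} + \frac{1}{1-\epsilon}\frac{g^{(t)}(y)^2}{c_0^2}$; summing and using $\sum_y g^{(t)}(y)^2 \le \norm{g^{(t)}}_1^2$ yields the first-order term $\norm{g^{(t)}}_1/c_0$ plus the quadratic remainder. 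Substituting into the two pieces above and collecting everything as a common multiple of $\norm{g^{(t)}}_1$ produces an inequality of exactly the stated form, with the $\frac{1}{1-\epsilon}$ factors and the $\norm{g^{(t)}}_1/c_0$ and $\norm{g^{(t)}}_1^2/c_0^2$ corrections all coming from this denominator expansion. The main obstacle is the self-referential character of the bound: because the denominators $p^{(t)}_{MA}(y)$ themselves contain $g^{(t)}$, one cannot simply linearize, and the bookkeeping must carefully isolate the genuinely first-order (contracting) part from the first- and second-order remainders while simultaneously tracking the combinatorial $(N+1)$ factor and the $\mathcal{B}_\delta$ contribution. Establishing the clean cancellation in the recursion of the first step is what keeps this bookkeeping tractable and is where I would concentrate the effort.
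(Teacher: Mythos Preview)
Your approach is correct and takes a genuinely different route from the paper. Your key step is the exact identity $g^{(t+1)}(y')=\sum_{y\neq y'}M^{(t)}(y,y')\bigl(u(y')-u(y)\bigr)$ with $u=g^{(t)}/p^{(t)}_{MA}$, in which the diagonal cancels automatically; only afterward do you expand $u$ in terms of $g^{(t)}/\hat p$. The paper instead Taylor-expands $1/p^{(t)}_{MA}$ about $1/\hat p$ \emph{first}, obtaining a main piece $(A)=g^{(t)}(y)-\sum_{y'}\tfrac{g^{(t)}(y')}{\hat p(y')}M^{(t)}(y,y')$ plus a quadratic remainder $(B)$, and then splits $(A)$ into diagonal $(T_1)$ and off-diagonal $(T_2)$ sums, with $T_1$ requiring a further reduction via $T_3$. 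The additive corrections $(1+\tfrac{1}{1-\epsilon})\|g^{(t)}\|_1/c_0+\tfrac{1}{1-\epsilon}\|g^{(t)}\|_1^2/c_0^2$ in the stated bound arise precisely from this diagonal bookkeeping, which your identity bypasses. Consequently your route does \emph{not} reproduce the lemma's literal form as you claim: you obtain $\|g^{(t+1)}\|_1 \le A\,\|u\|_1$ with $A=(N{+}1)(\eta_1+\overline p_M^{(t)}\delta)+\eta_2+\overline p_M^{(t)}\delta$ and $\|u\|_1\le \tfrac{\|g^{(t)}\|_1}{c_0}+\tfrac{1}{1-\epsilon}\tfrac{\|g^{(t)}\|_1^2}{c_0^2}$, so the higher-order correction enters multiplied by $A/c_0$ rather than additively. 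In the regime of Theorem~\ref{thm:linear_convergence}, where (L2), (L3) and the inductive control of $\overline p_M^{(t)}\delta$ force $A/c_0<1$, your bound is strictly sharper and the stated inequality follows; what the paper's additive form buys is only that each piece can be bounded by $\rho/4$ independently, making the subsequent induction bookkeeping marginally more transparent.
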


\begin{proof}

Consider the error after update: 
\begin{equation}
    \begin{aligned}
    g^{(t + 1)}(y) &= p_{MA}^{(t + 1)}(y) - \hat p(y)\\
    &= \sum_{y'} \frac{\int p_A(y|\gamma)p_A(y'|\gamma)p_{M}^{(t)}(\gamma)d\gamma \hat p(y')}{p_{MA}^{(t)}(y')} - \hat p(y)
    \label{eq::L::1}
    \end{aligned}
\end{equation}

For now, assume that $g^{(0)}(y) \leq \epsilon c_0$ with $\epsilon < 1$, however, by the final results of the theorem, this will hold by induction for $g^{(t)}(y)$ and $t \geq 1$.  By the a Taylor expansion (geometric series),

\al{
    \frac{1}{ p_{MA}^{(t )}(y)}&=
    \frac{1}{\hat p(y') +g^{(t)}(y')} \\
    &= \frac{1}{\hat p(y')} - \frac{g^{(t)}(y')}{\hat p^2(y')}+ \frac{1}{\hat p(y')} \sum_{k = 2}^\infty \bigg( \frac{-g^{(t)}(y')}{\hat p(y')}\bigg)^k \\
    &= \frac{1}{\hat p(y')} - \frac{g^{(t)}(y')}{\hat p^2(y')}+ \frac{1}{\hat p(y')}\Bigg( \frac{\Big(\frac{g^{(t)}(y')}{\hat p(y')}\Big)^2}{1 + \frac{g^{(t)}(y')}{\hat p(y')}} \Bigg).
}
Putting the above result in to $p^{(t )}(y)$
in the denominator of equation \eqref{eq::L::1},
we obtain
\al{
    g^{(t + 1)}(y) &= \sum_{y'} \Bigg( \frac{\int p_A(y|\gamma)p_A(y'|\gamma)p_{MA}^{(t)}(\gamma)d\gamma \hat p(y')}{\hat p(y')}  - \frac{g^{(t)}(y')}{\hat p(y')}\int p_A(y|\gamma)p_A(y'|\gamma)p_{M}^{(t)}(\gamma)d\gamma \\
    &+ \int p_A(y|\gamma)p_A(y'|\gamma)p_{M}^{(t)}(\gamma)d\gamma \Bigg( \frac{\Big(\frac{g^{(t)}(y')}{\hat p(y')}\Big)^2}{1 + \frac{g^{(t)}(y')}{\hat p(y')}} \Bigg) \Bigg) - \hat p(y) \\
    &= p_{M}^{(t)}(y') - \hat p(y) -  \sum_{y'} \frac{g^{(t)}(y')}{\hat p(y')}\int p_A(y|\gamma)p_A(y'|\gamma)p_{M}^{(t)}(\gamma)d\gamma \\
    &+ \sum_{y'} \int p_A(y|\gamma)p_A(y'|\gamma)p_{M}^{(t)}(\gamma)d\gamma \Bigg( \frac{\Big(\frac{g^{(t)}(y')}{\hat p(y')}\Big)^2}{1 + \frac{g^{(t)}(y')}{\hat p(y')}} \Bigg) \\
    &= \underbrace{g^{(t)}(y) -  \sum_{y'}\frac{g^{(t)}(y')}{\hat p(y')}\int p_A(y|\gamma)p_A(y'|\gamma)p_{M}^{(t)}(\gamma)d\gamma}_{(A)} \\
    &+ \underbrace{\sum_{y'} \int p_A(y|\gamma)p_A(y'|\gamma)p_{M}^{(t)}(\gamma)d\gamma \Bigg( \frac{\Big(\frac{g^{(t)}(y')}{\hat p(y')}\Big)^2}{1 + \frac{g^{(t)}(y')}{\hat p(y')}} \Bigg) }_{(B)} 
}

Next we show that $(B)$ only contributes to higher order terms. 
\al{
    &\sum_{y'} \bigg| \int p_A(y|\gamma)p_A(y'|\gamma)p_{M}^{(t)}(\gamma)d\gamma \Bigg( \frac{\Big(\frac{g^{(t)}(y')}{\hat p(y')}\Big)^2}{1 + \frac{g^{(t)}(y')}{\hat p(y')}} \Bigg)  \bigg| \\
    &\leq  \left( \frac{1}{1 - \epsilon }\right) \sum_{y'} \bigg| \int p_A(y|\gamma)p_A(y'|\gamma)p_{M}^{(t)}(\gamma)d\gamma \Big(\frac{g^{(t)}(y')}{\hat p(y')}\Big)^2 \bigg|\\
    &\leq  \left( \frac{1}{1 - \epsilon }\right) \sum_{y'} \bigg| \int p_A(y'|\gamma)p_{M}^{(t)}(\gamma)d\gamma \Big(\frac{g^{(t)}(y')}{\hat p(y')}\Big)^2 \bigg|\\
    &\leq \left( \frac{1}{1 - \epsilon }\right) \sum_{y'} \bigg| (\hat p(y') + g^{(t)}(y')) \Big(\frac{g^{(t)}(y')}{\hat p(y')}\Big)^2 \bigg| \\
    &\leq \left( \frac{1}{1 - \epsilon }\right)\bigg( \frac{\norm{g^{(t)}(\cdot)}_1^2}{c_0} + \frac{\norm{g^{(t)}(\cdot)}_1^3}{c_0^2} \bigg).
}
Hence, the term (B) will contribute only to super-linear terms. 

Returning to the term (A). Let  
\al{
    (T^*) &:= \sum_{y}\bigg| g^{(t)}(y) -  \sum_{y'} \frac{g^{(t)}(y')\int p_A(y|\gamma)p_A(y'|\gamma)p_{M}^{(t)}(\gamma)d\gamma }{\hat p(y')}\bigg|
}
and it is easy to see that $|(A)|\leq (T^*)$.

We can further bound $(T^*)$ by      
\al{
    (T^*) &\leq \underbrace{\sum_{y}\bigg| g^{(t)}(y) -  \frac{g^{(t)}(y)\int p_A(y|\gamma)p_A(y|\gamma)p_{M}^{(t)}(\gamma)d\gamma }{\hat p(y)}\bigg|}_{(T_1)} \\
    &+ \underbrace{\sum_{y}\bigg| \sum_{y' \not = y} \frac{g^{(t)}(y')\int p_A(y|\gamma)p_A(y'|\gamma)p_{M}^{(t)}(\gamma)d\gamma }{\hat p(y')} \bigg|}_{(T_2)}.
}

{\bf Bounding $T_1$:}
\al{
    T_1 &= \sum_{y}\bigg| g^{(t)}(y) -  \frac{g^{(t)}(y)\Big(\int_{\mathcal{B}_\delta} p_A(y|\gamma)p_A(y|\gamma)p_{M}^{(t)}(\gamma)d\gamma + \int_{\mathcal{B}_\delta^c} p_A(y|\gamma)p_A(y|\gamma)p_{M}^{(t)}(\gamma)d\gamma\Big)}{\hat p(y)}\bigg| \\
    &\leq \sum_{y}\bigg| g^{(t)}(y) -  \frac{g^{(t)}(y)\int_{\mathcal{B}_\delta} p_A(y|\gamma)p_A(y|\gamma)p_{M}^{(t)}(\gamma)d\gamma}{\hat p(y)}\bigg| + \frac{\overline p^{(t)} \delta \norm{g^{(t)}( \cdot)}_1}{c_0} }.

Let $(**) = 1 - \frac{\int_{\mathcal{B}_\delta} p_A(y|\gamma)p_A(y|\gamma)p_{M}^{(t)}(\gamma)d\gamma}{\hat p(y)}$.

Using the sharpness assumption (L2),
\begin{equation}
    p_A(y|\gamma)p_A(y|\gamma) = p_A(y|\gamma)(1 - \sum_{y'}p_A(y'|\gamma)) \geq p_A(y|\gamma) - \eta_2. 
    \label{eq:overlap_assumption}
\end{equation}

If $(**) \geq 0$, by equation \eqref{eq:overlap_assumption}
\al{
    \bigg| g^{(t)}(y) -  \frac{g^{(t)}(y)\int_{\mathcal{B}_\delta} p_A(y|\gamma)p_A(y|\gamma)p_{M}^{(t)}(\gamma)d\gamma}{\hat p(y)}\bigg| &\leq \bigg| g^{(t)}(y) -  \frac{g^{(t)}(y)\int_{\mathcal{B}_\delta} (p_A(y|\gamma) - \eta_2)p_{M}^{(t)}(\gamma)d\gamma}{\hat p(y)}\bigg| 
}
otherwise if $(**) < 0$,
\al{
    \bigg| g^{(t)}(y) -  \frac{g^{(t)}(y)\int_{\mathcal{B}_\delta} p_A(y|\gamma)p_A(y|\gamma)p_{M}^{(t)}(\gamma)d\gamma}{\hat p(y)}\bigg| &\leq \bigg| g^{(t)}(y) -  \frac{g^{(t)}(y)\int_{\mathcal{B}_\delta} p_A(y|\gamma)p_{M}^{(t)}(\gamma)d\gamma}{\hat p(y)}\bigg| .
}
As a result,
\al{
    T_1 &\leq  \sum_{y}\bigg| g^{(t)}(y) -  \frac{g^{(t)}(y)\int_{\mathcal{B}_\delta} p_A(y|\gamma)p_{M}^{(t)}(\gamma)d\gamma}{\hat p(y)}\bigg| + \frac{\eta_2 \norm{g^{(t)}(\cdot)}_1}{c_0} +\frac{\overline p^{(t)} \delta \norm{g^{(t)}(\cdot)}_1}{c_0} \\
    &= T_3 + \frac{\eta_2 \norm{g^{(t)}(\cdot)}_1}{c_0} +\frac{\overline p^{(t)} \delta \norm{g^{(t)}(\cdot)}_1}{c_0},
}
where  
\al{
    T_3 &:= \sum_{y}\bigg| g^{(t)}(y) -  \frac{g^{(t)}(y)\int_{\mathcal{B}_\delta} p_A(y|\gamma)p_{M}^{(t)}(\gamma)d\gamma}{\hat p(y)}\bigg| \\
    &= \sum_{y}\bigg| g^{(t)}(y) -  \frac{g^{(t)}(y)(\int p_A(y|\gamma)p_{M}^{(t)}(\gamma)d\gamma- \int_{\mathcal{B}_\delta^c} p_A(y|\gamma)p_{M}^{(t)}(\gamma)d\gamma) }{\hat p(y)}\bigg| \\
    &\leq \sum_{y}\bigg| g^{(t)}(y) -  \frac{g^{(t)}(y)(\hat p(y) + g^{(t)}(y))}{\hat p(y)}\bigg| +\frac{\overline p^{(t)} \delta \norm{g^{(t)}(\cdot)}_1}{c_0} \\
    &\leq \sum_{y}\frac{| g^{(t)}(y)|^2}{c_0} +\frac{\overline p^{(t)} \delta \norm{g^{(t)}(\cdot)}_1}{c_0} \\
    &\leq \frac{\norm{g^{(t)}(\cdot)}_1^2}{c_0} +\frac{\overline p^{(t)} \delta \norm{g^{(t)}(\cdot)}_1}{c_0}
}

{\bf Bounding $T_2$:}
\al{
    T_2 &= \sum_{y}\bigg| \sum_{y' \not = y} \frac{g^{(t)}(y')\int p_A(y|\gamma)p_A(y'|\gamma)p_{M}^{(t)}(\gamma)d\gamma }{\hat p(y')} \bigg| \\
    &\leq \sum_{y} \sum_{y' \not = y} \frac{|g^{(t)}(y')|\Big(\int_{\mathcal{B}_\delta} p_A(y|\gamma)p_A(y'|\gamma)p_{M}^{(t)}(\gamma)d\gamma +  \int_{\mathcal{B}_\delta^c} p_A(y|\gamma)p_A(y'|\gamma)p_{M}^{(t)}(\gamma)d\gamma\Big)}{c_0} \\
    &\leq \sum_{y} \sum_{y' \not = y} \frac{|g^{(t)}(y')|\Big(\eta_1 + \overline p^{(t)} \delta\Big)}{c_0} \\
    &\leq  \frac{(N + 1) (\eta_1 + \overline p^{(t)} \delta)\norm{g^{(t)}(\cdot)}_1}{c_0}.
}

Hence, combining all of these terms, we conclude that
\al{
    \norm{ p^{(t + 1)}(\cdot) - \hat p(\cdot)}_1 &\leq \Big( \frac{(N + 1) (\eta_1 + \overline p^{(t)} \delta)}{c_0} + \frac{\eta_2}{c_0} +\frac{\overline p^{(t)} \delta }{c_0} + \left(1 + \frac{1}{1 - \epsilon} \right) \frac{ \norm{ p_{MA}^{(t)}(\cdot) - \hat p(\cdot)}_1}{c_0} \\
    &+ \left(\frac{1}{1 - \epsilon} \right) \frac{\norm{ p_{MA}^{(t)}(\cdot) - \hat p(\cdot)}_1^2}{c_0^2} \Big)\norm{ p_{MA}^{(t)}(\cdot) - \hat p(\cdot)}_1,
}
which completes the proof.

\end{proof}

We now continue with the remainder of the proof for theorem \ref{thm:linear_convergence}
    
\begin{proof}[Proof of Theorem \ref{thm:linear_convergence}]

By Lemma~\ref{lem:oracle_inequality},
\begin{align*}
\norm{ p^{(t + 1)}_{MA} - \hat p}_1 &\leq \Bigg( \frac{(N + 1) (\eta_1 + \overline p_M^{(t)} \delta)}{c_0} + \frac{\eta_2}{c_0} +\frac{\overline p_M^{(t)} \delta}{c_0} \\
        &+ \left(1 + \frac{1}{1 - \epsilon} \right)\frac{ \norm{ p^{(t)}_{MA} - \hat p}_1}{c_0} +  \frac{1}{1 - \epsilon}\frac{\norm{ p^{(t)}_{MA} - \hat p}_1^2}{c_0^2} \Bigg)\norm{ p^{(t)}_{MA} - \hat p}_1\\
        & = \Bigg( \underbrace{\frac{(N + 1) \eta_1 +\eta_2 }{c_0}}_{(I)} + \underbrace{\frac{(N+2)}{c_0}\overline p_M^{(t)} \delta}_{(II)} \\
        &+ \underbrace{\left(1 + \frac{1}{1 - \epsilon} \right)\frac{ \norm{ p^{(t)}_{MA} - \hat p}_1}{c_0} +  \frac{1}{1 - \epsilon}\frac{\norm{ p^{(t)}_{MA} - \hat p}_1^2}{c_0^2}}_{(III)} \Bigg)\norm{ p^{(t)}_{MA} - \hat p}_1
\end{align*}
Thus, all we need is to bound $(I)+(II)+(III)\leq \rho$.

We use the proof by induction.
First, we prove the case of $t=0$.

{\bf Case $t=0$ -- Bounding (I):}
By assumptions (L1-2), it is clear that $(I)\leq \frac{\rho}{2}$.
So we focus on other two terms.

{\bf Case $t=0$ -- Bounding (II):}
Assumption (L6) implies that 
$$
\overline p_M^{(t)} \delta\leq \frac{\rho c_0}{4\exp(\frac{\rho}{1-\rho} \frac{2}{c_0} \|g^{(0)}\|_1) N (N+2)}\leq \frac{\rho c_0}{4 (N+2)}.
$$
Thus, 
$$
(II)\leq \frac{\rho}{4}.
$$

{\bf Case $t=0$ -- Bounding (III):}
By assumptions (L4-5),
\al{
    \left( 1 + \frac{1}{1 - \epsilon} \right) \frac{ \norm{ p_{MA}^{(0)}(\cdot) - \hat p(\cdot)}_1}{c_0} + \left(\frac{1}{1 - \epsilon} \right)\frac{\norm{ p_{MA}^{(0)}(\cdot) - \hat p(\cdot)}_1^2}{c_0^2} &\leq  \frac{3}{1 - \epsilon} \frac{ \norm{ p_{MA}^{(0)}(\cdot) - \hat p(\cdot)}_1}{c_0} \\
    &\leq \frac{\rho}{4}.
}
Putting all the above three bounds together,
we conclude that 
$(I)+(II)+(III)\leq \rho$, which proves the case $t=0$.

{\bf Case $t=1,2,3,\cdots.$}
Now we assume that the conclusion holds for iterations $0,1,2,3,\cdots, t$. 
And we will show that it also holds for $t+1$.
Thus, we have
$$
\norm{ p^{(j)}_{MA} - \hat p}_1\leq \rho \norm{ p^{(j-1)}_{MA} - \hat p}_1
$$
for $j=1,2,3,\cdots, t$.
Since the bound on (I) also applies to $t=1,2,3,\cdots,$
we can directly use the result in $t=0$. So $(I)\leq \rho/2$.

{\bf Case $t=1,2,3, \cdots$ -- Bounding (II):}

For the case $j= t+1:$
\al{
    p^{(t + 1)}(\gamma) &= p_{M}^{(t)}(\gamma) \sum_y \frac{\hat p(y)p_A(y|\gamma)}{p_{MA}^{(t)}(y)} \\
    &\leq p_{M}^{(t)}(\gamma) \sum_y \frac{\hat p(y)}{p_{MA}^{(t)}(y)} \\
    &= p_{M}^{(t)}(\gamma) \sum_y \bigg(1 - \frac{\frac{g^{(t)}(y)}{\hat p(y)}}{1 + \frac{g^{(t)}(y)}{\hat p(y)}}\bigg) \\
    &\leq p_{M}^{(t)}(\gamma)\bigg(1 + \left(\frac{1}{1 - \epsilon}\right)\frac{\norm{g^{(t)}(\cdot)}_{1}}{c_0}\bigg).
}

Thus, the upper bound $\overline p_{M}^{(t+1)}\leq \overline p_{M}^{(t)}(1 + \left(\frac{1}{1 - \epsilon}\right)\frac{1}{c_0}\rho^t \norm{g^{(0)}(\cdot)}_{1})$. Since $\log(1 + x) \leq x$ for $x > -1$,

\al{
    \overline p_{M}^{(t+1)} &\leq \overline p_{M}^{(0)}\prod_{k = 1}^t (1 + \left(\frac{1}{1 - \epsilon}\right)\frac{1}{c_0}\rho^k \norm{g^{(0)}(\cdot)}_{1}) \\
    \log(\overline p_{M}^{(t+1)}) &\leq \log( \overline p_{M}^{(0)}) + \left(\frac{1}{1 - \epsilon}\right)\frac{1}{c_0} \norm{g^{(0)}(\cdot)}_{1}\sum_{k = 1}^t\rho^k \\
    &\leq  \log(\overline p_{M}^{(0)}) + \left(\frac{1}{1 - \epsilon}\right)\frac{1}{c_0} \norm{g^{(0)}(\cdot)}_{1} \frac{\rho}{1 - \rho}.
}
Hence, $\overline p_{M}^{(t+1)} \leq \overline p_{M}^{(0)} \exp\bigg( \left(\frac{1}{1 - \epsilon}\right)\frac{1}{c_0} \norm{g^{(0)}(\cdot)}_{1} \frac{\rho}{1 - \rho}\bigg).$

Now returning to the term $(II)$. A direct application of the above inequality leads to \begin{align*}
    (II) &= \frac{(N+2)}{c_0}\overline p_M^{(0)} \delta\\
    &\leq \frac{(N+2)}{c_0}\exp\bigg( \left(\frac{1}{1 - \epsilon}\right)\frac{1}{c_0} \norm{g^{(0)}(\cdot)}_{1} \frac{\rho}{1 - \rho}\bigg) \overline p_M^{(0)} \delta \\
    &\overset{(L6)}{\leq} \frac{\rho c_0}{4 (N+2)}\\
    &\leq \frac{\rho}{4}.
\end{align*}

{\bf Case $t=1,2,3, \cdots$ -- Bounding (III):}

Using the induction and the fact that $\rho<1$,
\begin{align*}
(III)& = 
\left( 1 + \frac{1}{1 - \epsilon} \right) \frac{ \norm{ p_{MA}^{(t)}(\cdot) - \hat p(\cdot)}_1}{c_0} + \left(\frac{1}{1 - \epsilon} \right)\frac{\norm{ p_{MA}^{(t)}(\cdot) - \hat p(\cdot)}_1^2}{c_0^2} \\
&\leq  \frac{3}{1 - \epsilon} \frac{ \norm{ p_{MA}^{(t)}(\cdot) - \hat p(\cdot)}_1}{c_0} \\
    &\leq  \frac{3}{1 - \epsilon} \frac{ \rho^t\norm{ p_{MA}^{(0)}(\cdot) - \hat p(\cdot)}_1}{c_0} \\
    &\leq  \frac{1}{4} \frac{ \rho\norm{ p_{MA}^{(0)}(\cdot) - \hat p(\cdot)}_1}{c_0} \\
    &\overset{(L4-5)}{\leq} \frac{\rho}{4}.
\end{align*}

As a result, for iteration $t+1$,
we sill have $(I)+(II)+(III)\leq \rho$.
By induction, this holds for all $t$
and the result follows.





\end{proof}

\begin{theorem} \label{thm:regularized_EM} (Listed as Theorem 2.3 in the main paper.)
    Denote the unique global solution $P^*_{M,\mu} = \argmax_{P_M \in \mathcal{P}_{[0,1]}} \ell_{n,\mu}[P_M]$.
    
    Then consider a sequence of latent trait distributions $\{P^{(t)}_{M,\mu}\}_{t = 0}^{\infty}$ generated by the EM algorithm for the regularized likelihood. If $\mu > 0$ and $p_A(y|\gamma)$ is continuous in $\gamma$ for each $y$, then 
    
    $$ \ell_{n,\mu}[P^{(t)}_{M,\mu}] \stackrel{t \to \infty}{\longrightarrow} \ell_{n,\mu}[P^{*}_{M,\mu}].$$ 
    
    and
    
    $$ P^{(t)}_{M,\mu} \stackrel{t \to \infty}{\longrightarrow_w} P^{*}_{M,\mu}$$
    where $\longrightarrow_w$ denotes weak convergence of measures. 
\end{theorem}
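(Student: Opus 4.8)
The plan is to combine the ascent property of the regularized EM recursion \eqref{eq:npem2} with the weak compactness of $\mathcal{P}_{[0,1]}$ and the strict concavity of $\ell_{n,\mu}$, identifying every weak subsequential limit of the iterates with the unique stationary point of the functional. First I would record that the minorize--maximize construction carried out in the main text yields the ascent inequality $\ell_{n,\mu}[P^{(t+1)}_{M,\mu}] \ge \ell_{n,\mu}[P^{(t)}_{M,\mu}]$, with equality precisely when $P^{(t)}_{M,\mu}$ is a fixed point of \eqref{eq:npem2}. Because $\mathcal{D}(P_U\|P_M)$ is strictly convex in $P_M$, the functional $\ell_{n,\mu}$ is strictly concave on the convex set $\mathcal{P}_{[0,1]}$, so its maximizer $P^*_{M,\mu}$ is unique, and the bounded, monotone sequence $\ell_{n,\mu}[P^{(t)}_{M,\mu}]$ converges to some $\ell^\infty \le \ell_{n,\mu}[P^*_{M,\mu}]$; in particular the per-step increments tend to $0$.

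Next I would exploit the regularity that regularization forces on the iterates: directly from \eqref{eq:npem2}, $p^{(t)}_{M,\mu}(\gamma) \ge \mu/(1+\mu)$ for every $\gamma$ and every $t \ge 1$. Since $[0,1]$ is compact, $\mathcal{P}_{[0,1]}$ is weakly sequentially compact by Prokhorov's theorem, so $\{P^{(t)}_{M,\mu}\}$ admits weakly convergent subsequences; testing the lower bound against intervals shows any weak limit $\bar P$ dominates $\tfrac{\mu}{1+\mu}$ times Lebesgue measure, whence $P_U \ll \bar P$ and the regularizer stays finite in the limit. I would also note the two semicontinuity facts used throughout: $P_M \mapsto \ell_n[P_M]$ is weakly continuous, since each integrand $p_A(y|\cdot)$ is bounded and, by hypothesis, continuous on $[0,1]$, whereas $P_M \mapsto \mathcal{D}(P_U\|P_M)$ is only weakly lower semicontinuous, so $\ell_{n,\mu}$ is merely weakly upper semicontinuous.

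The crux is to show that every weak subsequential limit equals $P^*_{M,\mu}$. Viewed as a map on measures, the update sends $P$ to $\phi_P(\gamma)\,P(d\gamma) + \tfrac{\mu}{1+\mu}\,m(d\gamma)$, where $\phi_P(\gamma) = \tfrac{1}{1+\mu}\sum_y \hat p(y) p_A(y|\gamma)/p_{MA}(y)$ depends on $P$ only through the scalars $p_{MA}(y)=\int p_A(y|\cdot)\,dP$. The lower bound of the previous step keeps these scalars bounded away from $0$, so $\phi_P$ varies continuously (uniformly in $\gamma$) under weak perturbations of $P$; this makes the EM map weakly continuous. I would then use the vanishing increments together with the surrogate gain of the minorize--maximize step to force any weak limit $\bar P$ to be a fixed point, and compare the fixed-point equation $p_M(\gamma)\big[(1+\mu) - \sum_y \hat p(y) p_A(y|\gamma)/p_{MA}(y)\big] = \mu$ with the first-order (Gateaux) optimality conditions of the strictly concave program, in the spirit of the infinite-dimensional KKT analysis of \citep{Kosmol2011OptimizationSpaces} and Lindsay's gradient characterization. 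Strict concavity leaves exactly one solution, so $\bar P = P^*_{M,\mu}$.

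Since every weak subsequential limit is the single point $P^*_{M,\mu}$ and $\mathcal{P}_{[0,1]}$ is weakly compact, the whole sequence converges weakly, $P^{(t)}_{M,\mu} \longrightarrow_w P^*_{M,\mu}$. For the value statement, weak continuity of $\ell_n$ gives $\ell_n[P^{(t)}_{M,\mu}] \to \ell_n[P^*_{M,\mu}]$, and it remains to show the regularizer is continuous --- not merely lower semicontinuous --- along the iterates, which I would obtain by upgrading the weak convergence of the densities to an $L^1$/a.e. statement via the uniform lower bound, the uniformly convergent multipliers $\phi_{P^{(t)}}$, and Scheffe's lemma, yielding $\ell^\infty = \ell_{n,\mu}[P^*_{M,\mu}]$. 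I expect this last upgrade to be the main obstacle: weak convergence of the measures does not by itself control the density-dependent penalty $\mathcal{D}(P_U\|P_M)$, and reconciling the measure-level convergence with the density-level functional --- the same tension the regularization was introduced to tame --- is where the adaptation of the techniques of \citep{Chung2015ConvergenceDistributions} must do the real work.
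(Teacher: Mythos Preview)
Your proposal is correct in outline and shares the paper's skeleton --- Prokhorov compactness of $\mathcal{P}_{[0,1]}$, the uniform lower bound $p^{(t)}_{M,\mu}\ge \mu/(1+\mu)$ on the iterates, strict concavity giving a unique maximizer, and the sub--sub--sequence device to promote subsequential to full weak convergence --- but the \emph{core identification step} is genuinely different. The paper does not pass through a fixed--point characterization. Instead, it argues by contradiction in the spirit of \cite{Chung2015ConvergenceDistributions}: if a weak subsequential limit $P^{**}$ were not the maximizer, then by the regularized gradient/optimality lemma there is a set $G^*$ of positive Lebesgue measure on which $\nabla\ell_{n,\mu}[P^{**}](\gamma)>1+\mu$; since the EM update is multiplicative, $p^{(t+1)}_M(\gamma)=\tfrac{1}{1+\mu}\nabla\ell_{n,\mu}[P^{(t)}_M](\gamma)\,p^{(t)}_M(\gamma)$, this forces the iterate densities to blow up on $G^*$, contradicting that they are probability densities. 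Your route instead uses the surrogate gain $\ell_{n,\mu}[P^{(t+1)}]-\ell_{n,\mu}[P^{(t)}]\ge (1+\mu)\,\mathcal{D}(P^{(t+1)}\|P^{(t)})$ to get $\|P^{(t+1)}-P^{(t)}\|_{TV}\to 0$, combines this with weak continuity of the EM map (which you correctly reduce to continuity of the finitely many scalars $p_{MA}(y)$) to conclude that any weak limit is an EM fixed point, and then matches the fixed--point equation with the first--order condition of Lemma~\ref{lem:optimality_criterion}.

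What each approach buys: the paper's blow--up argument is tailored to the multiplicative NPEM structure and avoids proving weak continuity of the update map; your asymptotic--regularity/fixed--point argument is the more portable dynamical--systems template and would transfer more readily to other MM schemes. One point to tighten in your version: you write the fixed--point equation for a density $p_M$, but weak limits of absolutely continuous measures need not be absolutely continuous. You should note that the \emph{measure--level} fixed--point identity $(1-\phi_{\bar P})\,\bar P=\tfrac{\mu}{1+\mu}\,m$ forces the singular part of $\bar P$ to vanish (the right side is absolutely continuous and $1-\phi_{\bar P}$ is continuous), which then legitimizes your density computation. Finally, you are right to flag the upper--semicontinuity--only issue for the regularizer when concluding $\ell_{n,\mu}[P^{(t)}]\to \ell_{n,\mu}[P^*]$; the paper's proof states this conclusion but does not supply an argument beyond monotonicity and the identification $P^{**}=P^*$, so your proposal is no less complete on this point.
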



Before proving the main theorem, we prove a series of lemmas which will be useful in characterizing a maximizer. 
 
\begin{lemma} \label{lem:continuous_maximizer}
    If $\mu >0$ and $p_A(y|\gamma)$ is continuous in $\gamma$ for each $y$ then the maximizer of $\ell_{n,\mu}[\cdot]$, $P^*_M$ and $P_U$ are mutually absolutely continuous i.e. $P_M^* \ll P_U$ and $P_U \ll P_M^*$
\end{lemma}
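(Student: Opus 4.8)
The plan is to exploit the explicit form of the penalty. Since $P_U$ is Lebesgue measure on $[0,1]$, the Lebesgue decomposition $P_M = P_M^{ac}+P_M^{s}$ (with $p_M^{ac}$ the density of the absolutely continuous part) gives
\begin{equation*}
\mathcal{D}(P_U\|P_M) = -\int_0^1 \log p_M^{ac}(\gamma)\,d\gamma,
\end{equation*}
which is finite exactly when $P_U\ll P_M$ (equivalently $p_M^{ac}>0$ a.e.); hence $\ell_{n,\mu}[P_M]=\ell_n[P_M]+\mu\int_0^1\log p_M^{ac}(\gamma)\,d\gamma$, and the penalty only sees the absolutely continuous part. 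I would prove the two inclusions separately.

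For $P_U\ll P_M^*$, I would first note that $\ell_{n,\mu}[P_U]$ is finite: $\mathcal{D}(P_U\|P_U)=0$ and $\ell_n[P_U]=\sum_y \hat p(y)\log\int p_A(y|\gamma)\,d\gamma>-\infty$, since $\int p_A(y|\gamma)\,d\gamma>0$ for every $y$ in the support of $\hat p$. As $P_M^*$ is a maximizer, $\ell_{n,\mu}[P_M^*]\ge \ell_{n,\mu}[P_U]>-\infty$, which forces $\mathcal{D}(P_U\|P_M^*)<\infty$, i.e. $P_U\ll P_M^*$. In particular $p_M^{*,ac}>0$ a.e., so the singular mass $\alpha:=P_M^{s}([0,1])$ satisfies $\alpha<1$ and the absolutely continuous part is genuinely nontrivial.

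For $P_M^*\ll P_U$ I would argue by contradiction, assuming $\alpha\in(0,1)$. The likelihood $\ell_n$ depends on $P_M$ only through the finitely many linear functionals $m_y(P_M)=\int p_A(y|\gamma)\,dP_M(\gamma)$, $y=0,\dots,N$. The strategy is to replace the singular mass by absolutely continuous mass carrying the same functionals: if there is an absolutely continuous measure $\nu$ of total mass $\alpha$ with $\int p_A(y|\gamma)\,d\nu=\int p_A(y|\gamma)\,dP_M^{s}$ for all $y$, then $\tilde P_M=P_M^{ac}+\nu$ is absolutely continuous, has the identical marginal (so $\ell_n[\tilde P_M]=\ell_n[P_M^*]$), yet its density dominates $p_M^{*,ac}$ and is strictly larger on a set of positive measure, so the $\int_0^1\log$ term strictly increases. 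This makes $\ell_{n,\mu}[\tilde P_M]>\ell_{n,\mu}[P_M^*]$, contradicting optimality. Such a $\nu$ exists whenever the normalized moment $\tfrac1\alpha\int p_A(\cdot|\gamma)\,dP_M^{s}$ lies in the relative interior of $\mathrm{conv}(\Gamma)$, where $\Gamma=\{p_A(\cdot|\gamma):\gamma\in[0,1]\}$; this covers every singular part that is not concentrated at isolated extreme points of $\mathrm{conv}(\Gamma)$.

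I expect the main obstacle to be the exceptional case where $P_M^{s}$ is an atom $\alpha\delta_{\gamma_0}$ whose moment $p_A(\cdot|\gamma_0)$ is an extreme point of $\mathrm{conv}(\Gamma)$, since then no absolutely continuous measure matches the moment exactly. Here I would smooth the atom, taking $\nu_\sigma=\tfrac{\alpha}{\sigma}\mathbf{1}_{[\gamma_0,\gamma_0+\sigma]}$, and compare rates as $\sigma\to0$: continuity of $p_A(y|\cdot)$ makes the likelihood perturbation of order $\omega(\sigma)\to0$ (its modulus of continuity at $\gamma_0$), while a direct estimate shows the penalty gain $\int_0^1\log(p_M^{*,ac}+\tfrac{\alpha}{\sigma}\mathbf{1}_{[\gamma_0,\gamma_0+\sigma]})-\int_0^1\log p_M^{*,ac}$ is of order $\sigma\log(1/\sigma)$. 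Showing that the penalty gain strictly dominates the likelihood loss for small $\sigma$—precisely where the continuity hypothesis on $p_A$ is used, and the delicate point of the whole argument—yields the contradiction. Finally, uniqueness of the maximizer (from strict concavity of the penalty, as in the proof of Theorem~\ref{thm:regularized_EM}) confirms that this unique $P_M^*$ is mutually absolutely continuous with $P_U$.
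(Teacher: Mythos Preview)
Your argument for $P_U\ll P_M^*$ is clean and correct. The difficulty is entirely in the other direction, and the rate comparison you propose for the ``extreme-point'' case does not go through under the hypothesis as stated.

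Concretely: when you replace the atom $\alpha\delta_{\gamma_0}$ by $\nu_\sigma=\tfrac{\alpha}{\sigma}\mathbf{1}_{[\gamma_0,\gamma_0+\sigma]}$, the penalty gain is
\[
\mu\int_{\gamma_0}^{\gamma_0+\sigma}\Bigl[\log\bigl(p_M^{*,ac}(\gamma)+\tfrac{\alpha}{\sigma}\bigr)-\log p_M^{*,ac}(\gamma)\Bigr]d\gamma,
\]
which (if $p_M^{*,ac}$ is bounded near $\gamma_0$) is of order $\sigma\log(1/\sigma)$, as you say. But the likelihood perturbation is of order $\omega(\sigma)$, the modulus of continuity of $p_A(y|\cdot)$ at $\gamma_0$. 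Mere continuity allows $\omega(\sigma)$ to decay arbitrarily slowly --- for instance $\omega(\sigma)=1/\log(1/\sigma)$ gives $\omega(\sigma)/(\sigma\log(1/\sigma))\to\infty$, so the likelihood loss swamps the penalty gain for all small $\sigma$. Your claimed dominance therefore fails precisely at the step you flag as delicate; continuity alone does not buy you a quantitative rate, and no choice of $\sigma$ produces the strict improvement you need. A secondary issue: even in the non-extreme case, the assertion that a point in the relative interior of $\mathrm{conv}(\Gamma)$ is realized by an \emph{absolutely continuous} measure is not immediate and needs an argument.

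The paper takes a different route that sidesteps the rate comparison. Rather than perturbing the putative maximizer, it works with the Gateaux derivative $\dot\ell_{n,\mu}(Q;Q_s-Q)=\sum_y\hat p(y)\,p_A(y;Q_s)/p_A(y;Q)-1-\mu$ evaluated at a carefully chosen \emph{continuous} $Q$: either an absolutely continuous maximizer $Q^*$ of the unregularized $\ell_n$ (if one exists), or a slight smoothing $\tilde Q_\delta$ of a discrete NPMLE. The point is that the first-order optimality condition for the \emph{unregularized} problem already gives $\sum_y\hat p(y)\,p_A(y;Q_s)/p_A(y;Q^*)\le 1$, and continuity of $p_A$ is used only qualitatively to make $p_A(y;\tilde Q_\delta)$ close enough to $p_A(y;Q^*)$ that the extra $-\mu$ still forces the derivative non-positive. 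This avoids comparing two vanishing quantities; the $-\mu$ provides a fixed margin.
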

\begin{proof}
    Firstly, if $P_U \not \ll P_M^*$ then $\KL(P_U ||P_M^*) = \infty$ and clearly $P_M^*$ is not a maximizer. Next consider the Lebesgue decomposition theorem \citep{Halmos2013MeasureTheory}. Consider an arbitrary probability measure $Q$ which we compare to $P_U$ (which on the interval is equivalent to the Lebesgue measure).  Unless otherwise stated, singular and continuous will be with respect to this measure $P_U$.  There exist measures $Q_c$ and $Q_s$ such that: 
    \al{
        Q &= Q_c + Q_s \\
        \text{Where } Q_c &\ll P_U \\
        Q_s &\perp P_U
    }
    I.e. $Q_c$ is absolutely continuous with respect to $P_U$ and $Q_s$ is singular with respect to $P_U$. \\
    Next, we consider the KL divergence between $P_U$ and $Q$. Since $Q$ is a probability measure, we can rescale the measures and define $Q = (1 - \lambda)Q_c + \lambda Q_s$ where $Q_c$ and $Q_s$ are propability measures as well. Then
    \al{
        \KL(P_U || Q) &= \int \log\left( \frac{dP_U}{dQ}\right) dP_U \\
        &= \int \log\left( \frac{dP_U}{ (1 - \lambda)dQ_c + \lambda dQ_s}\right) dP_U \\
    }
    Denote $\mathcal{X}_s \subset [0,1]$ the subset of the interval where the singular part of the measure is defined. Since $P_U(\mathcal{X}_s) = 0$ on this subset, $\frac{dP_U}{ (1 - \lambda)dQ_c + \lambda dQ_s}(x) = 0$ for all $x \subset \mathcal{X}_s$.  Hence the KL divergence can be computed directly using only the absolutely continuous part, though this will be scaled according to $\lambda$. 
    \al{
        \KL(P_U || Q) &= \int \log\left( \frac{dP_U}{ (1 - \lambda)dQ_c + \lambda dQ_s}\right) dP_U \\
        &= \int_{[0,1]} \log\left( \frac{dP_U}{ (1 - \lambda)dQ_c + \lambda dQ_s}(x)\right) dP_U(x) \\
        &= \int_{[0,1]} \log\left( \frac{dP_U}{ (1 - \lambda)dQ_c}(x)\right) dP_U(x) \\
        &= -\log(1 - \lambda)   + \int_{[0,1]} \log\left( \frac{dP_U}{dQ_c}(x)\right) dP_U(x)
    }
    
    Secondly we verify that that $\ell_{n,\mu}[P_M]$ is a concave function of $P_M$. We can add a constant to $\ell_{n,\mu}[P_M]$ so that
    \al{
        \ell_{n,\mu}[P_M] - \sum_{y}\hat p(y)\log(\hat p(y)) &= -\mathcal{D}(\hat p|| p_{MA}) - \mu \mathcal{D}(p_U|| p_{M}). 
    }
    Since $p_{MA}$ is linear in $p_M$ and the KL divergence is a strictly concave function, then this regularized likelihood is clearly a strictly concave function in $p_M$.  
    
    Next we consider Theorem 3.4.3 in \citep{Kosmol2011OptimizationSpaces} we have a sufficient and necessary condition for the optimality of a measure $P^*_M$. 
    
    If $K$ is a convex subset of a vector space $X$ and $f : X \to \mathbb{R}$ a convex function, $f$ has a minimal solution $x_0 \in K$ if and only if for all $x \in K$
    \begin{equation}
        \dot f_+(x_0;x - x_0 ) \geq 0
    \end{equation}
    Where 
    \al{
        \dot f_+(x_0;z) &= \lim_{t \to 0^+ } \frac{f(x_0 + tz) - f(x_0)}{t} 
    }
    
    In our case, we define $X$ to be the vector space of signed measures and $K$ probability measures over $[0,1]$.  We can also use this to show if $x$ is a direction which adds a singular measure to a particular $\tilde x_0$ then the Gateaux derivative will be negative, and hence any measure which increases the likelihood cannot have any singular component. Since our log-likelihood functional $\ell_{n,\mu}[\cdot]$ is concave, the theorem above will apply with a flip of the inequality and defining $f = -\ell_{n,\mu}$. \\
    Denote, for any measure $H$, $p_A(y;H) = \int p_A(y|\gamma)dH(\gamma)$. 
    
    Consider $Q$ to be an continuous measure, and define a direction $G = Q_s - Q$ where $Q_s$ is singular. Let $q$ be the corresponding density function to $Q$. 
    \al{
        \ell_{n,\mu}[Q + \lambda G] &= \sum_{y}\hat p(y) \log\left( p_A(y;Q) + \lambda p_A(y;G)\right) - \mu \int_0^1 \log(q(\gamma)) d\gamma \\
        &+\mu\log(1 - \lambda) \\
        &=  \sum_{y}\hat p(y)\left( \log\left( p_A(y;Q)\right) + \log\left( 1 + \lambda \frac{p_A(y;G)}{p_A(y;Q)}\right)\right) - \mu \int_0^1 \log(q(\gamma)) d\gamma \\
        &+\mu\log(1 - \lambda) \\
        &= \ell_{n,\mu}[Q] +  \sum_{y}\hat p(y)\log\left( 1 + \lambda \frac{p_A(y;G)}{p_A(y;Q)}\right) + \mu\log(1 - \lambda) \\
        &=  \ell_{n,\mu}[Q] + \lambda \sum_{y}\hat p(y)\frac{p_A(y;G)}{p_A(y;Q)} + \mu\log(1 - \lambda) + O(\lambda^2) \\
        \implies \dot \ell_{n,\mu}(Q;G) &=  \sum_{y}\hat p(y)\frac{p_A(y;Q_s)}{p_A(y;Q)} - 1 - \mu
    }
    We note that by \cite{Lindsay1995MixtureApplications} in the unregularized case, 
    $Q^*$ is a maximizer of $\ell_{n, \mu = 0}[\cdot]$ if and only if
    $$ \sum_{y}\frac{\hat p(y)p_A(y|\gamma)}{p_{A}(y;Q^{*})}  - 1 \leq 0  $$
    for all $\gamma$.  
    Clearly integrating this over any measure $Q'$ it will also hold that: 
    $$ \sum_{y}\frac{\hat p(y)p_{A}(y;Q')}{p_{A}(y;Q^{*})}  - 1 \leq 0 $$
    Hence we consider 2 cases, whether or not if there exists a continuous measure $Q^*$ which maximizes the unregularized problem. 
    
    \textbf{Case 1: } \textit{There exists $Q^*$ which is continuous which maximizes the unregularized likelihood ($\ell_{n,\mu = 0}$).} 
    
    If this holds then 
    \al{
        \sum_{y}\frac{\hat p(y)p_A(y|\gamma)}{p_{A}(y;Q^{*})}  - 1  &< \sum_{y}\frac{\hat p(y)p_A(y|\gamma)}{p_{A}(y;Q^{*})}  - 1 - \mu \leq -\mu < 0.
    }
    And hence any direction which adds a singular measure will decrease the likelihood. 
    
    \textbf{Case 2: } \textit{Any maximizer $Q^*$ of the unregularized likelihood ($\ell_{n,\mu = 0}$) is not continuous.} \\
    
    If the only maximizers of the unregularized problem are not continuous, then by the fundamental theorem of mixture models there exists a maximizer with at most $N + 1$ discrete support points  \citep{Lindsay1995MixtureApplications}. If $p_A(y|\gamma)$ are all continuous in $\gamma$ for each $y$ then we can define a measure $\tilde Q_\delta$ such that $\tilde Q_\delta$ places the same point mass uniformly within a region of $\pm \delta$ of each of the point masses (some of which may only be $+$ or $-$ if the mass is near the boundary $0$ or $1$) then for all $\delta > 0, \tilde Q_\delta$ is continuous measure.  Since $p_A(y|\gamma)$ is continuous, $p_A(y; Q^*) = p_A(y; \tilde Q_\delta) + \varepsilon(y;\delta)$ and for all $\epsilon > 0 $ there exists a $\delta > 0 $ such that $ |\varepsilon(y;\delta)| < \epsilon $ for all $y$. 
    
    Therefore, computing the derivative in the direction of any singular measure from $\tilde Q_\delta$
    \al{
        \dot \ell_{n,\mu}(\tilde Q_\delta;G) &= \sum_{y}\hat p(y)\frac{p_A(y;Q_s) }{p_A(y;\tilde Q_\delta)} - 1 - \mu \\
        &= \sum_{y}\hat p(y)\frac{p_A(y;Q_s) }{p_A(y;\tilde Q_\delta)} - 1 - \mu \\
        &= \sum_{y}\hat p(y)\frac{p_A(y;Q_s) }{p_A(y;Q^{*})}\left( \frac{1}{1 + \frac{\varepsilon(y; \delta)}{p_A(y; Q^*)}}\right)  - 1 - \mu
    }
    Of course, we can pick $\epsilon$ such that $\epsilon < \frac{1}{2}\inf_y p_A(y; Q^*)$ then
    \al{
        \left| \frac{1}{1 + \frac{\varepsilon(y; \delta)}{p_A(y; Q^*)}}\right| &\leq 1 + 2  \frac{\left|\varepsilon(y; \delta)\right|}{p_A(y; Q^*)}.
    }
    Now we also ensure $\epsilon \leq \min\{\mu, \frac{1}{2}\} \inf_y p_A(y; Q^*) $ and therefore: 
    \al{
        \left| \frac{1}{1 + \frac{\varepsilon(y; \delta)}{p_A(y; Q^*)}}\right| &\leq (1 + \mu)
    }
    Therefore since there exists $\delta > 0 $ such that the above holds, $\tilde Q_\delta$ is continuous and the Gateaux derivative in the direction of any singular measure is negative, therefore the maximizer must be a continuous measure $(P^*_M \ll P_U)$, and therefore we can express the maximizer in terms of its density function. 
\end{proof}

We next introduce another lemma, which will justify that the density function of the  maximizer will be bounded below
\begin{lemma}\label{lem:bounded_below}
    The maximizer of $\ell_{n, \mu}$,  $P_M^*$  satisfies  
    \begin{equation}
        \essinf p_M^*(\gamma) \geq \frac{\mu}{1 + \mu}
    \end{equation}
    where $p_M^*$ is the corresponding density function of the measure $P_M^*$, and $ \essinf f$ is the essential infimum of f, the maximum value $\alpha$ such that the set $\{x: f(x) < \alpha\}$ has (Lebesgue) measure $0$. 
\end{lemma}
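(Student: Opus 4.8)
The plan is to read the bound off the additive constant $\tfrac{\mu}{1+\mu}$ that appears in the regularized EM update \eqref{eq:npem2}, by first showing that the maximizer $P_M^*$ is a fixed point of that update. Begin by recording two facts. First, since $P_U$ has density $1$ on $[0,1]$, the penalty simplifies to $\KL(P_U\|P_M)=-\int_0^1\log p_M(\gamma)\,d\gamma$, so a large density is rewarded. Second, by Lemma~\ref{lem:continuous_maximizer} the maximizer $P_M^*$ is mutually absolutely continuous with $P_U$, hence admits a density $p_M^*$ that is positive a.e.; in particular $p_{MA}^*(y)=\int p_A(y|\gamma)p_M^*(\gamma)\,d\gamma>0$ for every $y$ (using positivity of the measurement kernel), so the posterior $p_M^*(\gamma|y)=p_A(y|\gamma)p_M^*(\gamma)/p_{MA}^*(y)$ is well defined and nonnegative for a.e.\ $\gamma$.

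Let $T$ denote the regularized EM operator defined by \eqref{eq:npem2}. The Gibbs-inequality derivation preceding \eqref{eq:npem2} gives the ascent property $\ell_{n,\mu}[T(P)]\ge\ell_{n,\mu}[P]$ for every $P$. Applying this at $P=P_M^*$ yields $\ell_{n,\mu}[T(P_M^*)]\ge\ell_{n,\mu}[P_M^*]$. But $P_M^*$ is the \emph{unique} global maximizer (strict concavity of $\ell_{n,\mu}$ when $\mu>0$, as used in Theorem~\ref{thm:regularized_EM}), so the inequality must be an equality and $T(P_M^*)=P_M^*$. Substituting this fixed point into the first form of \eqref{eq:npem2} gives, for a.e.\ $\gamma$,
\begin{equation*}
    p_M^*(\gamma) = \frac{1}{1+\mu}\,\E_{\hat p(y)}\!\big[p_M^*(\gamma|y)\big] + \frac{\mu}{1+\mu}.
\end{equation*}
Because the conditional expectation $\E_{\hat p(y)}[p_M^*(\gamma|y)]=\sum_{y}\hat p(y)\,p_A(y|\gamma)p_M^*(\gamma)/p_{MA}^*(y)$ is nonnegative, we conclude $p_M^*(\gamma)\ge\frac{\mu}{1+\mu}$ for a.e.\ $\gamma$, i.e.\ $\essinf p_M^*(\gamma)\ge\frac{\mu}{1+\mu}$, which is exactly the claim.

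I expect the main obstacle to be making the fixed-point step fully rigorous: one must rule out that the ascent $\ell_{n,\mu}[T(P_M^*)]\ge\ell_{n,\mu}[P_M^*]$ is vacuous (this is handled by strict concavity forcing equality, hence $T(P_M^*)=P_M^*$), and one must know the resulting identity holds \emph{pointwise a.e.}\ rather than only as measures, which relies on the continuity of $p_A(y|\cdot)$ and the positivity of $p_{MA}^*(y)$ established above. As an alternative that sidesteps the operator $T$ entirely, I would instead derive the Euler--Lagrange condition directly through the Gateaux-derivative machinery already set up in the proof of Lemma~\ref{lem:continuous_maximizer}: computing $\dot\ell_{n,\mu}(P_M^*;Q-P_M^*)$ for an arbitrary absolutely continuous probability measure $Q$ with density $q$ yields $\int_0^1\Phi(\gamma)q(\gamma)\,d\gamma-(1+\mu)$, where $\Phi(\gamma)=\sum_y\hat p(y)p_A(y|\gamma)/p_{MA}^*(y)+\mu/p_M^*(\gamma)$. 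Optimality forces $\esssup\Phi\le 1+\mu$, while the choice $q=p_M^*$ makes the average equal $1+\mu$, so $\Phi(\gamma)=1+\mu$ a.e.; solving for $p_M^*$ gives $p_M^*(\gamma)=\mu\big/\big[(1+\mu)-\sum_y\hat p(y)p_A(y|\gamma)/p_{MA}^*(y)\big]$, and nonnegativity of the subtracted sum makes the denominator at most $1+\mu$, again delivering $p_M^*(\gamma)\ge\frac{\mu}{1+\mu}$.
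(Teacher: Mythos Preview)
Your primary argument is correct and is essentially the same as the paper's: both show that the maximizer must be a fixed point of the regularized EM operator $T$ and then read off the bound from the additive constant $\tfrac{\mu}{1+\mu}$ in \eqref{eq:npem2}; the paper merely phrases this as a contradiction (if $\essinf p_M^*<\tfrac{\mu}{1+\mu}$ then one EM step produces a distinct measure with strictly larger $\ell_{n,\mu}$), while you state the fixed-point conclusion directly via ascent plus uniqueness. Your alternative route through the Gateaux derivative is precisely the first-order optimality condition that the paper isolates separately as Lemma~\ref{lem:optimality_criterion}, so it is a legitimate second proof but not new machinery relative to the paper.
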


\begin{proof}
    We prove this by contradiction and using the monotonicity of the EM algorithm.  Suppose $\tilde P_M$ is a maximizer such that  $\essinf_{\gamma} \tilde P_M(\gamma) < \frac{\mu}{1 + \mu}$. Denote $\tilde P_{M_2}(\gamma)$ as the density function from taking one regularized NPEM step from $\tilde P_{M}(\gamma)$. 
    \al{
        \ell_{n, \mu}[\tilde P_{M_2}] - \ell_{n, \mu}[\tilde P_{M}] &\geq (1 + \mu) \mathcal{D}(\tilde P_{M_2} || \tilde P_{M})
    }
    But clearly by the monotonicity of the EM algorithm.  
    \al{
        \tilde p_{M_2}(\gamma) &= \left( \frac{1}{1 + \mu}\right)\sum_{y}\frac{\hat p(y)p_A(y|\gamma)\tilde p_{M}(\gamma)}{\tilde p_{MA}(y)} + \left( \frac{\mu}{1 + \mu}\right) \\
        &\geq \frac{\mu}{1 + \mu}
    }
    And if $ \tilde p_{M_2} \not =  \tilde p_{M}$ except for on a region of Lebesgue measure $0$ then $\mathcal{D}(\tilde P_{M_2}||\tilde P_M) > 0$ and we guarantee $\ell_{n, \mu}[\tilde P_{M_2}] > \ell_{n, \mu}[\tilde P_{M}] $ and hence $\tilde P_{M}$ is not a maximizer, a contradiction.
\end{proof}

Next we introduce a final lemma before proving the main theorem. 
\begin{lemma} \label{lem:optimality_criterion}
    A distribution (measure) $P_M^*$ is a maximizer of $\ell_{n, \mu}$ if and only if: 
    \begin{equation}
         \sum_{y}\frac{\hat p(y)p_A(y|\gamma)}{p_{MA}^{*}(y)} + \frac{\mu}{p_M^*(\gamma)} - 1 - \mu \leq 0 \quad a.e.
    \end{equation}
    Where $p_M^*$ is the corresponding density function to $P_M^*$
\end{lemma}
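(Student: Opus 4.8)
The plan is to characterize $P_M^*$ through its first-order (Gateaux) optimality condition and then translate that condition into the stated pointwise inequality. Since $\ell_{n,\mu}$ was shown to be strictly concave on the convex set of probability measures in the proof of Lemma~\ref{lem:continuous_maximizer}, Theorem 3.4.3 of \citep{Kosmol2011OptimizationSpaces} (applied to $f=-\ell_{n,\mu}$) says that $P_M^*$ maximizes $\ell_{n,\mu}$ if and only if $\dot\ell_{n,\mu,+}(P_M^*; Q-P_M^*)\le 0$ for every probability measure $Q$ on $[0,1]$. So the whole argument reduces to computing this directional derivative and showing that its nonpositivity for all $Q$ is equivalent to $D(\gamma)\le 0$ a.e., where I abbreviate $D(\gamma)=S(\gamma)+\mu/p_M^*(\gamma)-1-\mu$ with $S(\gamma)=\sum_y \hat p(y)p_A(y|\gamma)/p_{MA}^*(y)$.

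First I would compute the derivative. Writing $Q=Q_{ac}+Q_{sing}$ for the Lebesgue decomposition of $Q$ relative to $P_U$ and setting $\lambda=Q_{sing}([0,1])$, the mixture-likelihood term differentiates along the path $(1-t)P_M^*+tQ$ (exactly as in the computation inside Lemma~\ref{lem:continuous_maximizer}) to $\int S\,dQ-1$, while the regularization term equals $\mu\int_0^1\log r_{t,c}\,d\gamma$, where $r_{t,c}$ is the Lebesgue density of the continuous part of the mixture; the singular-mass normalization $\log(1-t\lambda)$ cancels just as in that lemma, and the derivative at $t=0$ is $\mu\big(\int \frac{1}{p_M^*}\,dQ_{ac}-1\big)$. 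The crucial enabling fact here is Lemma~\ref{lem:bounded_below}: since $p_M^*\ge \mu/(1+\mu)$ a.e., we have $1/p_M^*\le (1+\mu)/\mu$, so $\int \frac{1}{p_M^*}\,dQ_{ac}\le (1+\mu)/\mu<\infty$ for every $Q$, which both dominates the difference quotients (justifying differentiation under the integral) and guarantees the derivative is always finite. Collecting terms and using $S+\mu/p_M^*=D+1+\mu$ on $Q_{ac}$ gives the clean decomposition $\dot\ell_{n,\mu,+}(P_M^*;Q-P_M^*)=\int D\,dQ_{ac}+\int S\,dQ_{sing}-(1+\mu)\lambda$.

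For the ``only if'' direction I would restrict to absolutely continuous $Q$ (so $\lambda=0$ and $Q$ has a density $q$), for which the derivative is simply $\int_0^1 D(\gamma)q(\gamma)\,d\gamma$. Nonpositivity for every density $q$ forces $D\le 0$ a.e.: if $D>0$ on a set $A$ of positive Lebesgue measure, then $q=\mathbf{1}_A/|A|$ yields a strictly positive derivative, a contradiction. For the ``if'' direction I would verify $\dot\ell_{n,\mu,+}(P_M^*;Q-P_M^*)\le 0$ for an arbitrary $Q$ assuming $D\le 0$ a.e. The absolutely continuous part is immediate, $\int D\,dQ_{ac}\le 0$. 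The singular part is the one genuinely delicate point: $Q_{sing}$ lives on a Lebesgue-null set, so an almost-everywhere bound on $S$ does not directly control $\int S\,dQ_{sing}$. I would resolve this by noting that $S$ is continuous in $\gamma$ (a finite sum of the continuous functions $p_A(y|\cdot)$ divided by the positive constants $p_{MA}^*(y)$) and that $D\le 0$ a.e. together with $\mu/p_M^*\ge 0$ gives $S\le 1+\mu$ a.e.; continuity upgrades this to $S(\gamma)\le 1+\mu$ for \emph{every} $\gamma$, whence $\int S\,dQ_{sing}\le (1+\mu)\lambda$ and the total derivative is $\le 0$.

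The main obstacle is thus the treatment of singular perturbation directions: the pointwise criterion is stated only almost everywhere, yet optimality must be checked against measures supported on null sets. The resolution hinges on combining the lower bound of Lemma~\ref{lem:bounded_below} (which keeps every directional derivative finite and makes the regularization-term derivative well defined) with the continuity of $S$ (which propagates the almost-everywhere bound onto the singular support). Everything else is the routine bookkeeping of expanding $\log$ along the mixture path, which I would suppress.
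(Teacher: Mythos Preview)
Your proof is correct and follows the same high-level strategy as the paper: compute the Gateaux derivative of $\ell_{n,\mu}$ at $P_M^*$ and invoke Theorem~3.4.3 of \citep{Kosmol2011OptimizationSpaces}. The differences are in execution. For the ``only if'' direction the paper restricts to absolutely continuous test directions (citing Lemma~\ref{lem:continuous_maximizer}) and then passes to point-mass limits via the portmanteau lemma, relying on a claim that the density $p_M^*$ has discontinuities only on a null set; you instead use indicator functions of positive-measure sets, which is more direct and avoids that claim. For the ``if'' direction the paper is essentially silent, whereas you handle arbitrary (including singular) test measures by observing that the continuous function $S(\gamma)=\sum_y \hat p(y)p_A(y|\gamma)/p_{MA}^*(y)$ inherits the bound $S\le 1+\mu$ everywhere from the a.e.\ inequality. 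Your explicit use of Lemma~\ref{lem:bounded_below} to dominate the difference quotients and keep the regularization-term derivative finite is also a detail the paper leaves implicit. One small quibble: there is no ``$\log(1-t\lambda)$ cancellation'' here in the sense of Lemma~\ref{lem:continuous_maximizer}; the regularization term along $(1-t)P_M^*+tQ$ is simply $\mu\int\log((1-t)p_M^*+tq_{ac})\,d\gamma$ with no singular-mass correction, and differentiating gives your $\mu(\int p_M^{*-1}\,dQ_{ac}-1)$ directly. This does not affect the validity of your argument.
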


\begin{proof}
     Recalling Theorem 3.4.3 in \citep{Kosmol2011OptimizationSpaces} we have a sufficient and necessary condition for the optimality of a measure $P^*_M$. 
     
     We can define a direction $G = \tilde P_M - P^*_M$ and by lemma \ref{lem:continuous_maximizer} we only need to consider the directions $G$ which admit a density function and therefore can express the directions in terms of the difference of densities $g$. Then  
    \al{
        \ell_{n,\mu}[P^*_M + tG] &= \sum_{y} \hat p(y)\log\left( \int p_A(y|\gamma)\left(p_M^*(\gamma) + tg(\gamma)\right)d\gamma\right) + \mu \int \log\left(p_M^*(\gamma) + tg(\gamma)\right)d\gamma \\
        &= \sum_{y} \hat p(y)\log\left( \int p_A(y|\gamma)p_M^*(\gamma)d\gamma\right) + \mu \int \log\left(p_M^*(\gamma)\right)d\gamma  + \\
        &\sum_{y} \hat p(y) \log\left(1 +  t \int  \frac{p_A(y|\gamma)g(\gamma)}{p^*_{MA}(y)}d\gamma\right) + \mu \int \log\left(1 + t\frac{g(\gamma)}{p^*_M(\gamma)}\right)d\gamma \\
        &= \ell_{n,\mu}[P^*_M] + t \int \left( \sum_{y} \frac{\hat p(y)}{p^*_{MA}(y)}p_A(y|\gamma) + \mu \frac{1}{p^*_M(\gamma)} \right)  g(\gamma) d\gamma + O(t^2) \\
        \implies \dot \ell_{n,\mu}(Q;G) &= \int \left( \sum_{y} \frac{\hat p(y)}{p^*_{MA}(y)}p_A(y|\gamma) + \mu \frac{1}{p^*_M(\gamma)} \right)  g(\gamma)d\gamma \\
        &= \sum_{y} \frac{\hat p(y)}{p^*_{MA}(y)}\int p_A(y|\gamma)p_M(\gamma)d\gamma + \mu \int \frac{\tilde p_M(\gamma)}{p^*_M(\gamma)}d\gamma - 1 - \mu 
    }
    Since this holds for any $\tilde P_M$ which are continuous, at each point on $\gamma \in [0,1]$ this will hold for a sequence of continuous measures converging weakly to a point mass for each point on the interval.  Since $P_M^*$ is a continuous measure, it has a corresponding density function which has discontinuities at most on a set of Lebesgue measure $0$. Therefore according to the portmanteau lemma
    \al{
         \sum_{y} \frac{\hat p(y)}{p^*_{MA}(y)} p_A(y|\gamma) + \mu \frac{1}{p^*_M(\gamma)} - 1 - \mu &\leq 0 \quad a.e. 
    }
    
    We note that with a minor abuse of notation, if $g \in L^2_{[0,1]}$ then we could express the Gateaux derivative using the inner product of a functional gradient element $\nabla  \ell_{n,\mu}[P_M]$ and the direction $g$.  
    
    $$\nabla  \ell_{n,\mu}[P_M] = \sum_{y} \frac{\hat p(y)}{p_{MA}(y)}p_A(y|\cdot) + \mu \frac{1}{p_M(\cdot)}$$ and 
    $$ \dot f_+(P_M;G) = \langle \nabla  \ell_{n,\mu}[P_M], g\rangle .$$
    
    This functional gradient itself is a function of $\gamma$ and is useful for defining the NPEM algorithm $$p_M^{(t + 1)}(\gamma) = \frac{\nabla  \ell_{n,\mu}[P_M](\gamma)}{1 + \mu}p_M^{(t )}(\gamma),$$ as well as the optimality criterion $$\nabla  \ell_{n,\mu}[P_M](\gamma) \leq 1 + \mu \quad a.e.$$
    
    \textbf{Remark: } As we proved in Lemma \ref{lem:bounded_below}, $1/p^*_M(\gamma)$ will be defined due to the lower bound.  We compare this result to the optimality criterion of \citep{Lindsay1995MixtureApplications} a similar criteria is developed for the unregularized mixing distribution estimation problem, for which $p^*_M$ is a maximizer if and only if
    $$ \sum_{y}\frac{\hat p(y)p_A(y|\gamma) p^*_M(\gamma)}{p_{MA}^{*}(y)}  - 1 \leq 0 \quad \forall \gamma$$
    which is nearly equal to our condition when $\mu = 0$.  The main difference is due to the possibility of discontinuity in $\frac{1}{p_M^*(\gamma)}$ lemma \ref{lem:optimality_criterion} only holds up to a set of Lebesgue measure $0$ while the unregularized case holds for all $\gamma$. 
\end{proof}

Now we have all the required lemmas to prove the main part of the theorem. 

\begin{proof}[Proof of Theorem \ref{thm:regularized_EM}]

We use a similar technique as in the convergence of the mixing distribution.  \citep{Chung2015ConvergenceDistributions}.  Denote the sequence of measures generated by the EM algorithm as $\{P^{(t)}_M\}_{t = 0}^\infty$.  Since these are all measures defined on a compact i.e. $[0,1] \subset \mathbb{R}$, this sequence is tight. By Prokorov's theorem \citep{Billingsley2013ConvergenceMeasures}, this sequence is also sequentially compact. 

Next, due to the fact this sequence is sequentially compact, there must exist a convergent sub-sequence $\{P^{(t_k)}_M\}_{k = 0}^\infty$, where with limit $P^{**}_M$. Note that since $\ell_{n, \mu}[P^{(t)}_M]$ is monotone, and $ \ell_{n, \mu}[P^{(t + 1)}_M] - \ell_{n, \mu}[P^{(t)}_M] \geq \mathcal{D}(P^{(t)}_M||P^{(t + 1)}_M)$ every sub-sequence must have their likelihood converge to the same likelihood value i.e. $\ell_{n, \mu}[P^{**}_M]$.  Suppose that $P^{**}_M$ is not the global optimum for $\ell_{n, \mu}$, then by Lemma \ref{lem:optimality_criterion} there must be some region $ G^* \subset [0,1]$  with non-zero Lebesgue measure where $\nabla \ell_{n, \mu}[P_M^{**}](\gamma) \geq \delta' > 1 + \mu $ for $\gamma\in G^*$.  Since $\{P^{(t_k)}_M\}_{k = 0}^\infty$ converges weakly to  $P_M^{**}$ then $\nabla \ell_{n, \mu}[P_M^{(t_k)}](\gamma) \geq \delta > 1 + \mu $ for $\gamma\in G^*$ for all $k$ greater than some $K$.

Thus, 
\al{
    p^{(t_{k + 1})}_M(\gamma) &= \left(\frac{\nabla \ell_{n, \mu}[P_M](\gamma)}{1 + \mu}\right)^{t_{k + 1} - t_{k}}p^{(t_{k + 1})}_M(\gamma) \geq \frac{\delta}{1 + \mu} p^{(j )}_M(\gamma)
}

which further implies since ${t_{k + 1} - t_{k}} \geq 1$
\al{
    p^{(t_{k + 1})}_M(\gamma) &\geq \left(\frac{\delta}{1 + \mu}\right)p^{(t_k)}_M(\gamma).
}
Therefore, when $k\to \infty$, $p^{(t_{k})}_M(\gamma)$ diverges and is not a probability density function, hence the sequence converges uniquely to $p^{*}_M$. As a result, 
\al{
    \ell_{n, \mu}[P_M^{(t)}] &\stackrel{t \to \infty}{\longrightarrow} \ell_{n, \mu}[P_M^{*}]
}

Furthermore, if we consider any sub-sequence of $\{P_{M}^{(t)}\}_{t = 0}^\infty$, $\{P_{M}^{(t_k)}\}_{k = 0}^\infty$, then the sub-sequence must have a further sub-sub sequence which converges to $P_M^*$ weakly by the same argument as above.  Thus, since every sub-sequence has a sub-sub sequence which converges to the maximum the primary sequence $\{P_{M}^{(t)}\}_{t = 0}^\infty$ must converge weakly to $P_M^*$.
\end{proof}

\section{Computational Details} \label{supp:sec:computational_details}

We first elaborate on the details of the nonparametric EM algorithm. 
Recall that the updates in the nonparametric EM algorithm is:
\al{
    p^{(j + 1)}(\gamma|x) &= \sum_y\frac{\hat p(y)p_A(y|\gamma)p_M^{(j)}(\gamma|x)}{p^{(j)}_{MA}(y)}.
}

Although we have a closed form for this updated density, it cannot be succinctly be summarized by set of parameters.  Instead, we consider for each $x$ of interest, sampling $J$ times from the updated distribution and seleting a set of $K$ uniformly spaced quantiles of the distribution. The resultant approximated $p^{(j + 1)}(\gamma|x)$ will consist of a mixture of uniform densities, placed between the estimated quantiles. A similar idea has been proposed for non-parametric conditional independence testing \citep{Petersen2020TestingCopulas}. 

Therefore, if we let $F^{(j + 1)}(\gamma|x)$ to be the conditional CDF corresponding to $p_M^{(j + 1)}(\gamma|x)$. Then  consider a set of quantiles to be simultaneously estimated $\mathcal{T} = \{\tau_1, \dots, \tau_M\}$, consider a quantile regression procedure which computes estimates for $q_{k}(x) = \mathcal{Q}_{p^{(j + 1)}(\gamma|x)}^{\tau_k}(\gamma|x)$ the true conditional quantile functions.  Denote the linear approximation to the conditional distribution function $F(\gamma|x)$

\begin{equation}
F_{L}(\gamma|x) = \begin{cases}
&\gamma\frac{\tau_1}{ q_{1}(x)},\quad \mbox{if $\gamma\leq  q_{1}(x)$}, \\
&\tau_j+ (\tau_{j+1}-\tau_j)\frac{\gamma- q_{j}(x)}{ q_{j + 1}(x)- q_{j}(x)},\quad \mbox{if $ q_{j}(x) < \gamma\leq  q_{j + 1}(x)$}, \\
&\tau_M + (1-\tau_M)\frac{\gamma- q_{M}(x)}{1- q_{M}(x)},\quad \mbox{if $q_{M}(x)<\gamma$}.
\end{cases}
\end{equation}

And from each of the points of interest, we can estimate the sample quantiles to obtain a set of functions $\{\hat q_{k}(x)\}_{k = 1}^M$ which allow us to define the estimated linearization of the cdf. 

\begin{equation} \label{eq:linear_cdf}
\hat F_{L}(\gamma|x) = \begin{cases}
&\gamma\frac{\tau_1}{ \hat q_{1}(x)},\quad \mbox{if $\gamma\leq  \hat q_{1}(x)$}, \\
&\tau_j+ (\tau_{j+1}-\tau_j)\frac{\gamma- \hat q_{j}(x)}{ \hat q_{j + 1}(x)- \hat q_{j}(x)}, \quad \mbox{if $ \hat q_{j}(x) < \gamma\leq  \hat q_{j + 1}(x)$}, \\
&\tau_M + (1-\tau_M)\frac{\gamma- \hat q_{M}(x)}{1- \hat q_{M}(x)},\quad \mbox{if $\hat q_{M}(x)<\gamma$}.
\end{cases}
\end{equation}

Since we are able to sample from $p^{(j+1)}(\gamma|x)$ arbitrarily many times, it is clear that we can approximate $F_{L}(\gamma|x)$ arbitrarily well at each step.  This linear interpolation of the quantiles additionally acts as a smoothing step in the EM algorithm.

\subsection*{Nonparametric EM sampling}
All of the nonparametric EM algorithms rely on sampling from: 

\al{
    p^{(t + 1)}_M(\gamma) &= \sum_{y} p^{(t)}_{MA}(\gamma|y)\hat p(y)
}
At any step $ p^{(t)}_M(\gamma)$ is approximated by a mixture of uniform distributions, specified by the locations of the quantiles. In order to sample from $p^{(t + 1)}_M(\gamma)$ we do the following:

\begin{algorithm}[H] 
  1. Set $n_{accept} = 0$ \\
 \While{ $n_{accept} < J$}{
  a. Sample $\tilde Y^{'}$ from $\hat p(y)$ \\
  b. Define $S = \text{True}$ \\
  \While{ $S$ }{
    i. Sample $\tilde \gamma, \tilde Y$ from $p_{MA}(y, \gamma) = p_A(y|\gamma)p^{(t)}_M(\gamma)$ \\
    ii. \If{$\tilde Y = \tilde Y^{'}$}{
        Accept $\tilde \gamma$ \\
        Set $S = \text{False}$
    }   \Else{
        Reject $\tilde \gamma$
    }
    
  }
 }
 \caption{EM algorithm sampling} 
 \label{alg:EM_sampling}
\end{algorithm}

A brief justification is shown below. 
\begin{proof}
    Let $f(\gamma^*)$ be the density of the accepted $\tilde \gamma$
    \al{
         f(\gamma^*) &= f(\tilde \gamma| \text{ accept } \tilde \gamma) \\
         &= \sum_{y'} \hat p(y') f(\tilde \gamma| \tilde Y^{'} = y',  \tilde Y = y') \\
         \text{By independence } f(\tilde \gamma| \tilde Y^{'} = y',  \tilde Y = y') & = f(\tilde \gamma| \tilde Y = y') \\
         &= p_{MA}(\tilde \gamma|y') \\
         \implies f(\gamma^*) &= \sum_{y'} \hat p(y')p_{MA}(\gamma|y')
    }
    
\end{proof}

We approximate $p_M^{(t + 1)}$ at each step by estimating its quantiles from a set $\mathcal{T}$ then use a mixture of uniforms where the outer boundaries are defined by the estimated quantiles of $p_M^{(t + 1)}$.  We can numerically compute the likelihood by sampling from $p_M$ to estimate: $p_{MA}$.  Since we are approximating the EM step, exact computation is not possible, it is possible for the likelihood to decrease.  We set a threshold so that if the likelihood does not increase by a minimum amount, or decreases then the algorithm terminates. 


\section{Additional simulations}
In this supplement we begin with a comparison of our two algorithms and show how the geometric program is much faster than the nonparametric EM for computing a mixing density.  Additionally we also illustrate the effect of smoothing of the estimated mixing density. 

\subsection{Comparison: NPEM vs GP} \label{supp:sec:speed_test}
We illustrate the difference in performance of the two choices for the latent distribution, the nonparametric EM (NPEM) algorithm and our binned geometric program (GP). We consider a version of the NPEM in which we sample $1000$ times at each iteration which we use to estimate the quantiles levels $\tau \in \{\frac{i}{1000}\}_{i = 1}^{1000}$.  We compare this to a binned approximation where we discretize the latent distribution using $1000$ bins. 

We consider a set of models where $h \in [0.2, 8]$ and the correct measurement model is known. We sample datasets of size $n = 100$ and fit the likelihood for different regularization parameters $\mu$.  
\al{
    \gamma_i &\sim \text{Beta}(12,5) \\
    Y_{i}|\gamma_i &\stackrel{iid}{\sim} MKM(K = \text{Gaussian}, h = h, \gamma = \gamma_i) 
}

We compare the mean time per fit of each method as well as compute the likelihood value for each.  We plot the mean difference in the likelihood as a function of the bandwidth.

\begin{figure}[htp!]
\centering
\includegraphics[height = 0.33\textheight]{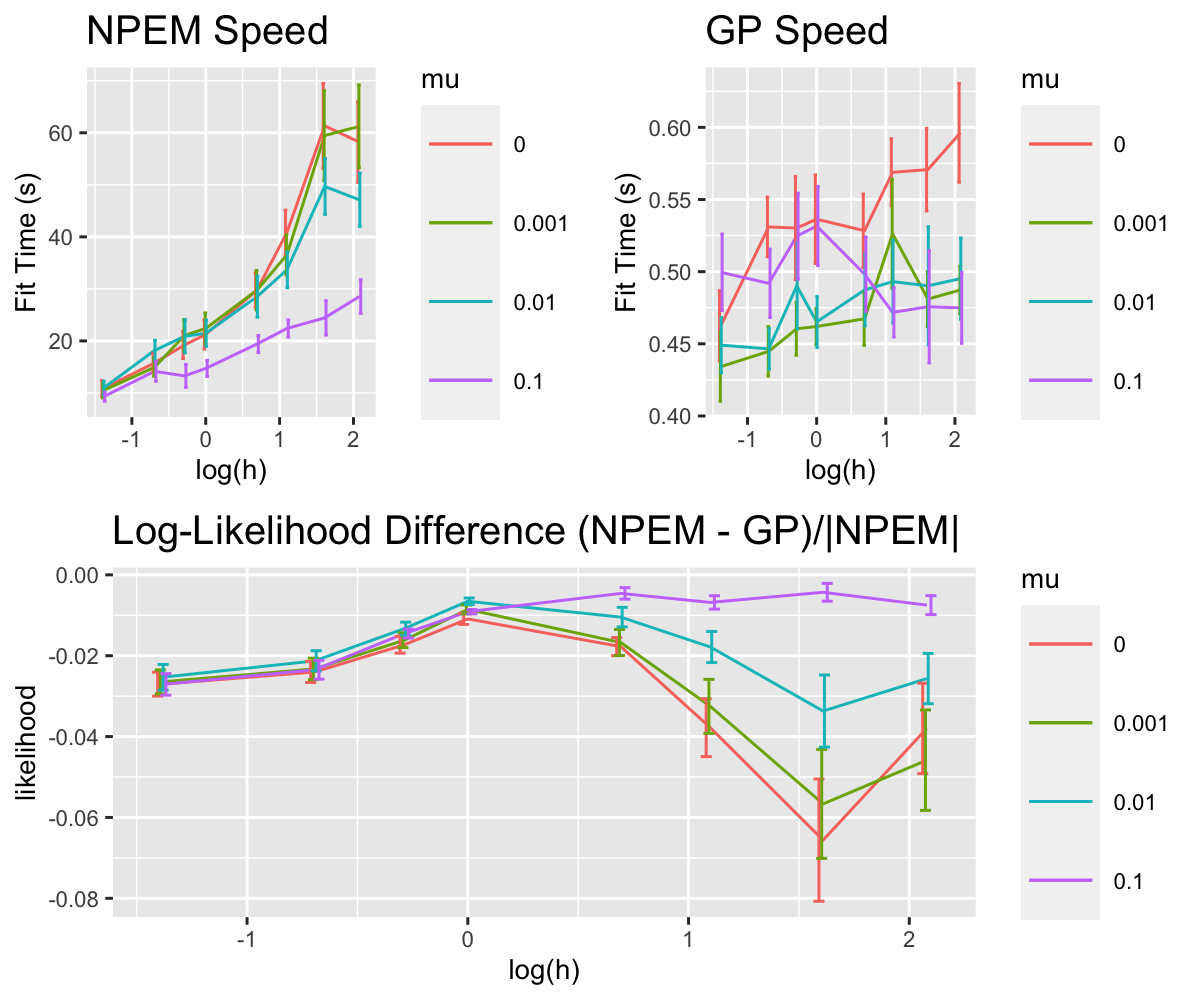}
\caption{Speed difference between the NPEM and GP methods. Note that the relative difference in likelihood for the NPEM algorithm is approximately 0.3\% to 1.5\% less than the Geometric Program, which can compute the estimate faster.}
\label{fig:SpeedTest}
\end{figure}

From Figure \ref{fig:SpeedTest} it is clear that under this setting, the GP algorithm tends to obtain a larger likelihood value and is faster in practice. Additionally, we find that in both algorithms, the time required for optimization increases as $h$ increases.  This is more clear in the NPEM, which agrees with our results from Theorem 3.1 in supplement A.

\subsection{Recovering $p(y)$ and $p_M(\gamma)$}

In this section, we examine the effect of regularization on the estimated latent density.  We simulate a dataset sample of size 100 according to the following true latent distribution and measurement model:
\al{
    \gamma_i &\sim \text{Beta}(12,5) \\
    Y_i|\gamma_i &\sim MKM(K = \text{Gaussian}, h = 2, \gamma = \gamma_i)
}

We illustrate the effect of regularization in the figure below. Additionally we used a discrete approximation of the latent distribution with $1000$ bins. 

\begin{figure}[htp!]
\centering
\includegraphics[height = 0.33\textheight]{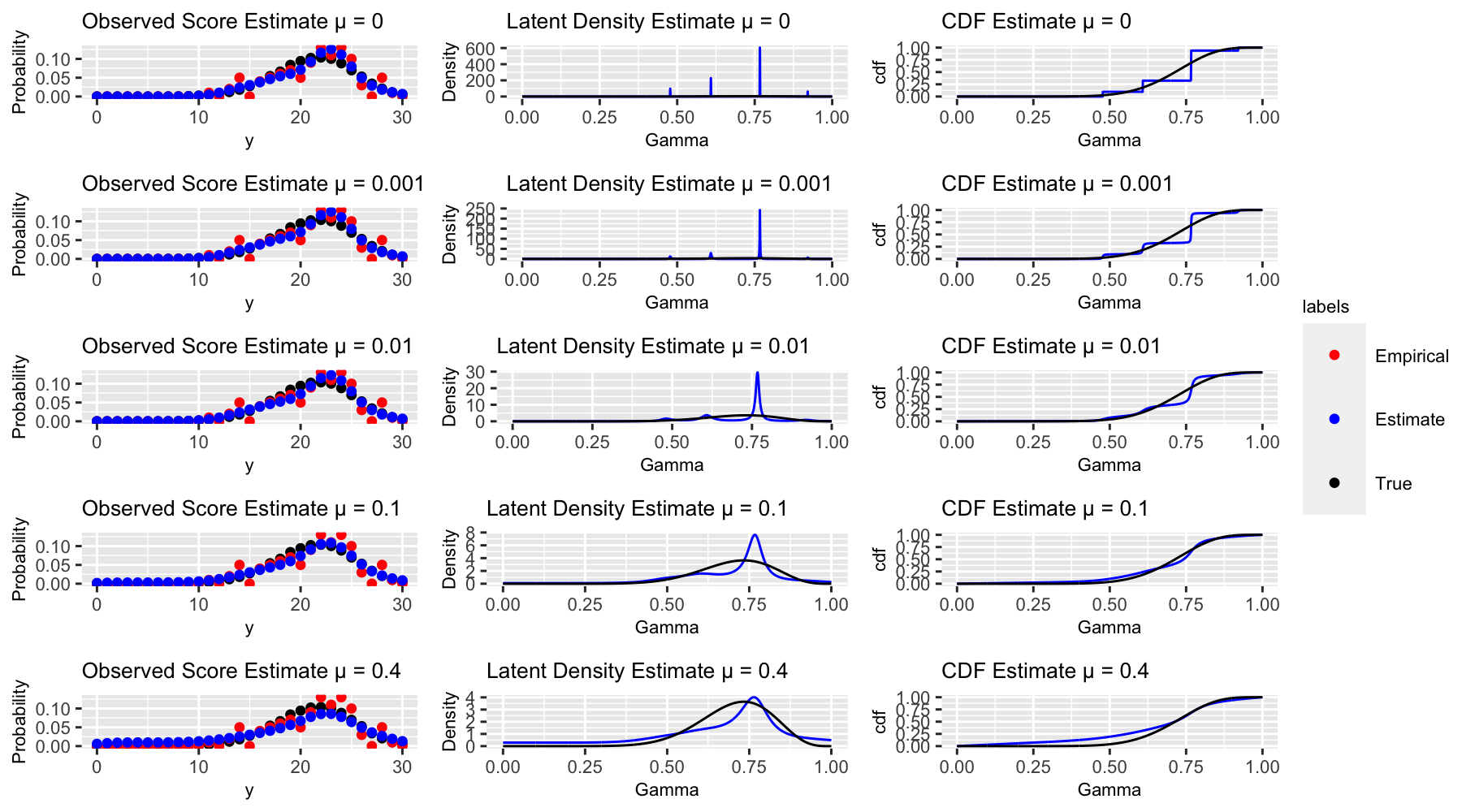}
\caption{Example of the effect of the regularization parameter $\mu$.}
\label{fig:latent_example}
\end{figure}

As we see, in Figure \ref{fig:latent_example} the case of no regularization, when the empirical distribution $\hat p$ is outside of $\text{conv}(\Gamma)$, the NPMLE consists only of point masses, as the closest point $\hat p_{MA}$ exists on the boundary of $\text{conv}(\Gamma)$. However, as we increase $\mu$, the latent distribution becomes smoother in the neighbourhood of each of these masses.  

\section{Additional analysis on the NACC data} \label{supp:sec:nacc_analysis}
In this section we apply the selection methods for $p_A$ and $\mu$ to the NACC dataset. 

As outlined in the main paper we first plot the intrinsic variability $\norm{\hat q - \hat q_{(A)}}_{TV}$  as a function of the measurement assumption model for both the MOCA and the MMSE scores.   From figures \ref{fig:MMSE_intrinsic},\ref{fig:MOCA_intrinsic}
the we pick the binomial model for the MMSE and the Laplace $h = 1.34$ model for the MOCA model, though we also note the Gaussian and binomial models achieved very similar distance to the intrinsic variability distributions (for particular choices of $h$).  We find each of the selected models have p-values of 1 in the first and second order feasibility tests. 

\begin{figure}[htb!]
    \centering
    \includegraphics[height = 0.33\textheight]{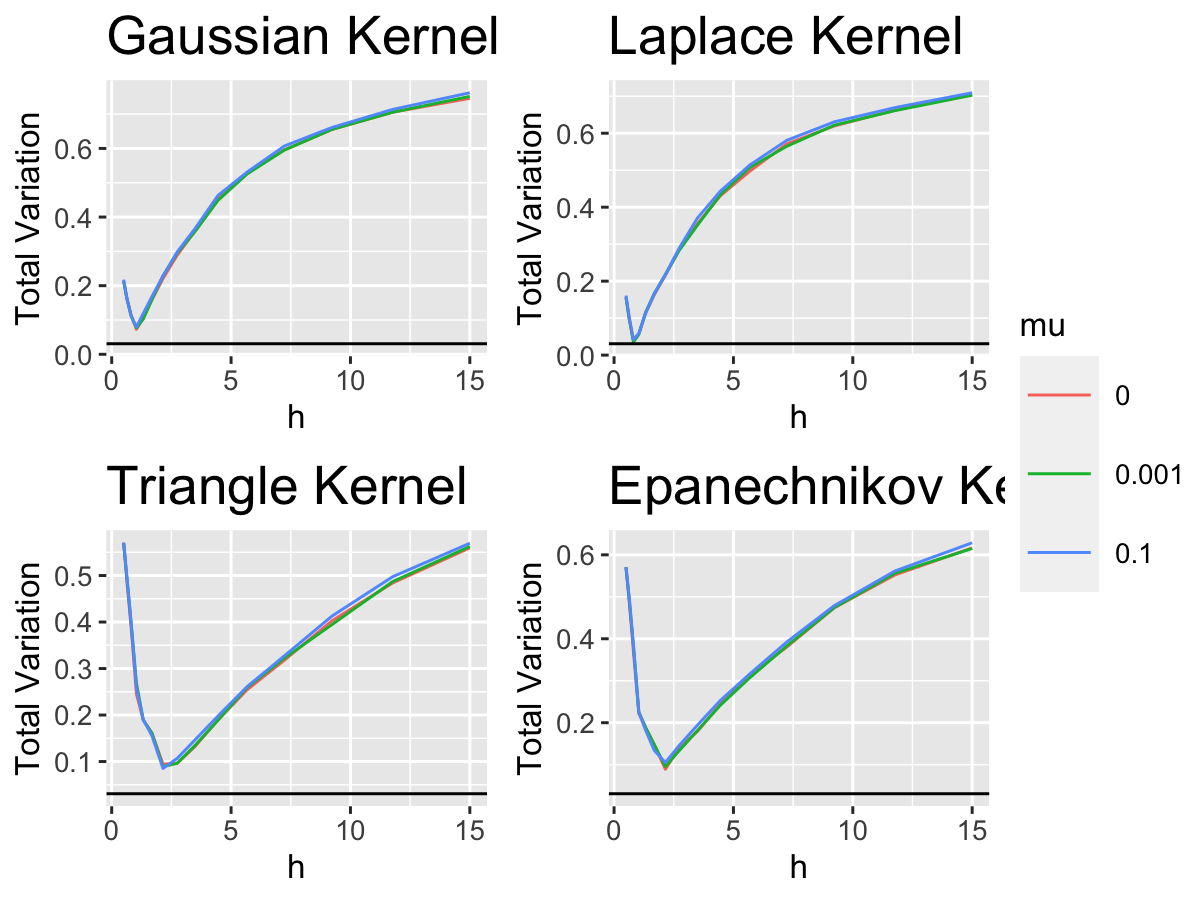}
    \caption{Intrinsic Variability of MMSE.  The binomial model for the measurement assumption is shown in black.}
    \label{fig:MMSE_intrinsic}
\end{figure}

\begin{figure}[htb!]
    \centering
    \includegraphics[height = 0.33\textheight]{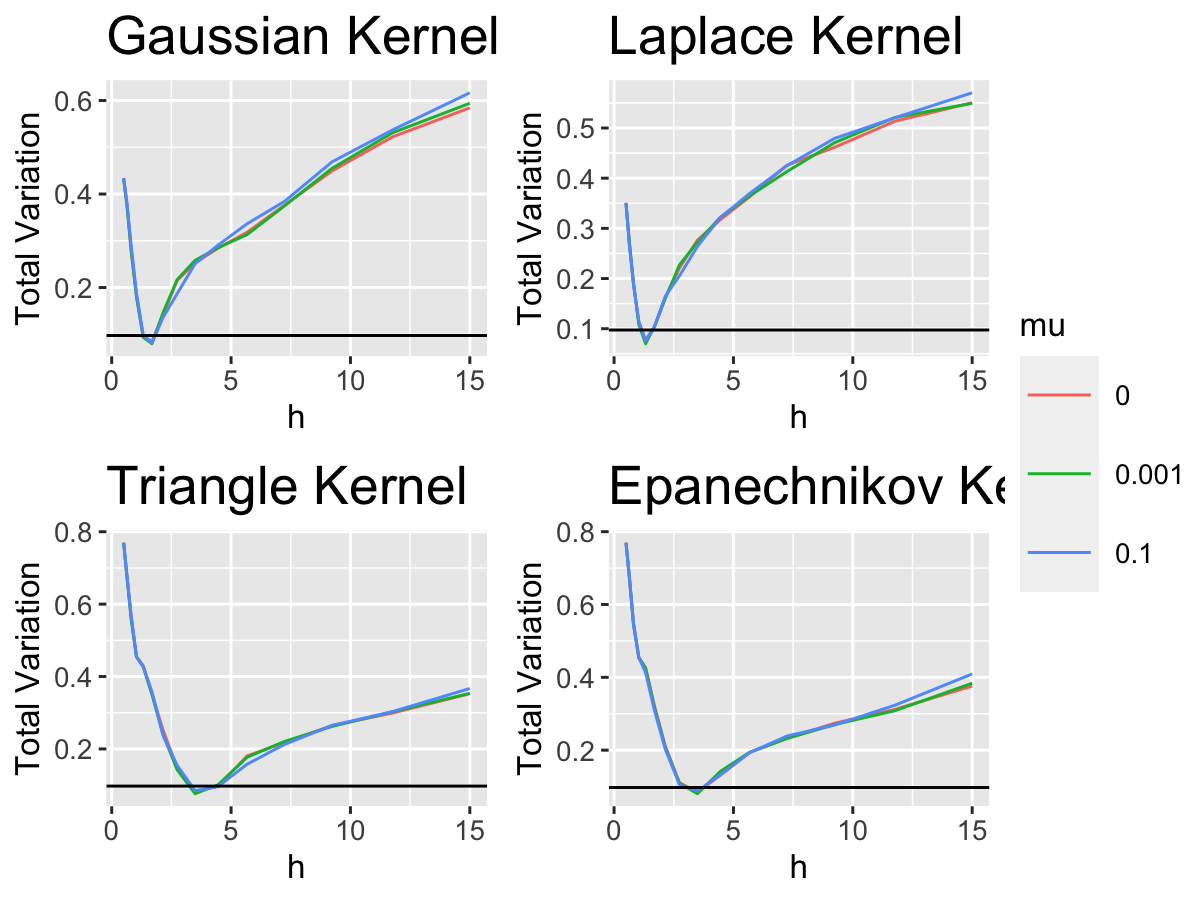}
    \caption{Intrinsic Variability of MOCA.  The Laplace $h = 1.34$ model is optimal. The binomial model for the measurement assumption is shown in black.}
    \label{fig:MOCA_intrinsic}
\end{figure}

\newpage
Next as outlined in the main paper as we choose the regularization parameters for each branch of the model, $\mu_Y$ and $\mu_Z$, corresponding to the MMSE and MOCA latent distributions.  These are chosen using the procedure outlined in the main paper, here we plot the 2-observation likelihood as a function of the regularization parameter. 

From our procedure in figures \ref{fig:MMSE_mu},\ref{fig:MOCA_mu} we select $\mu_Y = 0.0055, \mu_Z = 0.0113$ for our model.

\begin{figure}[htb!]
    \centering
    \includegraphics[height = 0.33\textheight]{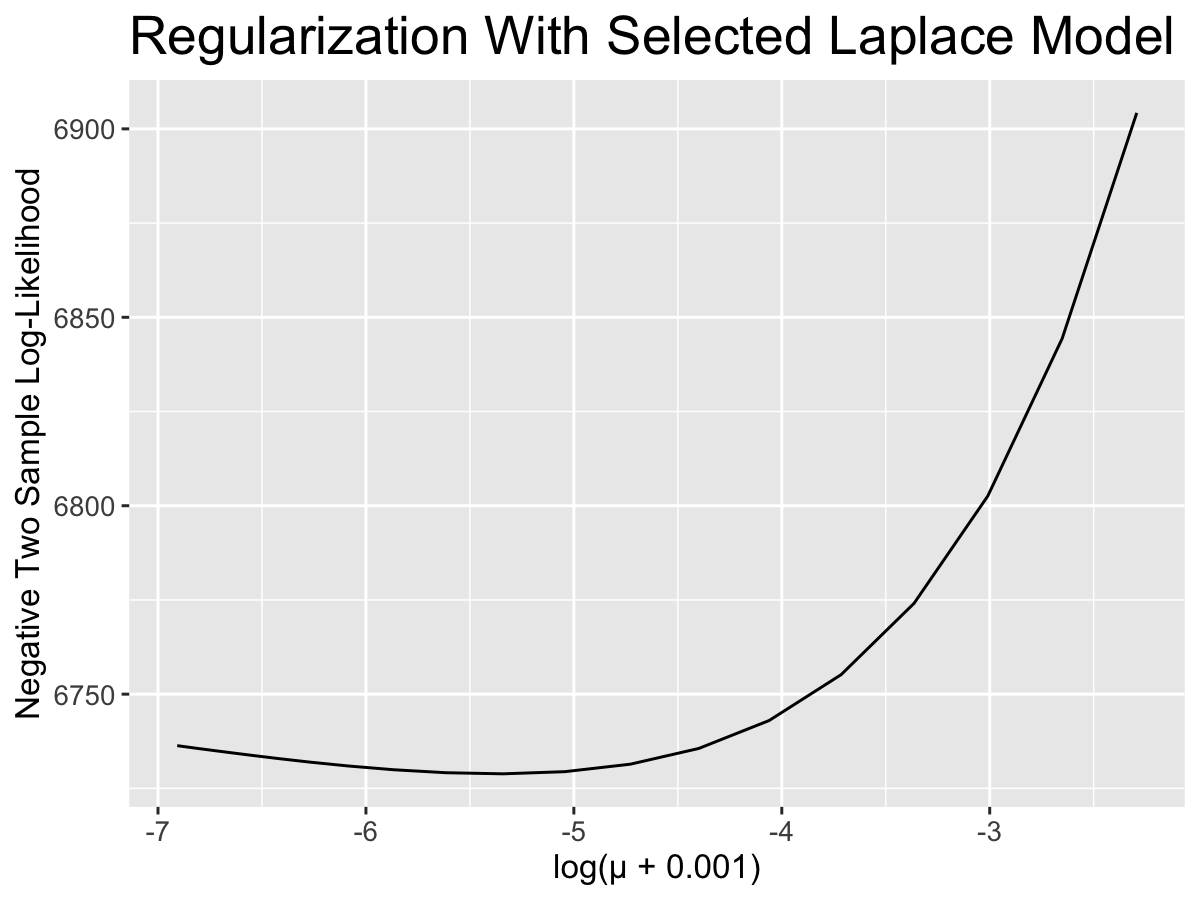}
    \caption{Regularization selection for MMSE model.  $\log$ in the plot is the natural logarithm.  Optimal selected $\mu_Y = 0.0055$. }
    \label{fig:MMSE_mu}
\end{figure}

\begin{figure}[htb!]
    \centering
    \includegraphics[height = 0.33\textheight]{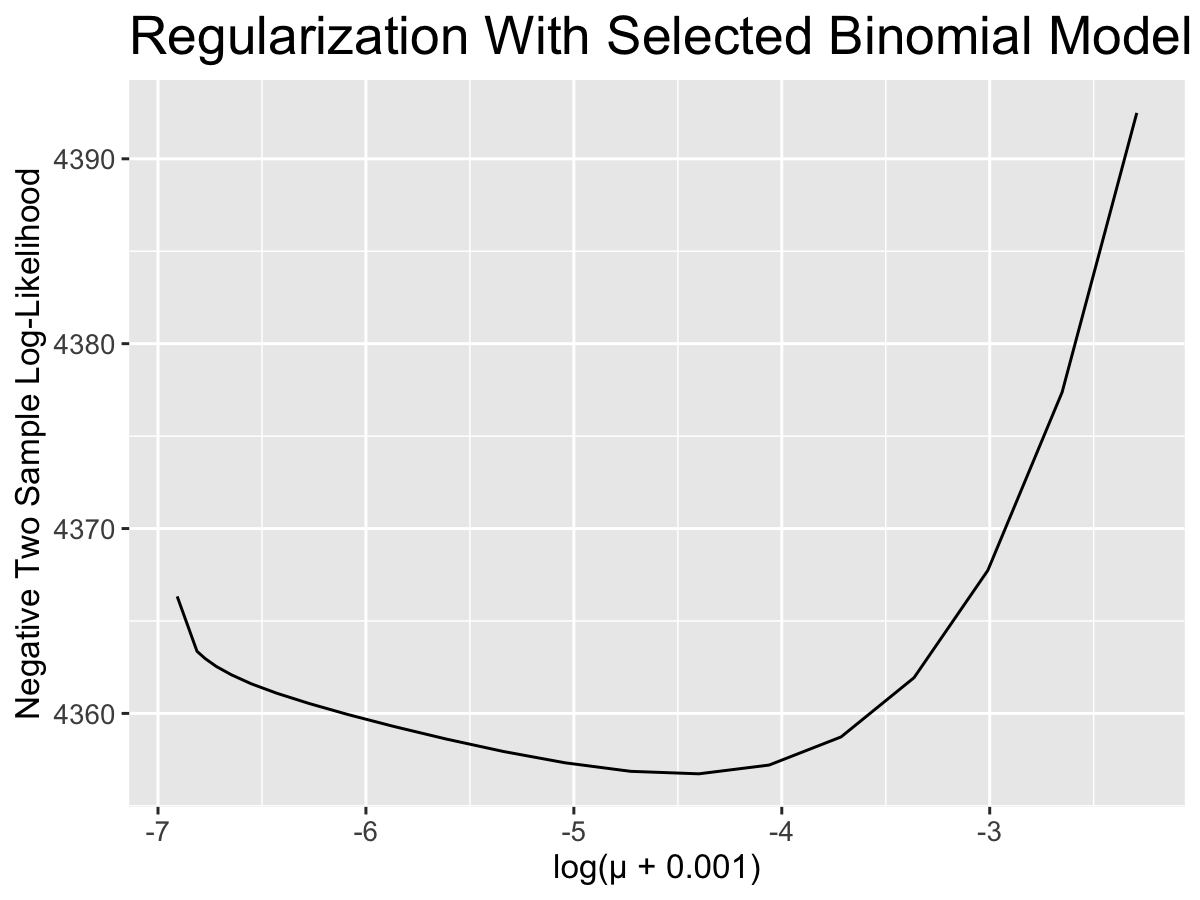}
    \caption{Regularization selection for MOCA model.  Optimal selected $\mu_Z = 0.0113$.}
    \label{fig:MOCA_mu}
\end{figure}

Lastly, we computed the cross entropy on the crosswalk (fully observed) dataset.  Once again, we fix the selected measurement model $p_A$ on $Y$ and plot the cross entropy as a function of the conditional model or alternative method. The parametric model is the logit-normal model introduced in section \ref{sec:par_model}, the Z-score matching method is a currently used method for score conversion introduced in the main paper. As seen in the main paper, the overall optimal model is shown with the red x and the model we select without viewing any complete cases is shown in blue. We can observe that the smoothed non-parametric tends to produce the best conversions.  Conversions using a compact support kernel can perform poorly if they don't assign any probability to an observed value, however, we see that this can be partially combated through regularization.  The parametric model tends to also perform well, though not under the selected $p_A$.  This may be partially explained by the tendency of variational inference methods to underestimate variances \cite{Bishop2004PatternBishop}.  Noticing the improved performance of the parametric model under a larger $h$ seems to support this idea. 
\begin{figure}[htb!]
    \centering
    \includegraphics[height = 0.5\textheight]{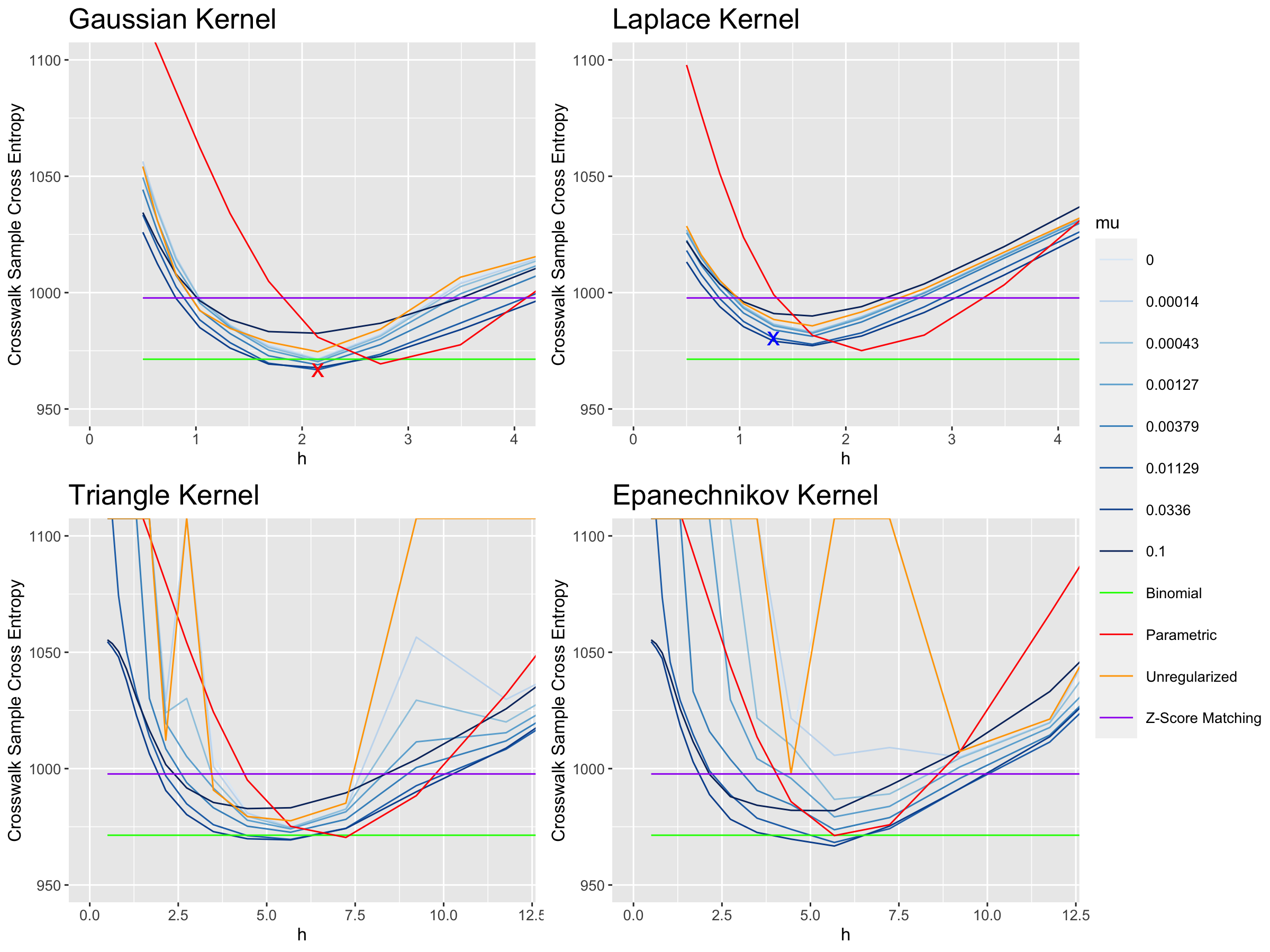}
    \caption{Additional conversion cross entropy plots for other kernel and bandwith combinations for converting MMSE to MOCA.  Red x denotes the lowest cross entropy while blue x denotes the model selected by our model. Note the differences in x axes. }
    \label{fig:conversion}
\end{figure}

\end{document}